\newtheorem{thm}{Theorem}[section]
\newtheorem{lemma}[thm]{Lemma}
\newtheorem{proposition}[thm]{Proposition}
\newtheorem{assum}[thm]{Assumption}
\newtheorem{fact}[thm]{Fact}
\theoremstyle{definition}
\newtheorem{definition}[thm]{Definition}
\newcommand\numberthis{\addtocounter{equation}{1}\tag{\theequation}}
\algnewcommand{\LineComment}[1]{\State \(\triangleright\) #1}
\DeclareMathOperator*{\argmax}{arg\,max}
\newcommand{\norm}[1]{\left\lVert#1\right\rVert}
\newcommand{\bl}{{\boldsymbol\ell^*}}
\newcommand{\bu}{\mathbf{u}}
\newcommand{\bv}{\mathbf{v}}
\newcommand{\bw}{\mathbf{w}}
\newcommand{\bb}{\mathbf{b}}
\newcommand{\ba}{\mathbf{a}}
\newcommand{\expl}{\mathrm{Exploit}}
\newcommand{\bc}{\mathbf{c}}
\newcommand{\by}{\mathbf{y}}
\newcommand{\bz}{\mathbf{z}}
\newcommand{\bq}{\mathbf{q}}
\newcommand{\bx}{\mathbf{x}}
\newcommand{\bs}{\mathbf{s}}
\newcommand{\bA}{\mathbf{A}}
\newcommand{\bM}{\mathbf{M}}
\newcommand{\bX}{\mathbf{X}}
\newcommand{\bY}{\mathbf{Y}}
\newcommand{\bZ}{\mathbf{Z}}
\newcommand{\bW}{\mathbf{W}}
\newcommand{\cA}{\mathcal{A}}
\newcommand{\bbE}{\mathbb{E}}
\newcommand{\bbR}{\mathbb{R}}
\newcommand{\cpd}{c_d}
\newcommand{\epd}{\epsilon_d}
\newcommand{\csg}{K}
\newcommand{\poly}{\text{poly}}
\newcommand{\Cov}{\text{Cov}}
\newcommand{\Var}{\text{Var}}
\newcommand{\Vol}{\text{Vol}}
\newcommand{\lt}{\left}
\newcommand{\rt}{\right}
\renewcommand{\P}{\mathbb{P}} 
\DeclareMathOperator{\E}{\mathbb{E}}
\newcommand{\indep}{\perp \!\!\! \perp}
\DeclareMathOperator*{\cvar}{c_{\mathrm{v}}}
\title{Geometry Meets Incentives: Sample-Efficient Incentivized Exploration with Linear Contexts}
\author{
	Benjamin Schiffer\thanks{Department of Statistics, Harvard University | \emph{E-mail}: \href{mailto:bschiffer1@g.harvard.edu}{bschiffer1@g.harvard.edu}.}
	\and
	Mark Sellke\thanks{Department of Statistics, Harvard University | \emph{E-mail}: \href{mailto:szhang2@g.harvard.edu}{msellke@fas.harvard.edu}.}
}
\begin{document} 

\begin{titlepage}
\maketitle

\setcounter{page}{0}
\thispagestyle{empty}

\begin{abstract}
    In the incentivized exploration model, a principal aims to explore and learn over time by interacting with a sequence of self-interested agents.
    It has been recently understood that the main challenge in designing incentive-compatible algorithms for this problem is to gather a moderate amount of initial data, after which one can obtain near-optimal regret via posterior sampling.
    With high-dimensional contexts, however, this \emph{initial exploration} phase requires exponential sample complexity in some cases, which prevents efficient learning unless initial data can be acquired exogenously.
    We show that these barriers to exploration disappear under mild geometric conditions on the set of available actions, in which case incentive-compatibility does not preclude regret-optimality.
    Namely, we consider the linear bandit model with actions in the Euclidean unit ball, and give an incentive-compatible exploration algorithm with sample complexity that scales polynomially with the dimension and other parameters.
\end{abstract}

\end{titlepage}

\section{Introduction}

The exploration/exploitation trade-off is fundamental to online decision making. 
This trade-off is classically exemplified by the multi-armed bandit problem, where a single agent chooses actions sequentially and learns to improve over time.
In this setting, the agent has clear justification for early exploration because they can reap future rewards by exploiting the knowledge gained by exploration.
But what if agents are unable to reap these future rewards, and so each decision must be justifiable on its own terms without taking into account future rewards? 
This may occur when actions are \emph{recommendations} made to different agents by a central platform based on user feedback, as in e-commerce, traffic routing, movies, restaurants, etc. In these settings, the agents may have a prior for the rewards of the actions in addition to the recommendation made by the platform.
Therefore, while the platform makes recommendations with the goal of learning over time, individual myopic agents will decline to follow any recommendations which seem suboptimal based on their individual priors.

The \emph{incentivized exploration} problem was introduced in \cite{kremer2014implementing,che2018recommender} to understand this fundamental tension, and extends the well-studied problem of \emph{Bayesian Persuasion} in information design \cite{bergemann2019information,kamenica2019bayesian}.
The model adopted in these early works consists of a finite set of actions, each with a (publicly shared) Bayesian prior distribution for rewards.
A sequence of agents arrives one by one, and each agent is recommended an action by a central planner.
The central planner's recommendations are made using a (publicly known) randomized algorithm, and the agents are assumed to be selfish and rational, aiming only to maximize their own expected reward.
Given the planner's recommendation, each agent computes and chooses the posterior-optimal action (using Bayes' rule).
As observed in the original works, thanks to the revelation principle of \cite{myerson1986multistage}, the latter step is equivalent to assuming that the planner's recommendations are \emph{Bayesian incentive-compatible} (BIC) so that rational agents will always follow the recommendations (at least under the assumption of agent homogeneity).

Initially, most work on incentivized exploration dealt with small finite sets of actions \cite{ICexploration-ec15,ICexplorationGames-ec16}, exploring economic aspects of the problem such as exogenous payments for exploration \cite{frazier2014incentivizing,kannan2017fairness,wang2018multi,agrawal2020incentivising,wang2021incentivizing}, partial data disclosure \cite{immorlica2020incentivizing}, and agent heterogeneity \cite{immorlica2019bayesian}.
See also the surveys \cite[Chapter 11]{slivkins-MABbook} and \cite[Chapter 31]{immorlica2023online}.
More recently, extensions to combinatorial action sets, multi-stage reinforcement learning, and linear contexts were also considered in \cite{hu2022incentivizing,simchowitz2024exploration,sellke2023incentivizing}.
In these more complex machine learning settings, a fundamental question is how the regret scales with problem parameters such as the size of the action space.

This quantitative dependence was studied in \cite{sellke2021price}, which provided a new two-stage algorithm.
The algorithm first obtains a constant number of samples from each arm, and then switches permanently to Thompson sampling.
The initial stage enjoys sample-efficiency thanks to a carefully tuned ``exponential exploration'' strategy, and Thompson sampling is BIC after this constant amount of initial exploration (see \cite{gur2024incentivized} for a more general perspective on the latter property).
This yielded polynomial regret dependence on the number of actions and other natural parameters, and ensured the ``price of incentivization'' in the regret is only additive relative to Thompson sampling, which is known to exhibit near-optimal performance \cite{kaufmann2012thompson,bubeck2013prior,Shipra-aistats13,Russo-MathOR-14,Russo-JMLR-16,zimmert2019connections,bubeck2022first}.
Importantly, however, all of these results require that the rewards for the actions are independent under the prior.

A natural setting to consider dependent actions is the linear bandit model, where Thompson sampling remains a gold-standard algorithm \cite{Shipra-icml13,dong2018information}.
For this setting, \cite{sellke2023incentivizing} showed that Thompson sampling is again BIC after an initial data collection stage under mild conditions, but that the sample complexity of \emph{collecting} initial data can scale exponentially with the dimension.
In some applications, using exogenous payments for initial exploration can bypass this exponential barrier, but such workarounds are contingent on problem-specific regulatory and ethical constraints.
In our work, we show that the exponential barrier disappears when the action set is the $d$-dimensional unit ball, and we provide an incentive-compatible initial exploration algorithmic with polynomial sample complexity.

\subsection{Our Results}

We consider a linear bandit problem where the set $\cA$ of possible actions is the $d$-dimensional unit ball. At each time step $t\geq 1$, the algorithm chooses an action $\bA^{(t)} \in \cA$ and observes a noisy reward 
    \[
        r^{(t)} = \langle \bA^{(t)},  \bl \rangle + w_t,
    \]
    where $w_t \sim N(0,1)$ and $w_{t_1} \indep w_{t_2}$ for all $t_1 \ne t_2$. We assume that $\bl$ is drawn from a known prior distribution $\mu$ on $\mathbb{R}^d$.
    As our model and algorithm will be rotationally invariant, we assume for convenience (without loss of generality) that $\E[\bl_i] = 0$ for all $i > 1$ and $\E[\bl_1] \ge 0$. 
    Importantly, unlike in \cite{sellke2021price}, we do \emph{not} require that $\bl_i$ is independent of $\bl_j$ for $i \ne j$.
    
We are interested in exploration via Bayesian Incentive Compatible (BIC) algorithms, defined formally as follows. As discussed above, an algorithm being BIC implies that agents will follow the recommendations made by a platform using that algorithm.

\begin{definition}\label{def:bic}
    An action $\bA^{(t)}$ at time $t\geq 1$ is Bayesian Incentive Compatible (BIC) if for all $A\in\mathcal A$:
    \[
        \E[\langle \bl, A \rangle \mid \bA^{(t)} = A] 
        =
        \sup_{A' \in \mathcal{A}} 
        \E[ \langle \bl, A'\rangle \mid \bA^{(t)} = A].
    \]
    Similarly, a bandit algorithm is BIC if for all $t \in [1,T]$ the algorithm recommends a BIC action $\bA^{(t)}$.
\end{definition}

Our main goal is to develop BIC algorithms that explore the entire action space using $\poly(d)$ samples. Formally, we want our algorithm to choose actions $\bA^{(1)},...,\bA^{(T)}$ such that for some constant $\lambda > 0$
\begin{equation}
\label{eq:spectral-exploration}
    \sum_{t=1}^{T} (\bA^{(t)})^{\otimes 2} \succeq \lambda \textbf{I},
\end{equation}
i.e. all eigenvalues of $\sum_{t=1}^{T} (\bA^{(t)})^{\otimes 2}$ are at least $\lambda$. 
This was termed $\lambda$-spectral exploration in \cite{sellke2023incentivizing}, where it was shown to imply 
Thompson Sampling is BIC after time $T$ under mild conditions.

For some priors, it is impossible to explore the action space with a BIC algorithm (see e.g.\ \cite[Section 4]{ICexploration-ec15} or \cite[Section 8]{sellke2021price}).
Thus we will require mild non-degeneracy assumptions on the prior.
Roughly speaking, $\bl$ should not be confined to any half-space and should have neither minuscule nor enormous fluctuations in any direction.

\begin{assum}\label{assum:all}
There exist constants $\cpd, \epd, \cvar, \csg > 0$ such that:
    \begin{enumerate}
      \item $\bl$ is not confined to any half-space: $\min_{\norm{\bv} = 1} \Pr\left(\langle \bv,\bl \rangle \ge \cpd \right) \ge \epd$.
      \label{it:pos_bound}
      \item $\bl$ has non-degenerate covariance: $\min_{\norm{\bv} = 1} \mathrm{Var}( \langle \bv,\bl \rangle) \ge \cvar$.
      \label{it:var_bound}
      \item $\bl$ is sub-gaussian: $\max_{\norm{\bv} = 1} \P(|\langle \bv, \bl \rangle| \ge t) \le 2e^{-t^2/\csg^2}$ for all $t>0$.
      \label{it:subgauss}.
    \end{enumerate}
\end{assum}

The first two conditions above are both necessary in some form; see Appendix~\ref{sec:assumption_justification}.
The third is a standard condition on the fluctuations of $\mu$.
Our main result is as follows.

\begin{thm}\label{thm:number_of_steps_introduction}
   Under Assumption~\ref{assum:all}, there exists a BIC algorithm (Algorithm \ref{alg:bic_exploration_pseudocode}) which almost surely achieves $\bar\lambda$-spectral exploration in sample complexity
   \begin{equation}
    \label{eq:sample-complexity-bound}
    \bar\lambda \lt(\frac{d}{\cvar + \cpd }\rt)^{O(1)}\log(1/\epd).
   \end{equation}
\end{thm}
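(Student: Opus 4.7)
The plan is to build spectral mass in new directions iteratively, one orthogonal direction at a time, exploiting the fact that on the unit ball we can take arbitrarily small perturbations of an exploitative action while remaining in $\cA$. In Phase $0$, by the WLOG assumption that $\E[\bl_1]\ge 0$ and $\E[\bl_i]=0$ for $i>1$ together with Assumption~\ref{assum:all}~(\ref{it:pos_bound}), the prior-optimal action on the ball is $e_1$; the algorithm plays $e_1$ for $T_0$ rounds, which is trivially BIC and, by sub-gaussianity, drives the posterior variance along $e_1$ to $\tilde O(1/T_0)$. Subsequent phases $k=1,\dots,d-1$ then each add a new direction $\bv_{k+1}$ orthogonal to $V_k:=\mathrm{span}\{\bv_1,\dots,\bv_k\}$ (the directions already explored).

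In phase $k$, at each round I flip a biased coin: with probability $1-p$ recommend the exploit action $\bu_t$ (the normalized posterior mean restricted to $V_k$); with probability $p$ sample $\hat\bl\sim\mathrm{Post}(D_t)$ and recommend
\[
A \;=\; \alpha\,\mathrm{sign}(\langle\hat\bl,\bv_{k+1}\rangle)\,\bv_{k+1} \;+\; \sqrt{1-\alpha^2}\,\bu_t.
\]
BIC for $A$ comes from exchangeability of $\bl$ and $\hat\bl$ given $D_t$: conditioning on a prescribed sign of $\langle\hat\bl,\bv_{k+1}\rangle$ biases $\langle\bl,\bv_{k+1}\rangle$ in the same direction, and Assumption~\ref{assum:all}~(\ref{it:pos_bound}) applied to $\bv_{k+1}$ forces this bias to have magnitude $\Omega(\cpd\epd)$. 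Choosing $\alpha=\Theta(\cpd\epd)$ small enough that the $O(\alpha^2)$ loss in the exploit component is dominated by the $\Omega(\alpha\cpd\epd)$ gain along $\bv_{k+1}$, and tuning $p$ against the posterior concentration already achieved on $V_k$, makes $A$ the posterior-maximizing action on the entire ball conditional on its recommendation.

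Each phase $k$ contributes roughly $p\alpha^2 T_k$ to the eigenvalue along $\bv_{k+1}$, so the tuning above together with Assumption~\ref{assum:all}~(\ref{it:var_bound})--(\ref{it:subgauss}) yields $T_k=\bar\lambda(d/(\cvar+\cpd))^{O(1)}\log(1/\epd)$; summing over the $d$ phases gives \eqref{eq:sample-complexity-bound}, absorbing the factor $d$ into the $O(1)$ exponent. The main obstacle will be the BIC analysis in each phase: I must verify that conditional on recommending $A$, its posterior expected reward exceeds that of \emph{every} action in $\cA$, not just the exploit $\bu_t$ or its reflection. The exchangeability-plus-non-degeneracy argument must be quantitative enough to beat the worst-case posterior variance of $\langle\bl,\bw\rangle$ for unit vectors $\bw\in V_k$ (controlled by (\ref{it:var_bound})--(\ref{it:subgauss}) together with the concentration accumulated in earlier phases), and must remain robust as phases compose. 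A secondary subtlety is that since $\mathrm{sign}(\langle\hat\bl,\bv_{k+1}\rangle)$ is a nonlinear functional of the posterior, I must argue this conditioning does not spuriously bias $\langle\bl,\bw\rangle$ for directions $\bw$ targeted in later phases, which would leak into the BIC budget downstream.
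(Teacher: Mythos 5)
There are two genuine gaps here, and the second is fatal to the stated bound. First, exact BIC on the unit ball is much more rigid than your argument allows: given that the recommendation is $A$, the posterior-optimal action is \emph{exactly} $\E[\bl \mid \bA^{(t)}=A]/\norm{\E[\bl \mid \bA^{(t)}=A]}$, so a recommendation with a \emph{predetermined} mixing weight $\alpha=\Theta(\cpd\epd)$ is BIC only if it happens to coincide with that normalized conditional mean. Your ``$O(\alpha^2)$ loss vs.\ $\Omega(\alpha\cpd\epd)$ gain'' tuning can at best show $A$ beats $\bu_t$ (an $\epsilon$-BIC-type comparison), not that it beats every point of $\cA$. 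The paper avoids this by \emph{defining} the action as the normalized conditional mean given a constructed signal (Lemma~\ref{lemma:bic_sphere}) and then doing the work of lower-bounding its component in $S^{\perp}$; moreover your claimed $\Omega(\cpd\epd)$ bias from conditioning on $\mathrm{sign}(\langle\hat\bl,\bv_{k+1}\rangle)$ is asserted rather than proved, and it does not follow directly from Assumption~\ref{assum:all}(\ref{it:pos_bound}) — the paper needs the zero-mean-on-$S$ event built via Lemma~\ref{lemma:function_f} together with Lemma~\ref{lemma:explore_with_small_probability} to get a quantitative bound (which also ensures the conditional mean actually points into $S^\perp$ rather than being swamped by its $S$-component after normalization).

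Second, even granting BIC, your scheme cannot achieve the claimed sample complexity: playing actions whose $S^{\perp}$-component has amplitude $\alpha=\Theta(\cpd\epd)$ accumulates spectral mass in the new direction at rate at most $\alpha^2$ per round, so reaching eigenvalue $\bar\lambda$ takes $\Omega\bigl(\bar\lambda/(p\,\cpd^2\epd^2)\bigr)$ rounds — polynomial in $1/\epd$, hence exponential in $d$ in the motivating examples (e.g.\ $\epd=(r/3)^d$ for $r$-regular priors). This is precisely the barrier the theorem is about; the $\log(1/\epd)$ dependence requires the paper's ExponentialGrowth bootstrapping (Lemmas~\ref{lemma:small_weight_cond_exp_newest} and~\ref{lemma:small_change_in_other_weights}): use the noisy feedback from the weak action to build a signal conditional on which the $S^{\perp}$-component of the normalized posterior mean \emph{doubles}, iterate $O(\log(1/\epd))$ times at $\poly(d)$ cost each, and only then spend $\poly(d)$ rounds on an action with $\Omega(\sqrt{\lambda})$ orthogonal component. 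Finally, your eigenvalue accounting (``each phase adds $p\alpha^2 T_k$ to the eigenvalue along $\bv_{k+1}$, and $d$ phases suffice'') is not sound for an almost-sure guarantee: the sign realizations within a phase can make the played actions nearly collinear, and the paper's example in Section~\ref{sec:increasing_eigenvalues} shows that adding one new direction per round with non-negligible orthogonal component does not by itself keep the minimum eigenvalue polynomially large — this is why the paper proves and uses the rank-one-update Lemma~\ref{lemma:next_largest_eigenvalue} and allows more than $d$ outer iterations.
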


Note that \eqref{eq:sample-complexity-bound} depends polynomially on $\cpd, \cvar$ but only logarithmically on $\epd$.
This is important because in typical high-dimensional settings, $\epd$ will be exponentially small in the dimension while the other parameters will not.
The next two propositions give illustrative but not exhaustive examples of such high-dimensional distributions.
In the first, we take $\mu$ to be uniform on a convex body $\mathcal K$ with $B_r(0)\subseteq \mathcal K\subseteq B_1(0)$ for some $0<r<1$, and say such $\mu$ is $r$-regular.
This is the main setting considered in \cite{sellke2023incentivizing}, and as explained in Section~\ref{sec:discussion}, combining \cite{sellke2023incentivizing} with our results yields an end-to-end low-regret algorithm which is $\epsilon$-BIC (i.e. with $\epsilon$ subtracted from the right-hand side in Definition~\ref{def:bic}) with initial sample complexity $\poly(d,1/r,1/\epsilon)$.
(The combination only satisfies $\epsilon$-BIC because \cite{sellke2023incentivizing} only shows Thompson sampling is $\epsilon$-BIC unless actions are well-separated.)

\begin{proposition}[Proof in Appendix \ref{proof:prop:r-regular}]
\label{prop:r-regular}
Any $r$-regular $\mu$ satisfies Assumption~\ref{assum:all} with $\cpd=r/3$ and $\epd=(r/3)^d$ and $\cvar=\Omega(r^2/d^2)$ and $\csg=1.25$.
\end{proposition}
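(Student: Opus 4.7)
My plan is to verify the three conditions of Assumption~\ref{assum:all} separately. Condition~3 (sub-gaussianity) is essentially trivial: since $\mathcal K\subseteq B_1(0)$ we have $|\langle \bv,\bl\rangle|\le 1$ almost surely, and with $\csg=1.25$ one has $2e^{-t^2/\csg^2}\ge 2e^{-1/1.5625}>1$ for every $t\in[0,1]$, so the claimed tail bound is weaker than the trivial probability bound of $1$. For condition~1, fix a unit $\bv$ and consider the shifted ball $B_{r/3}(\tfrac{2r}{3}\bv)$: by the triangle inequality it lies in $B_r(0)\subseteq \mathcal K$, while every point in it has inner product at least $\tfrac{2r}{3}-\tfrac{r}{3}=\tfrac{r}{3}$ with $\bv$. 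Combined with $\Vol(\mathcal K)\le\Vol(B_1)$ this yields $\P(\langle \bv,\bl\rangle\ge r/3)\ge \Vol(B_{r/3})/\Vol(B_1)=(r/3)^d$, so $\cpd=r/3$ and $\epd=(r/3)^d$.

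The main step is the variance bound in condition~2. Fix a unit $\bv$ and let $f(t)=\Vol_{d-1}(\mathcal K\cap\{\langle \bv,x\rangle=t\})$. By Brunn--Minkowski (slab form), $f^{1/(d-1)}$ is concave on its support $[a,b]$ with $f(a)=f(b)=0$, and $B_r\subseteq\mathcal K$ forces $[-r,r]\subseteq[a,b]$, so $b-a\ge 2r$. Let $t_0$ maximize $f$ with $f_{\max}=f(t_0)$. Concavity of $f^{1/(d-1)}$ and its vanishing at the endpoints let me lower-bound $f(t)\ge f_{\max}\bigl((t-a)/(t_0-a)\bigr)^{d-1}$ on $[a,t_0]$ and analogously on $[t_0,b]$. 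Integrating these two triangular lower bounds,
\[
\Vol(\mathcal K)\;=\;\int_a^b f(t)\,dt\;\ge\;\frac{(b-a)\,f_{\max}}{d}\;\ge\;\frac{2r\,f_{\max}}{d},
\]
so the density $p(t)=f(t)/\Vol(\mathcal K)$ of $\langle \bv,\bl\rangle$ satisfies $\|p\|_\infty\le d/(2r)$.

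To finish, I invoke the classical one-dimensional inequality $\Var(X)\cdot\|p\|_\infty^2\ge 1/12$ valid for any log-concave probability density on $\mathbb R$ (sharp at the uniform distribution). Log-concavity of $p$ follows from Prékopa--Leindler applied to the uniform measure on the convex body $\mathcal K$. This gives
\[
\Var(\langle \bv,\bl\rangle)\;\ge\;\frac{1}{12\|p\|_\infty^2}\;\ge\;\frac{r^2}{3d^2},
\]
yielding $\cvar=\Omega(r^2/d^2)$ as required. The main obstacle is condition~2: one must extract $(d-1)$-concavity of cross-sections via Brunn--Minkowski to bound the peak of the one-dimensional marginal density by $d/(2r)$, and then combine this with the universal peak--variance inequality for log-concave densities. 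The other two conditions reduce to simple volume comparisons and a direct check of the tail inequality at $t=1$.
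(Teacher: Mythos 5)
Your proofs of Conditions 1 and 3 are essentially identical to the paper's: the shifted ball $B_{r/3}(2r\bv/3)\subseteq\mathcal K$ together with $\Vol(\mathcal K)\le\Vol(B_1)$ gives $(\cpd,\epd)=(r/3,(r/3)^d)$, and the observation that $2e^{-(t/1.25)^2}\ge 1$ for $|t|\le 1$ handles sub-gaussianity. For Condition 2 you take a genuinely different route. The paper dispatches the variance bound in one line by citing \cite[Lemma 3.2]{sellke2023incentivizing} together with Jensen's inequality; you instead make the argument self-contained. You derive the marginal-density bound $\|p\|_\infty\le d/(2r)$ directly from Brunn--Minkowski (concavity of $f^{1/(d-1)}$ for the cross-sectional volume function $f$, integrating the resulting piecewise power-law lower bound over a support of length at least $2r$), and then invoke the classical one-dimensional peak--variance inequality $\sigma^2\|p\|_\infty^2\ge 1/12$ for log-concave densities, with log-concavity of $p$ coming from Pr\'{e}kopa--Leindler. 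This buys elementariness and self-containment at the cost of relying on the peak--variance inequality as a black box; it is a standard fact (Hensley, Ball, Bobkov) valid without any symmetry assumption, but you should cite it rather than call it ``classical.'' Your resulting constant $\cvar\ge r^2/(3d^2)$ meets the stated $\Omega(r^2/d^2)$, and the overall argument is correct.
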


The second example consists of log-concave distributions (e.g. Gaussians). 
We say a density $d\mu(x)\propto e^{-V(x)}dx$ is $\alpha$-log-concave and $\beta$-log-smooth for $0<\alpha\leq\beta$ if 
\[
-\beta I_d\preceq \nabla^2 V(\bx)\preceq -\alpha I_d,\quad\forall\bx\in\bbR^d.
\]

\begin{proposition}[Proof in Appendix \ref{proof:prop:log-concave}]
\label{prop:log-concave}
Let $\mu$ be $\alpha d$-log-concave and $\beta d$-log-smooth with mode $\bx^*$ satisfying $\|\bx^*\|\leq \gamma$.
Then Assumption~\ref{assum:all} holds with $\epd\geq \frac{J e^{-J^2 \beta d/2}\sqrt{\alpha d}}{(1+J^2\beta d)\sqrt{2\pi}}$ for $J=\gamma+\frac{2}{\sqrt{\alpha d}}+\cpd$ and any $\cpd>0$. 
Further, we may take $\cvar=\frac{1}{\beta d}$ and $\csg=2(\gamma+\sqrt{\alpha d}+(\alpha d)^{-1/2})$.
Conversely, there exists such $\mu$ with $\|\bx^*\|=0$ and $(\alpha,\beta)=(1,2)$ in which $\min_{\norm{\bv} = 1} \Pr\left(\langle \bv,\bl \rangle \ge 0 \right) \le e^{-\Omega(d)}$.
\end{proposition}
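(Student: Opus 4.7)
I will prove the forward direction by bounding the three parameters of Assumption~\ref{assum:all} separately. For $\cvar = 1/(\beta d)$, I will invoke the reverse direction of the Brascamp-Lieb variance inequality: since $\nabla^2 V \preceq \beta d I$, one has $\Cov(\bl) \succeq (\beta d)^{-1} I$, so $\Var(\langle \bv, \bl\rangle) \geq 1/(\beta d)$ for every unit $\bv$. For $\csg$, I will use that $\alpha d$-strong log-concavity implies a log-Sobolev inequality with constant $\alpha d$ (Bakry-Emery), and hence by the Herbst argument that every unit-Lipschitz function of $\bl$ is sub-Gaussian with parameter $O(1/\sqrt{\alpha d})$ around its mean; a coarse bound on the mean shift $|\langle \bv, \E[\bl]\rangle|$ in terms of $\gamma$ then gives the (loose) form of $\csg$ stated.

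The anti-concentration bound on $\epd$ is the main work. My plan is to combine a pointwise lower bound on the density of $\mu$ with an upper bound on its normalization. Log-smoothness yields $e^{-V(\bx)} \geq e^{-V(\bx^*)} e^{-(\beta d/2)\|\bx - \bx^*\|^2}$, and strong log-concavity yields $Z \leq e^{-V(\bx^*)} (2\pi/(\alpha d))^{d/2}$; dividing gives a Gaussian-like pointwise lower bound on the density. I then integrate this bound over the half-space $\{\bx : \langle \bv, \bx\rangle \geq \cpd\}$, reducing to a univariate Gaussian tail centered at $\langle \bv, \bx^*\rangle$ with variance $1/(\beta d)$. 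Using $|\langle \bv, \bx^*\rangle| \leq \gamma$ to control the shift and the Mills-ratio inequality $\Phi^c(x) \geq \frac{x}{(1+x^2)\sqrt{2\pi}} e^{-x^2/2}$ to lower bound the Gaussian tail yields the claimed form, with the $2/\sqrt{\alpha d}$ summand in $J$ acting as a buffer so the bound stays non-trivial as $\cpd, \gamma \to 0$.

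For the converse, I will exhibit $\mu$ with density proportional to $\exp\!\left(-\frac{d}{2}\|\bx\|^2 - \frac{d}{2}\sum_{i=1}^d \max(-\bx_i, 0)^2\right)$. A direct calculation shows that $\nabla^2 V$ is diagonal with entries in $\{d, 2d\}$, so $\alpha = 1$, $\beta = 2$, and the unique minimum lies at the origin. Because $V$ is separable, the coordinates of $\bl$ are i.i.d., with per-coordinate mean $\Theta(1/\sqrt{d})$ and variance $\Theta(1/d)$ by explicit 1D integration. Taking $\bv = -\mathbf{1}/\sqrt{d}$ then gives $\E[\langle \bv, \bl\rangle] = -\Theta(1)$ and $\Var(\langle \bv, \bl\rangle) = \Theta(1/d)$, and sub-Gaussian concentration (inherited from coordinate-wise strong log-concavity) yields $\P(\langle \bv, \bl\rangle \geq 0) \leq e^{-\Omega(d)}$.

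The hardest step will be extracting the clean form of the $\epd$ bound: the naive combination above produces the bound with an extra multiplicative factor of $(\alpha/\beta)^{(d-1)/2}$, which is exponentially small when $\alpha \ll \beta$. Recovering the proposition's cleaner form likely requires working directly with the marginal $f_\bv$ of $\mu$ along $\bv$, which inherits $\alpha d$-strong log-concavity via a Prekopa-Leindler decomposition $V = (\alpha d/2)\|\bx\|^2 + \widetilde V$ and $\beta d$-log-smoothness via the identity $-(\log f_\bv)''(s) = \E[\partial_\bv^2 V \mid \langle \bv, \bl\rangle = s] - \Var[\partial_\bv V \mid \langle \bv, \bl\rangle = s] \leq \beta d$; one can then combine $f_\bv(\mathrm{mode}) \geq \sqrt{\alpha d/(2\pi)}$ with log-smoothness of $f_\bv$ to Gaussian-lower-bound $f_\bv$ on the line and integrate.
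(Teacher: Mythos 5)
Your proposal is correct and follows essentially the same route as the paper: the covariance lower bound $\Cov(\mu)\succeq (\beta d)^{-1}I$ for $\cvar$, Bakry--Emery sub-Gaussianity plus a mean-shift bound for $\csg$, a mirror-image of the paper's separable half-Gaussian construction for the converse, and — after correctly rejecting the naive pointwise-density argument because of the $(\alpha/\beta)^{(d-1)/2}$ loss — the same one-dimensional marginal argument for $\epd$, using exactly the conditional-variance identity the paper uses to show the marginal is $\alpha d$-log-concave and $\beta d$-log-smooth, then lower-bounding its density at the mode by $\sqrt{\alpha d/(2\pi)}$ and integrating the Gaussian tail. The one step you leave implicit is locating the mode $\hat x$ of the marginal $f_\bv$: it is not $\langle \bv,\bx^*\rangle$ (the projection of the mode is not the mode of the projection), so centering the tail at $\langle\bv,\bx^*\rangle$ with $|\langle\bv,\bx^*\rangle|\le\gamma$ does not suffice; the paper bounds $|\hat x|\le\gamma+2/\sqrt{\alpha d}$ by applying the mean-to-mode bound $\le 1/\sqrt{\alpha d}$ (a consequence of the covariance bound) twice, once to $\mu$ and once to its projection, which is precisely what the $2/\sqrt{\alpha d}$ buffer in $J$ is for — a routine fix, but it should be stated.
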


We note that if $\mu$ has mean $0$, then $\|\bx^*\|\leq \alpha^{-1/2}$ so the above results apply (see Fact~\ref{fact:mode-log-concave} in the Appendix).
In fact when $\mu$ has mean $0$, one will always have $\epd\geq \Omega(1)$ for $\cpd=1$ (as noted within the proof of Proposition~\ref{prop:log-concave}).
Proposition~\ref{prop:log-concave} represents the typical case in that $\mu$ can be ``mildly off-centered'', for example by centering around a point $\bx$ drawn from $\mu$ itself (see again Fact~\ref{fact:mode-log-concave}).

\paragraph{On the Smoothness of the Action Space in Online Optimization and Learning}

Our positive results are in contrast with \cite[Proposition 3.9]{sellke2023incentivizing}, which shows that $e^{\Omega(d)}$ time can be necessary for BIC exploration for $r$-regular $\mu$.
We believe that our results are not specific to the unit ball, but that the fundamental distinction between these two examples is the \emph{smoothness} of the action set. 
In \cite[Proposition 3.9]{sellke2023incentivizing}, the action set is a non-smooth polytope with corners, and the optimal action under the prior distribution is one of the corners.
This means that given a small amount of new information, the posterior-optimal action will not change at all.
By contrast our main algorithm crucially relies on expansiveness of the function
\[
\bl
\mapsto 
\argmax_{\bx\in \mathcal A} \langle \bl,\bx\rangle.
\]
In our setting $\mathcal A$ is the unit ball and so this function is simply $\bl/\|\bl\|$.
However we expect our methods to generalize to other smooth bodies, and plan to pursue this in future work.

It is worth mentioning that the geometry of the action set has long been understood to play an important role in high-dimensional learning and optimization. 
This was exemplified by the interplay between self-concordant barrier functions \cite{nesterov1994interior} and linear bandits via online stochastic mirror descent \cite{abernethy2008competing,bubeck2012towards,bubeck2012regret,bubeck2018sparsity,bubeck2019entropic,kerdreux2021linear}.
Geometric properties of the action space are also known to yield acceleration for full-information online learning and offline optimization \cite{garber2015faster,huang2017following,levy2019projection,kerdreux2021projection,mhammedi2022exploiting,molinaro2023strong,tsuchiya2024fast}.

\subsection{Additional Notation}
We will use the following notation throughout the paper. Unless otherwise specified, $\norm{\cdot}$ will refer to the $\ell_2$ norm. We will use $\textbf{e}_i$ to refer to the $i$th vector of the standard basis. For a random variable $X$, we say that $X$ is $K$-sub-gaussian if $\P(X \ge t) \le 2e^{-t^2/K^2}$. We use the standard $f(d) = O(g(d))$ to mean that there exists some constant $C > 0$ such that $f(d) \le C \cdot g(d)$ for all sufficiently large $d$. We also use $\mathcal{P}_S(\bu)$ to represent the projection of the vector $\bu$ onto the space $S$.

\subsection{Outline}
The rest of the paper will be organized as follows. In Section \ref{sec:technical_overview}, we present sketches of our main algorithms and discuss the three key technical ideas behind the algorithm and analysis. In Section \ref{sec:main_theorem}, we give the detailed main algorithm, present the two key technical propositions, and formally state our main theorem bounding the sample complexity of the algorithm.

\section{
Algorithm and Technical Overview
}\label{sec:technical_overview}

In this section, we present pseudocode for the main algorithm and give informal intuition for the key technical ideas used to show that this algorithm is BIC and satisfies the sample complexity bound of Theorem \ref{thm:number_of_steps_introduction}.

\subsection{Algorithm Sketch}\label{sec:alg_sketch}

Before presenting the algorithm, we first state Lemma \ref{lemma:bic_sphere}, which is the key observation used by the algorithm to select BIC actions. Informally, this lemma says that for any $\psi$ that is a function of historical actions and rewards and possibly external randomness (but not on future information), the action $\bA^{(t)}$ in the direction of $\E[\bl \mid \psi]$ is BIC (or any action is BIC if $\E[\bl \mid \psi] = \textbf{0}$). Thus to prove an action $\bA^{(t)}$ is BIC, it suffices to find such a $\psi$ so that $\bA^{(t)}$ is in the direction of $\E[\bl \mid \psi]$ whenever $\E[\bl \mid \psi] \ne \textbf{0}$. 

\begin{lemma}[Proof in Appendix \ref{proof:lemma:bic_sphere}]
\label{lemma:bic_sphere}
    Suppose  that $\psi$ is a function of the history before time $t$ and potentially some external independent randomness $\xi$, i.e.  $\psi = \psi(\bA^{(1)},r^{(t-1)}...,\bA^{(t-1)}, r^{(t-1)}, \xi)$. Let $\bv$ be any vector in $\mathbb{R}^d$. Define
    \[
        \expl(\psi, \bv) := \begin{cases}
         \frac{\E[\bl \mid \psi]}{\norm{\E[\bl \mid \psi]}} & \text{if $\norm{\E[\bl \mid \psi]} \ne \textbf{0}$} \\
        \bv &  \text{otherwise.}
    \end{cases}  
   \]
   Then $A^{(t)} = \expl(\psi, \bv)$ is BIC at time $t$.
\end{lemma}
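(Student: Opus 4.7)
The plan is to verify Definition~\ref{def:bic} directly by showing that for each realization $\bA^{(t)}=A$ in the support of the algorithm, the posterior mean $\E[\bl\mid\bA^{(t)}=A]$ is a non-negative scalar multiple of $A$ (or vanishes). Once this collinearity is established, BIC will follow immediately: on the unit ball $\mathcal A$, the linear supremum
$\sup_{A'\in\mathcal A}\langle \E[\bl\mid \bA^{(t)}=A],A'\rangle$ is achieved at any unit vector aligned with $\E[\bl\mid \bA^{(t)}=A]$ (and trivially at any $A'$ when this conditional mean is zero), so collinearity of $A$ with this vector is exactly what the BIC identity demands.

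The second step is to apply the tower property. Since $\bA^{(t)}=\expl(\psi,\bv)$ is a deterministic function of $\psi$, we have $\sigma(\bA^{(t)})\subseteq\sigma(\psi)$, and hence
\[
\E[\bl\mid\bA^{(t)}=A]=\E\bigl[\E[\bl\mid\psi]\,\big\vert\,\bA^{(t)}=A\bigr].
\]
The key observation is that $\expl$ has been engineered precisely so that the integrand $\E[\bl\mid\psi]$ is pointwise a non-negative scalar multiple of $\bA^{(t)}$: in the non-degenerate branch of the definition, $\E[\bl\mid\psi]=\norm{\E[\bl\mid\psi]}\cdot\bA^{(t)}$ by construction, and in the degenerate branch both $\E[\bl\mid\psi]$ and $0\cdot \bA^{(t)}$ are zero regardless of how $\bv$ was chosen. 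Writing $\E[\bl\mid\psi]=c(\psi)\,\bA^{(t)}$ for a non-negative random scalar $c(\psi)$, the display above collapses to
\[
\E[\bl\mid\bA^{(t)}=A]=A\cdot\E\bigl[c(\psi)\,\big\vert\,\bA^{(t)}=A\bigr],
\]
which is exactly the collinearity required in the first paragraph.

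I do not anticipate any substantive obstacle: the lemma is essentially tautological once one reads off the definition of $\expl$, which has been crafted precisely so that this short argument goes through. The only mildly delicate points are the routine measurability step justifying the tower identity, and the tie-breaking subtlety that on the degenerate branch one may legitimately write $\E[\bl\mid\psi]=0\cdot\bA^{(t)}$ regardless of the fallback vector $\bv$, which is what allows a single collinearity relation to cover both cases simultaneously.
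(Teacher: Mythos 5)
Your proof is correct, and the underlying idea is the same as the paper's, but you make a step explicit that the paper leaves tacit. The paper argues only that $\expl(\psi,\bv)$ maximizes $\langle A',\E[\bl\mid\psi]\rangle$ over the unit ball, i.e.\ that the recommendation is optimal \emph{conditionally on $\psi$}; it then declares this to be BIC without reconciling it with Definition~\ref{def:bic}, which conditions on the coarser event $\{\bA^{(t)}=A\}$. You supply the missing bridge: since $\bA^{(t)}=\expl(\psi,\bv)$ is $\sigma(\psi)$-measurable, the tower property gives $\E[\bl\mid\bA^{(t)}=A]=\E\bigl[\E[\bl\mid\psi]\mid\bA^{(t)}=A\bigr]$, and because $\E[\bl\mid\psi]=c(\psi)\bA^{(t)}$ with $c(\psi)=\norm{\E[\bl\mid\psi]}\geq 0$ almost surely (including the degenerate branch, where $c=0$ absorbs the arbitrariness of $\bv$), the posterior mean given the recommendation is a non-negative multiple of $A$. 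This is exactly the argument needed to pass from ``optimal given $\psi$'' to ``optimal given $\bA^{(t)}=A$'', so your write-up is the more rigorous version of the paper's proof rather than a genuinely different approach. One tiny edge case both proofs skate over: when $\E[\bl\mid\bA^{(t)}=A]\neq 0$, the conclusion requires $A$ to be a unit vector; this is automatic on the non-degenerate branch, and on the degenerate branch $c(\psi)=0$ forces the posterior mean to vanish (so any $\bv$ works), but it is worth noting that the lemma silently assumes $\bv\in\mathcal A$ so that the fallback action is even admissible.
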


Algorithm~\ref{alg:bic_exploration_pseudocode}, together with Algorithms~\ref{alg:initialexploration_pseudocode}–\ref{alg:exponentialgrowth_pseudocode}, presents a high-level sketch of the procedure used to prove Theorem~\ref{thm:number_of_steps_introduction}. The algorithm first uses the prior-based BIC action ($\textbf{e}_1$) for $\poly(d)$ steps. In order to explore the rest of the action space,  Algorithm \ref{alg:bic_exploration_pseudocode}  runs a single loop that repeatedly checks if $\lambda$-spectral exploration has been achieved. If $\lambda$-spectral exploration has not yet been achieved, then  Algorithm \ref{alg:bic_exploration_pseudocode}  uses the subroutine InitialExploration to find a BIC action $\ba$ that has magnitude at least $\Omega(\epd)$ when projected onto the space of not-yet-sufficiently explored actions.  Algorithm \ref{alg:bic_exploration_pseudocode} then gives $\ba$ as input to the subroutine ExponentialGrowth. ExponentialGrowth returns another BIC action that has at least twice as large of a magnitude when projected onto the not-yet-sufficiently explored space of actions. Algorithm \ref{alg:bic_exploration_pseudocode} repeatedly passes the action $\ba$ through ExponentialGrowth until the BIC action $\ba$ has magnitude of at least $\sqrt{\lambda}$ when projected onto the space of not-yet-sufficiently explored actions. Algorithm \ref{alg:bic_exploration_pseudocode} then uses this action for $\poly(d)$ steps. If $\lambda$-spectral exploration has still not been achieved, then  Algorithm \ref{alg:bic_exploration_pseudocode} explores a new direction by repeating the process of calling InitialExploration followed by repeated calls to ExponentialGrowth. In Sections \ref{sec:initial_exploration} and \ref{sec:exponential_growth}, we discuss the main intuition of the subroutines InitialExploration and ExponentialGrowth respectively.

\begin{algorithm}[]
\caption{BIC Exploration Pseudocode }\label{alg:bic_exploration_pseudocode}
\begin{algorithmic}[1]
\State Set $\bA^{(t)} = \textbf{e}_1$ for $\poly(d)$ steps 
\While {$\lambda$-spectral exploration has not yet been achieved}
    \State $S \gets $ space of actions that have already been sufficiently explored
    \State $\ba \gets \text{InitialExploration}(\cdot)$ \Comment{ BIC-vector with $\Omega(\epd)$-magnitude when projected onto $S^\perp$}
    \While {magnitude of $\ba$ when projected onto $S^\perp$ is less than $\sqrt{\lambda}$}
        \State $\ba \gets \text{ExponentialGrowth}(\ba) $ \Comment{new BIC-vector with double the magnitude when projected onto $S^\perp$}
    \EndWhile
    \State Set $\bA^{(t)} = \ba$ for $\poly(d)$ steps 
\EndWhile
\end{algorithmic}
\end{algorithm}

\begin{algorithm}[]
\caption{InitialExploration Pseudocode}\label{alg:initialexploration_pseudocode}
\begin{algorithmic}[1]
    \State $\bM \gets \sum_{i=1}^{t-1} \left(\bA^{(i)}\right)^{\otimes 2}$ \label{line:lineM} 
    \State $\bw_1,...,\bw_{\ell_{\lambda}} \gets$ orthonormal eigenvectors of $\bM$ with corresponding eigenvalues greater than $\lambda$
    \State $S \gets \text{Span}(\bw_1,...,\bw_{\ell_\lambda})$ \Comment{Space of already-sufficiently explored actions}
    \State $\hat{y}_i\gets $ empirical estimate of $\langle \bl, \bw_i \rangle$ for $i \le \ell_\lambda$
    \State $\bz \gets \E[\bl \mid \hat{\by}]$ 
    \State $f \gets$ function of $\bz$ such that $\E[\bz f(\bz)] = 0$ and $f(\bz) \in [\Omega(\epd), 1]$. 
    \State $E \gets \{\text{Bernoulli}(f(\bz)) = 1\}$
    \State Set $\bA^{(t)} = \expl(1_E, \bv)$ for $\bv \in S^{\perp}$ \Comment{$\bA^{(t)}$ explores a new direction with probability $f(\bz)$} \label{line:explore_new_dir} 
    \State $r \gets r^{(t)}$ if we explored and otherwise $r \gets N(0,1)$
    \State  \Return  $\expl(\mathrm{sign}(r), \bv)$ \Comment{where $\bv$ is in $S^{\perp}$}\label{line:seta}
\end{algorithmic}
\end{algorithm}

\begin{algorithm}[]
\caption{ExponentialGrowth Pseudocode}\label{alg:exponentialgrowth_pseudocode}
\begin{algorithmic}[1]
\State $S \gets$ space of actions that have already been sufficiently explored
\State Set $\bA^{(t)} = \ba$ for $\poly(d)$ steps to observe a set of rewards $\{r^{(t)}\}$. 
\State $R \gets $ average of rewards from the component of $\ba$ projected onto the space of not-yet-sufficiently explored actions ($S^{\perp}$)
\State \Return $\expl(\mathrm{sign}(R), \bv)$ \Comment{where $\bv$ is in $S^{\perp}$}\label{line:pseudoaction}
\end{algorithmic}
\end{algorithm}

\subsection{Initial Exploration}\label{sec:initial_exploration}
The goal of the InitialExploration routine (Algorithm \ref{alg:initialexploration_pseudocode}) is to find a BIC action that has sufficient magnitude when projected onto the not-yet-sufficiently explored space of actions $S^\perp$. To do this, we first design an event $E$ based on the historical actions and rewards $H_t$ and external randomness such that $\P(E \mid H_t) \ge \Omega(\epd)$ for all histories $H_t$ and such that conditional on $E$, $\bl$ has $0$ expectation when projected onto any direction in $S$. We then choose the BIC action $A^{(t)}$ in Line \ref{line:explore_new_dir} so that $A^{(t)}$ explores $S^{\perp}$ whenever event $E$ holds. More concretely, the second condition on $E$ implies that the action $A^{(t)} = \expl(1_E, \bv)$ for any $\bv \in S^{\perp}$ will satisfy $A^{(t)} \in S^{\perp}$ whenever event $E$ holds, and $A^{(t)}$ is BIC by Lemma \ref{lemma:bic_sphere}. Using this BIC action on Line \ref{line:explore_new_dir} therefore explores $S^{\perp}$ whenever event $E$ holds.  We define the signal $r$ as the reward at time $t$ if event $E$ holds and otherwise as independent $N(0,1)$ noise. We next show that, conditional on the sign of $r$, the expectation of $\bl$ always has sufficient magnitude when projected onto $S^\perp$ (see Lemma \ref{lemma:explore_with_small_probability}). This implies that for any $\bv \in S^{\perp}$, the action $\expl(\text{sign}(r), \bv)$ will have sufficient magnitude when projected onto $S^{\perp}$ as desired. Therefore, we can return the action $\expl(\psi, \bv)$, which is BIC by Lemma \ref{lemma:bic_sphere}.

 All that remains is to define the event $E$ from the previous paragraph. Formally, we want an event $E$ that is a function of the historical actions and rewards and independent random variable $\xi$ such that $\E[\langle \bl, \bw_i\rangle \mid E] = 0$ for all $i \le \ell_\lambda$ and such that $\P(E \mid H_t) \ge \Omega(\epd)$ for all histories $H_t$. The key to constructing $E$ is Lemma \ref{lemma:function_f}, which implies that for any random variable $\bx$ not confined to any half-spaces, there exists a function $f$ such that $\bx$ has expectation equal to $\textbf{0}$ conditional on the event $\{\text{Bernoulli}(f(\bx)) = 1\}$ 
 
\begin{lemma}[Proof in Appendix \ref{proof:lemma:function_f}]
\label{lemma:function_f}
    Let $\mu$ be a probability distribution on $\mathbb R^d$ with finite first moment and suppose for $0 < \epsilon \le 1/2$ we have that
\[
\min_{\|\bv\|=1}
\mathbb E^{\bx\sim\mu}[\langle \bv,\bx\rangle_+]\geq \epsilon.
\]
Then there exists a Borel measurable function $f:\mathbb R^d\to\Big[\frac{\epsilon}{4\max(\norm{\E[\bx]}, 1)},1\Big]$ with $\mathbb \E[\bx f(\bx)]=\boldsymbol{0}$.
\end{lemma}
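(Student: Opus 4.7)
The plan is to write $f$ as a convex combination of the constant function $1$ and some $[0,1]$-valued reweighting $g$, and to find $g$ via a Hahn--Banach-type argument. Concretely, set $\alpha = \frac{\epsilon}{4\max(\norm{\E[\bx]}, 1)}$ and look for $f$ of the form $f(\bx) = \alpha + (1-\alpha) g(\bx)$ with $g : \mathbb{R}^d \to [0,1]$ measurable, so that $f$ automatically takes values in $[\alpha, 1]$ as required. The constraint $\E[\bx f(\bx)] = \mathbf{0}$ then rewrites as
\[
\E[\bx g(\bx)] = -\frac{\alpha}{1-\alpha}\E[\bx].
\]
Since $\epsilon \le 1/2$, we have $\alpha \le 1/8$ and hence $\frac{\alpha}{1-\alpha} \le 2\alpha$, so this target has norm at most $2\alpha \norm{\E[\bx]} \le \epsilon/2$.

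The main step is to show that the convex set
\[
K := \{\E[\bx g(\bx)] : g : \mathbb{R}^d \to [0,1] \text{ measurable}\} \subseteq \mathbb{R}^d
\]
contains the closed ball of radius $\epsilon$ about the origin. First, $K$ is compact: the unit ball of $L^\infty(\mu)$ is weak-$*$ compact by Banach--Alaoglu, and $g \mapsto \E[\bx g(\bx)]$ is weak-$*$ continuous because $\bx \in L^1(\mu)$ by the finite-first-moment hypothesis. If some $\by$ with $\norm{\by} \le \epsilon$ failed to lie in $K$, the separating hyperplane theorem would produce a unit vector $\bv$ with
\[
\sup_{g} \langle \bv, \E[\bx g(\bx)]\rangle < \langle \bv, \by\rangle \le \norm{\by} \le \epsilon.
\]
However, the supremum on the left equals $\E[\langle \bv, \bx\rangle_+]$, attained at $g(\bx) = \mathbf{1}\{\langle \bv, \bx\rangle > 0\}$, and this is at least $\epsilon$ by the lemma's hypothesis---a contradiction.

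Applying this conclusion to $\by = -\frac{\alpha}{1-\alpha}\E[\bx]$, whose norm is at most $\epsilon/2 < \epsilon$, yields the desired measurable $g$ and hence $f$, with $f \ge \alpha = \frac{\epsilon}{4\max(\norm{\E[\bx]},1)}$ as required. The main obstacle is the Hahn--Banach step $B_\epsilon(\mathbf{0}) \subseteq K$: this is where the half-space non-degeneracy hypothesis $\min_{\norm{\bv}=1}\E[\langle \bv, \bx\rangle_+] \ge \epsilon$ is used in an essential way. The remaining pieces---the choice of $\alpha$, compactness of $K$, and the norm-bound computation for the target---are all routine.
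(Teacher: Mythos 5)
Your proposal is correct, and its heart coincides with the paper's key step: both arguments consider the convex set $K=\{\E[\bx g(\bx)]: g\ \text{measurable},\ 0\le g\le 1\}$ and use a separating-hyperplane argument, together with the identity $\sup_{g}\langle \bv,\E[\bx g(\bx)]\rangle=\E[\langle\bv,\bx\rangle_+]\ge\epsilon$, to conclude that every point of norm less than $\epsilon$ lies in $K$ (this is the paper's Lemma~\ref{lemma:mu_exploration_epsilon}). Where you differ is in how $f$ is assembled from this fact, and in the topology you impose on $K$. You use the affine ansatz $f=\alpha+(1-\alpha)g$ with $\alpha=\frac{\epsilon}{4\max(\norm{\E[\bx]},1)}$, which enforces the range $[\alpha,1]$ automatically and requires only a single application of the separation result to the target $-\frac{\alpha}{1-\alpha}\E[\bx]$ of norm at most $\epsilon/2$; the paper instead applies its separation lemma twice (once at $\bw=\mathbf{0}$, then again at $\bw=-\E[\bx f'(\bx)]$ after shifting $f^0$ upward) and averages the two functions to obtain the lower bound --- your construction is arguably cleaner. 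Conversely, your compactness step (Banach--Alaoglu plus weak-$*$ continuity of $g\mapsto\E[\bx g(\bx)]$) is sound but not needed: since your target has norm at most $\epsilon/2<\epsilon$, the weak (non-strict) separation theorem for a point outside a convex set in $\mathbb{R}^d$ already yields $\sup_{\bu\in K}\langle\bv,\bu\rangle\le\langle\bv,\by\rangle<\epsilon$ and hence the same contradiction, which is exactly how the paper sidesteps any discussion of closedness of $K$; dropping the functional-analytic machinery would make your argument as elementary as the paper's. (One cosmetic point: the $g$ produced by separation in $L^\infty(\mu)$ should be replaced by a Borel representative with values in $[0,1]$, a standard adjustment.)
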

    
 If we knew the exact values of the vector $\bx := (\langle \bl, \bw_1 \rangle,...,\langle \bl,  \bw_{\ell_\lambda} \rangle)$, then we could directly apply Lemma \ref{lemma:function_f} to $\bx$ and use the resulting function $f$ to define the event $E = \left\{\text{Bernoulli}(f(\bx)) = 1\right\}$. This event $E$ would satisfy the desired property that $\E[\langle \bl,  \bw_{i}  \rangle \mid E] = 0$ for $i \le \ell_\lambda$ and that $\Pr(E \mid H_t) = \Omega(\epd)$ for all $H_t$. However, we do not know the exact values of $\langle \bl, \bw_1\rangle, ...,  \langle \bl,  \bw_{\ell_\lambda} \rangle$ because we do not know $\bl$, and therefore this event $E$ is not a function of historical actions and rewards. Instead, we can estimate $\langle \bl,\bw_i\rangle$ as $\hat{y}_i$ using the historical actions and returns. These estimates will be relatively accurate because these directions are already well-explored. Defining $\bz = \E[\bl \mid \hat{\by}]$, we show that $\bz$ also satisfies the assumption of Lemma \ref{lemma:function_f} (Lemma \ref{lemma:x_cond_on_y}). Therefore, applying Lemma \ref{lemma:function_f} to $\bz$ gives a function $f$ such that $f(\bz) \ge \Omega(\epd)$ for all $\bz$. $\bz$ is a function of historical actions and rewards and external randomness as desired. Defining the event $E  = \left\{\text{Bernoulli}(f(\bz)) = 1\right\}$, we have as desired that $\E[\langle \bl,  \bw_{i} \rangle \mid E  ] = 0$ for all $i \le \ell_\lambda$ and that $\P(E  \mid H_t) = \Omega(\epd)$ for all $H_t$.

\subsection{Exponential Growth}\label{sec:exponential_growth}

The goal of the ExponentialGrowth routine (Algorithm \ref{alg:exponentialgrowth_pseudocode}) is to take a BIC action $\ba$ and return a new BIC action that has twice as large of a magnitude when projected onto the not-yet-sufficiently explored space of actions $S^\perp$. Using Lemma \ref{lemma:bic_sphere}, we will find a signal $R$ such that for any $\bv \in S^{\perp}$, the action $\expl(R, \bv)$ will have twice as large magnitude when projected onto $S^\perp$ as $\ba$ has.

The key intuition is that because the action space is curved, conditioning on noisy information about the sign of $\bl$ in a specific direction will increase the magnitude of the expectation of $\bl$ projected in that direction. Consider the following simplified example. Suppose that $d = 2$. Also suppose we already know $\bl_1$, and the goal is to explore $\bl_2$. Furthermore, assume that the initial BIC action is $\bA^{(t)}$ shown in the left-most diagram of Figure \ref{fig:exponential} that has $\epsilon$ as the $y$-coordinate. Using this action gives a rewards $r^{(t)}$. Because we know the value of $\bl_1$, we can remove this component of the reward $r^{(t)}$, and we are left with a signal $r = \epsilon \bl_2 + N(0,\sigma^2)$ for some $\sigma^2 > 0$. If we take $\bA^{(t+1)}$ to be the unit vector in the direction of $\E[\bl \mid r]$, then we will have that the new $y$-coordinate is $2\epsilon$. By Lemma \ref{lemma:bic_sphere}, this choice of $\bA^{(t+1)}$ is BIC, which is an important consequence of the curved action space. We then observe $r^{(t+1)}$, and we can repeat this process again to find a BIC action $\bA^{(t+2)}$ that has y-coordinate of $4\epsilon$. Therefore, in this simplified example we are able to exponentially grow the y-coordinate for BIC actions, which is a consequence of the curvature of the action space. This process is demonstrated in Figure \ref{fig:exponential}.

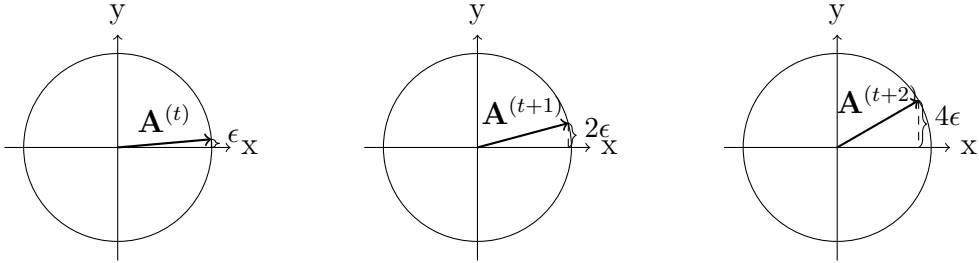
\begin{figure}[htbp]
    \centering
\begin{tikzpicture}[scale=1.25, line cap=round, line join=round]
\draw[->] (-1.2,0) -- (1.2,0) node[right] {x};
\draw[->] (0,-1.2) -- (0,1.2) node[above] {y};
\draw (0,0) circle(1);
\def\theta{5}
\coordinate (P) at (\theta:1);
\draw[->, thick] (0,0) -- (P) node[midway, above ] {$\bA^{(t)}$};;
\coordinate (Q) at (\theta:1 |- 0,0);
\draw[dashed] (P)--(Q);
\draw[decorate,decoration={brace,mirror,amplitude=3pt}] (Q)--(P) node[midway,right=2pt,yshift=3pt, font=\small] {$\epsilon$};
\end{tikzpicture}
\hspace{1cm}
\begin{tikzpicture}[scale=1.25, line cap=round, line join=round]
\draw[->] (-1.2,0) -- (1.2,0) node[right] {x};
\draw[->] (0,-1.2) -- (0,1.2) node[above] {y};
\draw (0,0) circle(1);
\def\theta{15}
\coordinate (P) at (\theta:1);
\draw[->, thick] (0,0) -- (P) node[midway, above ] {$\bA^{(t+1)}$};;
\coordinate (Q) at (\theta:1 |- 0,0);
\draw[dashed] (P)--(Q);
\draw[decorate,decoration={brace,mirror,amplitude=3pt}] (Q)--(P) node[midway,right=2pt,yshift=3pt, font=\small] {$2\epsilon$};
\end{tikzpicture}
\hspace{1cm}
\begin{tikzpicture}[scale=1.25, line cap=round, line join=round]
\draw[->] (-1.2,0) -- (1.2,0) node[right] {x};
\draw[->] (0,-1.2) -- (0,1.2) node[above] {y};
\draw (0,0) circle(1);
\def\theta{30}
\coordinate (P) at (\theta:1);
\draw[->, thick] (0,0) -- (P) node[midway, above ] {$\bA^{(t+2)}$};;
\coordinate (Q) at (\theta:1 |- 0,0);
\draw[dashed] (P)--(Q);
\draw[decorate,decoration={brace,mirror,amplitude=3pt}] (Q)--(P) node[midway,right=2pt,yshift=3pt, font=\small] {$4\epsilon$};
\end{tikzpicture}
\caption{Diagram illustrating exponentially growing exploration. First, we have a BIC action $\bA^{(t)}$ with y-coordinate $\epsilon$. Using $r^{(t)}$, we design a signal conditional on which the expectation of $\bl_2$ doubles.  $\expl(\cdot)$ then gives a BIC action $A^{(t+1)}$ with $y$-coordinate $2\epsilon$. Using $r^{(t+1)}$, we again double the conditional expectation of $\bl_2$ and get BIC action $A^{(t+2)}$ with y-coordinate $4\epsilon$. 
Increasing the conditional expectation of $\bl_2$ gives a BIC action with a larger $y$-coordinate, and this action's feedback gives a stronger signal that more rapidly increases the conditional expectation of $\bl_2$.
This allows us to ``bootstrap'' an exponentially weak starting signal all the way to constant signal strength without suffering exponential sample complexity.
}    \label{fig:exponential}    

\end{figure}

Equipped with the intuition of the previous paragraph, we now analyze Algorithm \ref{alg:exponentialgrowth_pseudocode}. Recall that $S$ is the space of actions that have already been sufficiently explored. Algorithm \ref{alg:exponentialgrowth_pseudocode} first uses the action $\bA^{(t)} = \ba$ for $\poly(d)$ steps. Taking an average of the rewards $\{r^{(t)}\}$ from these actions gives a close estimate of $\langle \ba, \bl \rangle$. Using the previous observed actions and rewards, we can remove the component of $\langle \ba, \bl \rangle$ that comes from $\ba$ projected onto $S$. This leaves a signal $R$ which is the average of rewards from the component of $\ba$ projected onto just $S^{\perp}$. Using concentration laws for conditional probabilities (Lemmas \ref{lemma:small_weight_cond_exp_newest} and \ref{lemma:small_change_in_other_weights}), we formalize the intuition from the previous paragraph to show that for any $\bv \in S^{\perp}$, the projection of $\expl(R, \bv)$ onto $S^{\perp}$ will always have magnitude at least $2\mathcal{P}_{S^{\perp}}(\ba)$. The action $\expl(R, \bv)$ is BIC by Lemma \ref{lemma:bic_sphere}. Therefore, we have found a new BIC action $\expl(R, \bv)$ that has twice as large magnitude when projected onto $S^{\perp}$ as $\ba$.

\subsection{Pushing eigenvalues upwards}\label{sec:increasing_eigenvalues}
An important step of the proof of Algorithm \ref{alg:bic_exploration_pseudocode} is showing that the while loop will terminate in polynomial time. To show this, we show that the action $\ba$ used for $\poly(d)$ steps at the end of each round of the while loop sufficiently increases the eigenvalues of $\bM^{(t)} := \sum_{i=1}^{t} \left(\bA^{(i)}\right)^{\otimes 2}$. Note that at each time $t$, the matrix $\bM^{(t)}$ increases by a rank-1 update, i.e. $\bM^{(t+1)} = M^{(t)} + \left(\bA^{(t+1)}\right)^{\otimes 2}$. In order to show that we will eventually achieve $\lambda$-spectral exploration, we must show that the small eigenvalues of $\bM^{(t+1)}$ increase relative to the small eigenvalues of $\bM^{(t)}$ after each round of the while loop. 
Although we do not follow this route, we mention that \cite{golub1973some} provides exact descriptions for the eigenvalues of rank-one updates as above, giving a rational function $\omega(x)$ (with coefficients depending on $\bM^{(t)}$ and $\bA^{(t+1)}$) which has roots equal to the eigenvalues of $\bM^{(t+1)}$. 
However, it is not clear how helpful this is for the quantitative estimates we require.

At first glance, we might hope that we can sufficiently increase all of the eigenvalues of $\bM^{(t)}$ in just $d$ rounds if in each round we partially explore a new not-yet-explored direction. However, this unfortunately is not always the case. For example, suppose in the first round we use BIC action $\bA^{(1)} = (1,0,0,...,0)$.
Writing $\varphi=1/\sqrt{5}$, we then in the next $d-1$ actions use the BIC actions $\bA^{(2)} = (2\varphi,-\varphi,0,0,...,0)$, $\bA^{(3)} = (0,2\varphi,-\varphi,0,0,...,0)$, and so on until $\bA^{(d)} = (0,0,...,0,2\varphi,-\varphi)$. Then each $\bA^{(i)}$ has distance $\varphi=\Omega(1)$ from the span of the preceding actions, so we might hope that this already yields $\Omega(1)$-spectral exploration. 
However, note that the vector $\bx=(1,2,4,\dots,2^{d-1})$ satisfies $\langle \bx,\bA^{(i)}\rangle = 1_{i=1}\leq 2^{-(d-1)}\|\bx\|$ for all $1\leq i\leq d$.
It follows that the matrix $\bM^{(d)} = \sum_{i=1}^d \left(\bA^{(i)}\right)^{\otimes 2}$ has smallest eigenvalue which is \emph{exponentially small} in $d$ (since $\langle \bx,\bM^{(d)}\bx\rangle/\|\bx\|^2$ is exponentially small). 

We show that our algorithm terminates in $\poly(d)$ steps using the linear algebraic Lemma \ref{lemma:next_largest_eigenvalue} below, which may be of independent interest. Informally, this lemma says that if $\bu$ has non-negligible projection onto the space orthogonal to the large-or-medium eigenvalues of $\bM$, then the rank-1 update of $\bM + \bu^{\otimes 2}$ increases the total sum of the small-or-medium eigenvalues non-negligibly.

\begin{lemma}[Proof in Appendix \ref{proof:lemma:next_largest_eigenvalue}]
\label{lemma:next_largest_eigenvalue}
     Let $\bv_1,\bv_2,...\bv_j \in \mathbb{R}^d$ such that $\norm{\bv_i} = 1$. Define $\bM = \sum_{i=1}^j \bv_i^{\otimes 2}$. Suppose $\bw_1,...,\bw_d$ are orthonormal eigenvectors of $\bM$ with corresponding eigenvalues $\lambda_1 \ge ... \ge \lambda_d \ge 0$. Define $\ell_{\epsilon}$ as the largest index such that $\lambda_j \ge \epsilon$ for some $0 < \epsilon < 1$, and define $S = \mathrm{Span}(\bw_1,...,\bw_{\ell_{\epsilon}})$. Suppose $\bu$ is a vector such that $\norm{P_{S^{\perp}}(\bu)}_2^2 \ge \epsilon$, and define $\bM' = \bM + \mathbf{u}^{\otimes 2}$. Let $\bw'_1, ...\bw'_d$ be the orthonormal eigenvectors of $\bM'$ with corresponding eigenvalues $\lambda'_1 \ge... \ge \lambda'_d$. Finally, let $\ell$ be the largest index such that $\lambda_\ell \ge \frac{200d^3}{\epsilon^2}$. Then
         \[
        \sum_{i=\ell+1}^d \lambda'_i \ge \epsilon/2 + \sum_{i=\ell+1}^d \lambda_i.
    \]
     
\end{lemma}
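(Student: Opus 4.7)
The plan is to analyze $\bM'$ via the block decomposition induced by $V := \mathrm{Span}(\bw_1,\dots,\bw_\ell)$, the top-$\ell$ eigenspace of $\bM$. First I would note that $\lambda_\ell\geq 200d^3/\epsilon^2>\epsilon$ forces $V\subseteq S$, so $V^\perp\supseteq S^\perp$ and $\|P_{V^\perp}(\bu)\|^2\geq \|P_{S^\perp}(\bu)\|^2\geq \epsilon$. Writing $\bu=\bu_V+\bu_{V^\perp}$, the block form of $\bM'$ in the decomposition $\bbR^d=V\oplus V^\perp$ is
\[
\bM' \;=\; \begin{pmatrix}A&C\\C^T&B\end{pmatrix},
\]
where $A=\Lambda_V+\bu_V\bu_V^T$ (all eigenvalues at least $\lambda_\ell$), $B=\bM|_{V^\perp}+\bu_{V^\perp}\bu_{V^\perp}^T$, and $C=\bu_V\bu_{V^\perp}^T$, with $\Lambda_V=\mathrm{diag}(\lambda_1,\dots,\lambda_\ell)$ in the basis $\bw_1,\dots,\bw_\ell$.

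The crux is a quantitative Schur-complement analysis. Exploiting the factorization $\bM'=L\cdot \mathrm{diag}(A,B_{\mathrm{Schur}})\cdot L^T$ with $L=\begin{pmatrix}I&0\\C^TA^{-1}&I\end{pmatrix}$ unit lower triangular and $B_{\mathrm{Schur}}=B-C^TA^{-1}C$, I would establish the key estimate
\[
\sum_{i=\ell+1}^d \lambda'_i \;\geq\; \mathrm{tr}(B_{\mathrm{Schur}}) - \delta
\]
for an error $\delta$ negligible relative to $\epsilon/2$. The intuition is that since $A$'s spectrum ($\geq 200d^3/\epsilon^2$) dominates $B$'s (bounded by $200d^3/\epsilon^2+\|\bu\|^2$), the top $\ell$ eigenvalues of $\bM'$ track those of $A$ and absorb the cross-energy $\mathrm{tr}(C^TA^{-1}C)$, leaving the bottom $d-\ell$ eigenvalues with total mass close to $\mathrm{tr}(B_{\mathrm{Schur}})$. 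Quantitatively, I would bound the Ky Fan maximum $\sum_{i\le\ell}\lambda'_i=\max_{Q^TQ=I_\ell}\mathrm{tr}(Q^T\bM'Q)$ by parameterizing $Q=\begin{pmatrix}Q_V\\Q_{V^\perp}\end{pmatrix}$ and using that the scale separation forces any near-optimizer to satisfy $\|Q_{V^\perp}\|_F^2=O(\|C\|^2/\lambda_\ell^2)$, yielding $\mathrm{tr}(Q^T\bM'Q)\le \mathrm{tr}(A)+\mathrm{tr}(C^TA^{-1}C)+\delta$ for small $\delta$.

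Given this estimate, Sherman--Morrison gives
\[
\mathrm{tr}(B_{\mathrm{Schur}}) \;=\; \sum_{i>\ell}\lambda_i + \frac{\|\bu_{V^\perp}\|^2}{1+\bu_V^T\Lambda_V^{-1}\bu_V}.
\]
Since $\bu$ is an action in the unit ball in our application, $\bu_V^T\Lambda_V^{-1}\bu_V\le \|\bu_V\|^2/\lambda_\ell\le \epsilon^2/(200d^3)$, hence $\mathrm{tr}(B_{\mathrm{Schur}})\geq \sum_{i>\ell}\lambda_i+\epsilon\bigl(1-\epsilon^2/(200d^3)\bigr)$. Combined with the key estimate, this yields the desired $\sum_{i>\ell}\lambda'_i\geq \sum_{i>\ell}\lambda_i+\epsilon/2$ as soon as $\delta$ is at most a small constant multiple of $\epsilon$.

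The hard part will be the quantitative error control in the key estimate. The natural obstruction is that there need not be any spectral gap between $\lambda_\ell$ and $\lambda_{\ell+1}$, which precludes any direct Davis--Kahan perturbation of the top-$\ell$ eigenspace. The argument must therefore operate purely at the level of sums/traces (rather than of individual eigenspaces), where the \emph{scale} separation between $A$ and $B$ suffices even in the absence of a gap. The threshold $200d^3/\epsilon^2$ is tuned precisely so that error terms, which can accumulate polynomially in $d$ from summing over up to $d$ eigenvalues and depend on inverse powers of $\lambda_\ell$, remain safely below $\epsilon/2$.
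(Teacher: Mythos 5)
Your reduction to controlling the top-$\ell$ sum plus trace additivity is the right start (and is how the paper proceeds), but the key estimate $\sum_{i>\ell}\lambda'_i\ge \mathrm{tr}(B_{\mathrm{Schur}})-\delta$ with $\delta$ negligible relative to $\epsilon/2$ is false, because the scale separation you invoke between the $A$-block and the $B$-block does not exist. The block $B$ contains every eigendirection of $\bM$ with eigenvalue in the intermediate range $[\epsilon,\,200d^3/\epsilon^2)$, and $\lambda_{\ell+1}$ may be arbitrarily close to $\lambda_\ell$; so mass that $\bu$ adds inside $V^\perp\cap S$ can migrate into the \emph{top}-$\ell$ eigenvalues of $\bM'$, and the Ky Fan near-optimizers need not concentrate on $V$. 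Concretely, take $d=3$, $T=200d^3/\epsilon^2$, $\bM$ with eigenvalues $(\lambda_1,\lambda_2,\lambda_3)=(T,\,T-\eta,\,0)$ in the basis $\bw_1,\bw_2,\bw_3$ for small $\eta>0$ with $\eta+\epsilon<1$ (realizable as a sum of unit rank-one matrices using pairs of unit vectors at angles $\pm\theta$ in the $\bw_1$--$\bw_2$ plane), and $\bu=\sqrt{1-\epsilon}\,\bw_2+\sqrt{\epsilon}\,\bw_3$. Then $\ell=1$, $V=\mathrm{Span}(\bw_1)$, $S=\mathrm{Span}(\bw_1,\bw_2)$, and $\norm{\mathcal{P}_{S^\perp}(\bu)}^2=\epsilon$, so the hypotheses hold; moreover $\bu_V=0$, hence $C=0$ and $\mathrm{tr}(B_{\mathrm{Schur}})=\mathrm{tr}(B)=T-\eta+1$. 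But $\bM'$ is block diagonal, with the $2\times 2$ block
\[
\begin{pmatrix}T-\eta+1-\epsilon&\sqrt{\epsilon(1-\epsilon)}\\ \sqrt{\epsilon(1-\epsilon)}&\epsilon\end{pmatrix},
\]
whose top eigenvalue exceeds $T-\eta+1-\epsilon>T=\lambda_1$ and whose bottom eigenvalue is $\epsilon(T-\eta)/\lambda'_1<\epsilon$. Hence $\lambda'_1$ comes from the $B$-block and $\sum_{i>1}\lambda'_i\le T+\epsilon$, while $\mathrm{tr}(B_{\mathrm{Schur}})=T-\eta+1$: the defect is $\delta\ge 1-\epsilon-\eta$, an absolute constant, far larger than any constant multiple of $\epsilon$. (The same example refutes the claim $\norm{Q_{V^\perp}}_F^2=O(\norm{C}^2/\lambda_\ell^2)$ for near-optimizers: here $C=0$ yet the top eigenvector of $\bM'$ lies essentially in $V^\perp$.) Note the lemma's conclusion still holds in this example, but only with gain about $\epsilon$, exactly the $S^\perp$-mass of $\bu$ — so any accounting that credits the bottom sum with all of $\norm{\bu_{V^\perp}}^2$ up to $o(\epsilon)$ error cannot be repaired.

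The paper's proof avoids this by splitting at the $\epsilon$ threshold ($S$ versus $S^\perp$) rather than at the $200d^3/\epsilon^2$ threshold ($V$ versus $V^\perp$). It applies the Ky Fan variational characterization to $\sum_{i\le\ell}\lambda'_i$ and shows, by a contradiction using the scale gap between $\lambda_\ell\ge 200d^3/\epsilon^2$ and eigenvalues below $\epsilon$, that the optimizing orthonormal frame has projection at most $\epsilon^2/(100d^2)$ onto $S^\perp$ (shifting that much weight off an eigenvalue of size $\ge 200d^3/\epsilon^2$ costs at least $2d$, more than the $1+d\epsilon$ that could be gained); it then bounds the added Rayleigh mass $\sum_i\langle\bx_i^*,\bu\rangle^2$ by roughly $\norm{\mathcal{P}_S(\bu)}^2+\epsilon/5\le 1-\epsilon+\epsilon/5$, so $\sum_{i\le\ell}\lambda'_i\le\sum_{i\le\ell}\lambda_i+1-\epsilon/2$, and concludes by trace additivity. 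In other words, the only mass guaranteed to land in the bottom sum is the $S^\perp$-mass of $\bu$; the intermediate-direction mass may be absorbed by the top-$\ell$ eigenvalues, which is precisely what your Schur-complement bookkeeping with respect to $V$ cannot accommodate. Your Sherman--Morrison computation of $\mathrm{tr}(B_{\mathrm{Schur}})$ is correct but is applied to the wrong quantity.
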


\section{Main Results}\label{sec:main_theorem}
    In this section, we will formally state our main algorithms and theorems. For presentation purposes, for the formal algorithms and theorems we will define
 \[
 \lambda := \min \left(1, \min(\delta_{L\ref{lemma:small_weight_cond_exp_newest}}, \delta_{L\ref{lemma:small_change_in_other_weights}}, 1/c_{L\ref{lemma:small_change_in_other_weights}})^2\frac{(\cvar/\sqrt{8\pi})^2}{4d(\csg \sqrt{\pi}+ 1)^2}\right) = \Omega(\cvar/d),
 \]
 where $\delta_{L\ref{lemma:small_weight_cond_exp_newest}}$ is a constant from Lemma \ref{lemma:small_weight_cond_exp_newest} and $\delta_{L\ref{lemma:small_change_in_other_weights}},c_{L\ref{lemma:small_change_in_other_weights}}$ are constants from Lemma \ref{lemma:small_change_in_other_weights}.
Note that if we can do $\lambda$-spectral exploration in $T$ rounds with this value of $\lambda$, then for any $\bar\lambda$ we can do $\bar\lambda$-spectral exploration in $\lceil \frac{\bar\lambda}{\lambda} \rceil\cdot T$ rounds by simply repeating the algorithm  $\lceil \frac{\bar\lambda}{\lambda} \rceil$ times. Therefore, if we can ensure $\lambda$-spectral exploration for the $\lambda$ value described above in $\poly(d)$ rounds, then we can also ensure $\bar\lambda$-spectral exploration in $\bar\lambda\cdot \poly(d)$ rounds for any $\bar\lambda \ge \lambda$.
\subsection{Algorithms}
Our main algorithm is presented in Algorithm \ref{alg:non_ind_eps_BIC}, with subroutines in Algorithms \ref{algo:first_exploration} and \ref{alg:explore_a_bit_more}. Note that these three algorithms directly correspond to the pseudocode presented in Algorithms \ref{alg:bic_exploration_pseudocode}--\ref{alg:exponentialgrowth_pseudocode}.
\begin{algorithm}[]
\caption{BIC Exploration }\label{alg:non_ind_eps_BIC}
\begin{algorithmic}[1]
\Require $\lambda$ 
\State $\kappa \gets \max \left(\frac{1}{\lambda c_{L\ref{lemma:x_cond_on_y}}},  \frac{4d(\csg\sqrt{\pi} + 1)(1 +  \frac{1}{\lambda})}{\cvar^2/(8\pi)}\right)$
\State $\bv_1 \gets \textbf{e}_1$
\For {$t \in \left[0: \kappa\right)$}
\State Set $\bA^{(t)} = \bv_1 $ \label{line:sete1}
\State $q^{t}_{1} \gets   r^{(t)}$
\EndFor
\State $t \gets  \kappa$
\State $j \gets 1$
\While {minimum eigenvalue of $\bM = \sum_{i=1}^{j} \bv_i^{\otimes 2}$ is smaller than $\lambda$} \label{line:outer_while_loop}
    \State $\bw_1,...,\bw_d \gets$ orthonormal eigenvectors of $\bM$ with corresponding eigenvalues $\lambda_1 \ge ... \ge \lambda_d$.
    \State $\ell_\lambda \gets \max \{i : \lambda_i \ge \lambda\}$
    \State $S \gets \mathrm{Span}(\bw_1,...,\bw_{\ell_{\lambda}})$
    \State $\ba , t  \gets \mathrm{InitialExploration}(\{\bw_i\}_{i=1}^d, \{\lambda_i\}_{i=1}^{\ell_\lambda}, \{\bv_i\}_{i=1}^j,  \{\{q_i^{t'}\}_{t'=0}^{\kappa-1}\}_{i=1}^j , t)$
    \While {$\norm{\mathcal{P}_{S^{\perp}}(\ba)} \le \sqrt{\lambda}$} \label{line:inner_while_loop}
        \State $\ba , t \gets  \mathrm{ExponentialGrowth}(\ba, \{\bw_i\}_{i=1}^{\ell_{\lambda}}, \{\lambda_i\}_{i=1}^{\ell_\lambda}, \{\bv_i\}_{i=1}^j,  \{\{q_i^{t'}\}_{t'=0}^{\kappa-1}\}_{i=1}^j, t)$
    \EndWhile
    \State $\bv_{j+1} \gets \ba $
    \For {$t' \in \left[0: \kappa\right)$} \label{line:forloop}
    \State Set $\bA^{(t + t')} = \ba $ \label{line:sufficient_explore_action}
    \State $q^{t'}_{j+1} \gets r^{(t + t')}$
    \EndFor
    \State $t \gets t+\kappa$
    \State $j \gets j+1$
\EndWhile
\end{algorithmic}
\end{algorithm}

\begin{algorithm}[]
\caption{InitialExploration }\label{algo:first_exploration}
\begin{algorithmic}[1]
    \Require $\{\bw_i\}_{i=1}^d, \{\lambda_i\}_{i=1}^{\ell_\lambda}, \{\bv_i\}_{i=1}^j, \{\{q_i^{t'}\}_{t'=0}^{\kappa-1}\}_{i=1}^j , t$
    \State Define $\bx^* \in \mathbb{R}^{\ell_\lambda}$ as $x^*_\ell = \langle \bl, \bw_\ell \rangle$.
    \State Define $\hat{\by} \in \mathbb{R}^{\ell_{\lambda}}$ as $\hat{y}_\ell = \lambda c_{L\ref{lemma:x_cond_on_y}}\sum_{t'=0}^{\frac{1}{\lambda c_{L\ref{lemma:x_cond_on_y}}}-1} \sum_{k=1}^j \frac{\langle \bv_k, \bw_\ell \rangle}{\lambda_\ell}q_{k}^{t'} $ \Comment{$\hat{y}_\ell \sim x^*_\ell + N(0,c_{L\ref{lemma:x_cond_on_y}} \frac{\lambda}{\lambda_\ell})$ by Lemma \ref{lemma:rewriting_as_lin_combo} via Lemma \ref{lemma:by_to_bx}} \label{line:ydef}
    \State $\bz(\by) \gets \E[\bx^* \mid \hat{\by} = \by]$

    \State $f \gets$ function from Lemma \ref{lemma:function_f} for $\bz(\hat{\by})$ with $\epsilon = \frac{\epd\cpd}{4}$. \Comment{$f$ exists by Lemma \ref{lemma:x_cond_on_y} via Lemma \ref{lemma:by_to_bx}} \label{line:apply_x_cond_on_y}

    \State $\Psi \gets   \text{Bernoulli}\left(f(\bz(\hat{\by}))\right) $  

  \State Set 
    $ \bA^{(t)} = \expl(\Psi,  \bw_{\ell_\lambda + 1}) $  \label{line:low_probability_exploration}
    \State 
$
R \gets \begin{cases}
        r^{(t)} & \text{w.p. 
        $\frac{\epd\cpd}{16\left(\csg\sqrt{\pi}+1\right)f(\bz(\hat{\by}))}$
        if $\Psi = 1$}\\
        N(0,1) &  \text{otherwise}
    \end{cases}
$ \Comment{The above equation involves valid probabilities by Lemma \ref{lemma:function_f} and because $\max(\norm{\E[\bz(\hat{\by})]}, 1) \le \norm{\E[\bx^*]} + 1 \le \csg\sqrt{\pi}+1$} \label{line:R_def_first_exploration}

    \State $\ba \gets \expl(1_{R > 0}, \bw_{\ell_\lambda + 1})$.
    \State \Return  $\ba, t+1$
\end{algorithmic}
\end{algorithm}

\begin{algorithm}[]
\caption{ExponentialGrowth}\label{alg:explore_a_bit_more}
\begin{algorithmic}[1]
\Require $\ba$, $\{\bw_i\}_{i=1}^{\ell_{\lambda}}$, $\{\lambda_i\}_{i=1}^{\ell_\lambda}$, $\{\bv_i\}_{i=1}^j$, $ \{\{q_i^{t'}\}_{t'=0}^{\kappa-1}\}_{i=1}^j, t$
\State $S \gets \mathrm{Span}(\bw_1,...,\bw_{\ell_{\lambda}})$
\State $\bx \gets \frac{\mathcal{P}_{S^{\perp}}(\ba)}{\norm{\mathcal{P}_{S^{\perp}}(\ba)}}$
\State $c_k \gets \sum_{i=1}^{\ell_{\lambda}} \frac{\langle \mathcal{P}_S(\ba), \bw_i\rangle \langle\bv_k, \bw_i \rangle}{\lambda_i}$ for $k \in [1:j]$ \Comment $\mathcal{P}_S(\ba) = \sum_{k=1}^{j} c_k\bv_k$ by Lemma \ref{lemma:rewriting_as_lin_combo}
\State $L \gets \frac{4d(\E[\bl_1] + 1)^2(1 +  \sum_{k=1}^j c_k^2)}{c_{L\ref{lemma:small_weight_cond_exp_newest}}^2}$ \label{line:L_def}
\State For $t' \in [t,t+L)$, set $\bA^{(t')} = \ba$ \label{line:explore_actions}
\State $t \gets t+ L$
    \State $R \gets     \sum_{t' = t}^{t+L-1} \left( r^{(t')} -\sum_{k=1}^{j} c_kq_{k}^{t'}\right) $ \label{line:R_def_explore_a_bit_more}

 \State $\bb \gets \expl(1_{R > 0}, \bw_{\ell_\lambda + 1})$
\State \Return $\bb, t$
\end{algorithmic}
\end{algorithm}

Lemma \ref{lemma:by_to_bx} shows that the application of Lemma \ref{lemma:function_f} (in the form of Lemma~\ref{lemma:x_cond_on_y}) in Algorithm \ref{algo:first_exploration} is valid. 

\begin{lemma}[Proof in App~\ref{proof:lemma:by_to_bx}]
\label{lemma:by_to_bx}
    Every time Algorithm \ref{algo:first_exploration} (Line \ref{line:ydef}) calls Algorithm \ref{alg:non_ind_eps_BIC} to define $\hat{y}_\ell$, we have $\hat{y}_\ell \stackrel{d}{=} x^*_\ell + N(0, c_{L\ref{lemma:x_cond_on_y}}\frac{\lambda}{\lambda_\ell})$. Thus, $\bz(\hat{\by})$ (Line \ref{line:apply_x_cond_on_y}) satisfies the assumptions of Lemma \ref{lemma:function_f} with $\epsilon = \frac{\epd \cpd}{4}$. 
\end{lemma}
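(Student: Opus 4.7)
The plan is to handle the two claims in sequence. First I establish the distributional identity $\hat{y}_\ell \stackrel{d}{=} x^*_\ell + N(0, c_{L\ref{lemma:x_cond_on_y}}\lambda/\lambda_\ell)$ by direct computation. Each reward $q_k^{t'}$ in the definition of $\hat{y}_\ell$ was generated during the batch in which $\bA^{(t)}=\bv_k$, so $q_k^{t'} = \langle \bv_k, \bl\rangle + w_{k,t'}$ where the $w_{k,t'}$ are mutually independent $N(0,1)$ variables, also independent of $\bl$. Substituting this decomposition splits $\hat{y}_\ell$ into a signal term and a Gaussian noise term.

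For the signal, I use the eigenrelation $\bM\bw_\ell = \lambda_\ell \bw_\ell$ with $\bM = \sum_{k=1}^j \bv_k^{\otimes 2}$, i.e.\ $\sum_k \langle \bv_k,\bw_\ell\rangle \bv_k = \lambda_\ell \bw_\ell$. Taking inner product with $\bl$ and dividing by $\lambda_\ell$ collapses the $k$-sum to $\langle \bw_\ell,\bl\rangle = x^*_\ell$; the outer $t'$-sum combined with the prefactor $\lambda c_{L\ref{lemma:x_cond_on_y}}$ contributes exactly $1$ by the choice of range length $T_0 = 1/(\lambda c_{L\ref{lemma:x_cond_on_y}})$. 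For the noise, the same identity gives $\sum_k \langle \bv_k,\bw_\ell\rangle^2 = \bw_\ell^\top \bM \bw_\ell = \lambda_\ell$, so the resulting Gaussian has variance $(\lambda c_{L\ref{lemma:x_cond_on_y}})^2 \cdot T_0 \cdot \lambda_\ell/\lambda_\ell^2 = c_{L\ref{lemma:x_cond_on_y}}\lambda/\lambda_\ell$, matching the claim. Independence from $\bl$ is inherited from independence of the $w_{k,t'}$.

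Second, to derive the hypothesis of Lemma \ref{lemma:function_f} for $\bz(\hat{\by}) = \E[\bx^* \mid \hat{\by}]$, I invoke Lemma \ref{lemma:x_cond_on_y} with the observation model just established. The input vector $\bx^* \in \mathbb{R}^{\ell_\lambda}$ inherits a one-sided moment bound from Assumption \ref{assum:all}: for any unit $\bv \in \mathbb{R}^{\ell_\lambda}$, the combination $\sum_i v_i \bw_i$ is a unit vector in $\mathbb{R}^d$, so Assumption \ref{assum:all}(\ref{it:pos_bound}) applied to that direction gives $\Pr(\langle \bv,\bx^*\rangle \geq \cpd) \geq \epd$, hence $\E[\langle \bv,\bx^*\rangle_+] \geq \cpd\epd$. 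The coordinatewise noise variance is $c_{L\ref{lemma:x_cond_on_y}}\lambda/\lambda_\ell \leq c_{L\ref{lemma:x_cond_on_y}}$ since $\lambda_\ell \geq \lambda$ by definition of $\ell_\lambda$, putting us in the regime for which Lemma \ref{lemma:x_cond_on_y} is designed to preserve the one-sided moment (up to a factor of $1/4$), yielding $\epsilon = \epd\cpd/4$ as claimed.

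The main obstacle is bookkeeping rather than any new idea: one must verify that the noises $w_{k,t'}$ appearing across the batches $\{q_k^{t'}\}$ are jointly independent and independent of $\bl$ (which holds because each batch was collected non-adaptively after $\bv_k$ was fixed), track the eigendecomposition of $\bM$ as it refers to the snapshot at the current iteration of the outer while-loop, and confirm that $\kappa \geq 1/(\lambda c_{L\ref{lemma:x_cond_on_y}})$ so that enough samples exist in each batch. Once these are in place, the signal/noise decomposition and the invocation of Lemma \ref{lemma:x_cond_on_y} are immediate.
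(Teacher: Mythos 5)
Your proposal is correct and follows essentially the same route as the paper: the same signal/noise decomposition of $\hat{y}_\ell$, the same variance computation via $\sum_k \langle \bv_k,\bw_\ell\rangle^2 = \bw_\ell^\top \bM \bw_\ell = \lambda_\ell$, and the same verification of the half-space condition for $\bx^*$ before invoking Lemma~\ref{lemma:x_cond_on_y}. The only cosmetic difference is that you use the eigenrelation $\bM\bw_\ell=\lambda_\ell\bw_\ell$ directly where the paper cites Lemma~\ref{lemma:rewriting_as_lin_combo} with $\bu=\bw_\ell$, which amounts to the same identity.
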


\subsection{Propositions and Theorem}

As discussed in Section \ref{sec:initial_exploration}, the main purpose of Algorithm \ref{algo:first_exploration} is to find a BIC action $\ba$ that has sufficiently high magnitude when projected onto the space of not-yet-sufficiently explored actions. This is formalized in Proposition \ref{prop:first_exploration}. 
As discussed in Section \ref{sec:exponential_growth}, the main purpose of Algorithm \ref{alg:explore_a_bit_more} is to double the magnitude of $\ba$ when projected onto the space of not-yet-sufficiently explored actions. This is formalized in Proposition \ref{prop:explore_a_bit_more}.

\begin{proposition}[Proof in Appendix~\ref{proof:prop:first_exploration}]
\label{prop:first_exploration}
    The action $\ba$ returned by Algorithm \ref{algo:first_exploration} satisfies 
    \[
        \norm{\mathcal{P}_{S^{\perp}}(\ba)} \ge \frac{c_{L\ref{lemma:explore_with_small_probability}}\cvar^{2.5}\epd \cpd}{16\left(\csg\sqrt{\pi}+1\right)} := c_{\mathrm{P}\ref{prop:first_exploration}}\cvar^{2.5}\epd \cpd.
    \]
\end{proposition}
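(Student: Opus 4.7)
The BIC property of $\ba$ is immediate from Lemma~\ref{lemma:bic_sphere}, so the substance of Proposition~\ref{prop:first_exploration} is the magnitude bound $\|\mathcal{P}_{S^{\perp}}(\ba)\|\ge c_{\mathrm{P}\ref{prop:first_exploration}}\cvar^{2.5}\epd\cpd$. My plan is to assemble two structural facts about the randomized signal $R$ produced inside Algorithm~\ref{algo:first_exploration}, and then invoke Lemma~\ref{lemma:explore_with_small_probability} once to finish.

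The first fact is that whenever $\Psi=1$, the intermediate action $\bA^{(t)}$ lies in $S^{\perp}$. By Lemma~\ref{lemma:by_to_bx}, the function $f$ obtained from applying Lemma~\ref{lemma:function_f} to $\bz(\hat{\by})$ (with $\epsilon=\epd\cpd/4$) satisfies $\E[\bz(\hat{\by})f(\bz(\hat{\by}))]=\boldsymbol{0}$. Since $\Psi\sim\mathrm{Bernoulli}(f(\bz(\hat{\by})))$, the tower rule gives $\E[\bz(\hat{\by})\mid\Psi=1]=\boldsymbol{0}$; using $\bz(\hat{\by})=\E[\bx^*\mid\hat{\by}]$ and that $\Psi$ is a deterministic function of $\hat{\by}$ together with an independent coin, this collapses to $\E[\bx^*\mid\Psi=1]=\boldsymbol{0}$, i.e.\ $\E[\bl\mid\Psi=1]$ has no $S$-component. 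Hence on $\{\Psi=1\}$, $\bA^{(t)}=\expl(\Psi,\bw_{\ell_\lambda+1})$ equals some unit vector $\bw^*\in S^{\perp}$. The second fact is that the signal-keeping event $E:=\{R=r^{(t)}\}$ has probability $p:=\frac{\epd\cpd}{16(\csg\sqrt\pi+1)}$, independent of $(\bl,\hat{\by})$: a direct calculation gives $\P(E\mid\bz)=f(\bz)\cdot\frac{\epd\cpd}{16(\csg\sqrt\pi+1)f(\bz)}=p$. Consequently $R$ admits a clean two-component mixture: with probability $p$, $R=\langle\bw^*,\bl\rangle+w_t$ with $\bw^*\in S^{\perp}$; with probability $1-p$, $R$ is an $N(0,1)$ random variable independent of $\bl$.

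This two-component structure is precisely the hypothesis of Lemma~\ref{lemma:explore_with_small_probability}, which (as the $\cvar^{5/2}$ factor already advertised in the statement of the proposition suggests) produces the lower bound $\|\mathcal{P}_{S^{\perp}}(\E[\bl\mid 1_{R>0}])\|\ge c_{L\ref{lemma:explore_with_small_probability}}\cvar^{5/2}\,p$, together with a matching control on $\|\E[\bl\mid 1_{R>0}]\|$ that lets the unit-norm renormalization $\ba=\E[\bl\mid 1_{R>0}]/\|\E[\bl\mid 1_{R>0}]\|$ inherit the bound up to an absolute constant. The main obstacle is entirely concentrated inside Lemma~\ref{lemma:explore_with_small_probability}: beginning from a single sign-bit of a noisy reward that is informative only on the event $E$ of tiny probability $p$, one must argue that the posterior mean tilts by $\Omega(\cvar^{5/2}p)$ in the direction $\bw^*$. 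The essential ingredients there will be (i) the $\Omega(\cvar^{1/2})$ typical magnitude of $\langle\bw^*,\bl\rangle$ coming from Assumption~\ref{assum:all}(\ref{it:var_bound}), which controls how much $\mathrm{sign}(\langle\bw^*,\bl\rangle+w_t)$ discloses about $\bl$, and (ii) the further loss incurred when translating that one-bit disclosure into a vector-valued tilt of $\E[\bl]$ along $\bw^*$ while simultaneously upper-bounding $\|\E[\bl\mid 1_{R>0}]\|$ via Assumption~\ref{assum:all}(\ref{it:subgauss}). With Lemma~\ref{lemma:explore_with_small_probability} in hand, the remainder of the proof is the bookkeeping sketched above.
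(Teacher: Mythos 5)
Your proposal is correct and follows essentially the same route as the paper's proof: establish $\E[\bx^*\mid\Psi=1]=\boldsymbol{0}$ so that on $\{\Psi=1\}$ the exploratory action is a fixed unit vector in $S^{\perp}$, check that the signal-keeping probability equals $\frac{\epd\cpd}{16(\csg\sqrt{\pi}+1)}$ uniformly over $\hat{\by}$, and then apply Lemma~\ref{lemma:explore_with_small_probability} to $X=\langle\bA,\bl\rangle$, whose variance is at least $\cvar$ by Assumption~\ref{assum:all}. The only wrinkle is that Lemma~\ref{lemma:explore_with_small_probability} does not itself provide the upper bound on $\norm{\E[\bl\mid 1_{R>0}]}$ that you invoke for the renormalization step (the paper instead lower-bounds $\norm{\mathcal{P}_{S^{\perp}}(\ba)}$ via $|\langle\bA,\ba\rangle|$ directly); the needed control follows easily from sub-gaussianity of $\bl$, so this is a presentational detail rather than a gap.
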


\begin{proposition}[Proof in Appendix~\ref{proof:prop:explore_a_bit_more}]
\label{prop:explore_a_bit_more}
    The action returned by Algorithm \ref{alg:explore_a_bit_more} satisfies
    \[ \norm{\mathcal{P}_{S^{\perp}}(\bb)} \ge 2\norm{\mathcal{P}_{S^{\perp}}(\ba)}.
    \]
\end{proposition}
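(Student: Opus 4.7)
The plan is to treat $\bb$ as a unit vector in the direction of the conditional mean $\E[\bl\mid 1_{R>0}]$ (which it is by Lemma~\ref{lemma:bic_sphere}, modulo a degenerate case) and to invoke Lemmas~\ref{lemma:small_weight_cond_exp_newest} and~\ref{lemma:small_change_in_other_weights} to control its components along and orthogonal to the current target direction. The first step is to identify $R$ as a one-dimensional Gaussian signal of $\langle\mathcal{P}_{S^\perp}(\ba),\bl\rangle$. Using $r^{(t')}=\langle \ba,\bl\rangle+w_{t'}$, the decomposition $\mathcal{P}_S(\ba)=\sum_k c_k\bv_k$ from Lemma~\ref{lemma:rewriting_as_lin_combo}, and the representation of each $q_k^{t'}$ as $\langle\bv_k,\bl\rangle$ plus independent Gaussian noise, I can write
\[
r^{(t')}-\sum_{k=1}^{j}c_k q_k^{t'}
=\langle\mathcal{P}_{S^\perp}(\ba),\bl\rangle + w_{t'}-\sum_k c_k\tilde w_{k,t'}.
\]
Summing over the $L$ steps on which $\bA^{(t')}=\ba$ shows that, conditionally on $\bl$, the quantity $R/L$ equals $\langle\mathcal{P}_{S^\perp}(\ba),\bl\rangle$ plus independent Gaussian noise of variance $(1+\sum_k c_k^2)/L$; the choice of $L$ on Line~\ref{line:L_def} is precisely tuned to pin this variance to a constant of order $1/\bigl(d(\E[\bl_1]+1)^2\bigr)$, independent of the $c_k$'s.

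Setting $\bx:=\mathcal{P}_{S^\perp}(\ba)/\norm{\mathcal{P}_{S^\perp}(\ba)}$ and noting $\bx\in S^\perp$, one has $\norm{\mathcal{P}_{S^\perp}(\bb)}\geq |\langle\bb,\bx\rangle| = |\langle\E[\bl\mid 1_{R>0}],\bx\rangle|/\norm{\E[\bl\mid 1_{R>0}]}$, which reduces the claim to a numerator lower bound and a denominator upper bound. Lemma~\ref{lemma:small_weight_cond_exp_newest}, applied to the noisy sign $1_{R>0}$ of the calibrated signal on $\langle\bx,\bl\rangle$, yields a bound of the form $|\langle\E[\bl\mid 1_{R>0}],\bx\rangle|\geq C_1\cdot \norm{\mathcal{P}_{S^\perp}(\ba)}$ with $C_1$ an explicit constant depending on $\delta_{L\ref{lemma:small_weight_cond_exp_newest}}, c_{L\ref{lemma:small_weight_cond_exp_newest}}, \cvar$. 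Lemma~\ref{lemma:small_change_in_other_weights} then controls how much the posterior mean drifts in directions other than $\bx$ when conditioning on $1_{R>0}$, giving $\norm{\E[\bl\mid 1_{R>0}]}\leq \norm{\E[\bl]}+C_2\leq \csg\sqrt\pi+C_2$ via the sub-Gaussian assumption. The constants in $L$ and in the definition of $\lambda$ from Section~\ref{sec:main_theorem} are chosen so that $C_1/(\csg\sqrt\pi+C_2)\geq 2$, which delivers the desired $\norm{\mathcal{P}_{S^\perp}(\bb)}\geq 2\norm{\mathcal{P}_{S^\perp}(\ba)}$.

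Two edge cases need brief handling. If $\E[\bl\mid 1_{R>0}]=\mathbf{0}$ then $\bb=\bw_{\ell_\lambda+1}\in S^\perp$ and $\norm{\mathcal{P}_{S^\perp}(\bb)}=1$, which beats $2\norm{\mathcal{P}_{S^\perp}(\ba)}$ since $\norm{\mathcal{P}_{S^\perp}(\ba)}\leq\sqrt\lambda\leq 1/2$ by the inner while-loop termination condition of Algorithm~\ref{alg:non_ind_eps_BIC} and the definition of $\lambda$. On the complementary event $\{R\leq 0\}$, the same bookkeeping applies with signs swapped, since both lemmas are symmetric with respect to the sign of the noisy signal.

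The main obstacle will be the quantitative calibration in the second paragraph. Fixing $L=\Theta\bigl(d(\E[\bl_1]+1)^2(1+\sum_k c_k^2)\bigr)$ is the unique scaling that simultaneously makes the signal-to-noise ratio in $R$ a controlled constant (so Lemma~\ref{lemma:small_weight_cond_exp_newest} applies with useful parameters) and absorbs the extra variance introduced by the subtraction of $\sum_k c_k q_k^{t'}$ (so Lemma~\ref{lemma:small_change_in_other_weights} gives a tight bound on $\norm{\E[\bl\mid 1_{R>0}]}$). Obtaining exactly a factor of $2$ (rather than merely $1+\epsilon$) requires that the constants $\delta_{L\ref{lemma:small_weight_cond_exp_newest}}$, $\delta_{L\ref{lemma:small_change_in_other_weights}}$, and $c_{L\ref{lemma:small_change_in_other_weights}}$ interact correctly through the algebraic definition of $\lambda$, and verifying this threading is where the bulk of the technical work lies.
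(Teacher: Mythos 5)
Your approach matches the paper's: rewrite $R$ as a one-dimensional Gaussian signal $L\norm{\mathcal{P}_{S^\perp}(\ba)}\langle\bx,\bl\rangle + W$ with $W\sim N(0,\sigma_W^2)$, $\sigma_W^2=L(1+\sum_k c_k^2)$; then bound $\norm{\mathcal{P}_{S^\perp}(\bb)}\ge|\langle\bx,\by\rangle|/\norm{\by}$ for $\by:=\E[\bl\mid 1_{R>0}]$ using Lemma~\ref{lemma:small_weight_cond_exp_newest} for the numerator and Lemma~\ref{lemma:small_change_in_other_weights} for the denominator. The decomposition, the role of the two lemmas, the reduction to a ratio bound, and the observation that the calibration of $L$ is the crux are all exactly the paper's steps.

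One imprecision worth flagging: you describe the denominator bound as $\norm{\E[\bl\mid 1_{R>0}]}\le\csg\sqrt\pi+C_2$, suggesting a dimension-free drift $C_2$. Lemma~\ref{lemma:small_change_in_other_weights} only gives a \emph{coordinate-wise} drift of order $c_{L\ref{lemma:small_change_in_other_weights}}\epsilon/\sigma\le 1$, so the vector-valued drift is $O(\sqrt d)$, not $O(1)$. The paper's denominator bound is in fact $\norm{\by}\le\sqrt d\,(\E[\bl_1]+1)$, and the factor $\sqrt d\,(\E[\bl_1]+1)$ appears on \emph{both} sides of the ratio by design of $L=4d(\E[\bl_1]+1)^2(1+\sum_k c_k^2)/c_{L\ref{lemma:small_weight_cond_exp_newest}}^2$: the numerator lower bound works out to $2\sqrt d\,(\E[\bl_1]+1)\norm{\mathcal{P}_{S^\perp}(\ba)}$ and cancels the denominator exactly, leaving the factor $2$. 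Your framing makes it look like the factor $2$ comes from comparing two dimension-free constants; the actual mechanism is a dimension-dependent cancellation, which is why $L$ must scale linearly in $d$. Also, your edge case $\E[\bl\mid 1_{R>0}]=\mathbf 0$ is vacuous: the numerator lower bound from Lemma~\ref{lemma:small_weight_cond_exp_newest} is strictly positive, so $\by\ne\mathbf 0$ always.
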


We can now state our main theorem bounding the sample complexity of Algorithm \ref{alg:non_ind_eps_BIC}. 

\begin{thm}[Proof in Appendix \ref{proof:thm:number_of_steps}]
\label{thm:number_of_steps}
  Algorithm \ref{alg:non_ind_eps_BIC} is BIC and has sample complexity
   \[
      O\left(\log\left(\frac{1}{\cvar \epd \cpd}\right)\left(\frac{d^5}{ \lambda^4\cvar^2} + \frac{d^4}{\cpd^2\lambda^4}\right)\right).
   \]
\end{thm}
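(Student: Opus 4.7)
}
The theorem makes two claims: Algorithm~\ref{alg:non_ind_eps_BIC} is BIC, and it satisfies a specified sample-complexity bound. The BIC half is easier. Every action the algorithm plays falls into one of three categories: (i) the initial-block action $\textbf{e}_1$, which is BIC under the prior because the WLOG normalization makes $\E[\bl]$ a non-negative multiple of $\textbf{e}_1$; (ii) an action $\expl(\psi,\bv)$ chosen inside InitialExploration or ExponentialGrowth, with $\psi$ a measurable function of past history and independent randomness, and therefore BIC by Lemma~\ref{lemma:bic_sphere}; and (iii) a re-use of a previously produced $\ba$ during the sufficient-exploration block on Line~\ref{line:sufficient_explore_action}. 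Case~(iii) still reduces to Lemma~\ref{lemma:bic_sphere} because the $\psi$ that produced $\ba$ at its earlier timestep $\tau$ is still a function of the history at any later time $t\ge\tau$.

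For the sample complexity I would write the total count as a sum over the $J$ outer while-loop iterations, each consisting of one InitialExploration step, $N_{\mathrm{inner}}$ ExponentialGrowth calls of length $L$, and one final $\kappa$-step block. First, to bound $J$, apply Lemma~\ref{lemma:next_largest_eigenvalue} with $\epsilon=\lambda$. The inner-loop exit condition together with Propositions~\ref{prop:first_exploration}--\ref{prop:explore_a_bit_more} ensures $\norm{\mathcal{P}_{S^\perp}(\bv_{j+1})}^2\ge\lambda$ for every appended vector. Setting $T=200d^3/\lambda^2$, define $\hat\Phi_j:=\sum_i\min(\lambda_i,T)$ where $\{\lambda_i\}$ are the eigenvalues of $\bM^{(j)}=\sum_{k=1}^j\bv_k^{\otimes 2}$; this is bounded above by $dT$. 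The lemma gives $\hat\Phi_{j+1}-\hat\Phi_j\ge\lambda/2-o_j$, where $o_j$ is the ``overshoot'' at step $j$ (the amount by which any newly crossed eigenvalue exceeds $T$); since $o_j\le\norm{\bv_{j+1}}^2\le 1$ and at most $d$ iterations can ever involve a crossing (each eigenvalue crosses $T$ at most once, by monotonicity of sorted eigenvalues under PSD rank-one updates), the cumulative loss is $\sum_j o_j\le d$. Combining, $J\cdot\lambda/2-d\le dT$, giving $J=O(d^4/\lambda^3)$. Second, Propositions~\ref{prop:first_exploration} and~\ref{prop:explore_a_bit_more} give $N_{\mathrm{inner}}=O(\log(1/(\cvar\epd\cpd)))$: start from $\norm{\mathcal{P}_{S^\perp}(\ba)}\ge c_{\mathrm P\ref{prop:first_exploration}}\cvar^{2.5}\epd\cpd$ and double until exiting at $\sqrt\lambda$. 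Third, the per-call length $L$ from Line~\ref{line:L_def} is controlled via Lemma~\ref{lemma:rewriting_as_lin_combo}: $\sum_k c_k^2=\langle\mathcal{P}_S(\ba),\bM_S^{-1}\mathcal{P}_S(\ba)\rangle\le\norm{\mathcal{P}_S(\ba)}^2/\lambda\le 1/\lambda$, so $L=O(d/\lambda)$. Plugging these bounds into $(J+1)\kappa+J\cdot(1+N_{\mathrm{inner}}\cdot L)$ with $\kappa=O(d/(\lambda\cvar^2))$ recovers the claimed complexity.

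The main obstacle is the outer-iteration bound. Lemma~\ref{lemma:next_largest_eigenvalue} only gives information across a \emph{single} rank-one update, and iterating naively fails because eigenvalues crossing the threshold $T$ leave the running sum and the potential can drop. The technical heart is the amortization of overshoots against the $d$ total crossings available, which reduces the cumulative loss from $O(J)$ to $O(d)$. Once that is in place, the remaining steps---the doubling-based count of inner iterations and the linear-algebraic bound on $\sum_k c_k^2$---are essentially bookkeeping, and the logarithmic dependence on $\epd$ in the final bound enters exclusively through $N_{\mathrm{inner}}$.
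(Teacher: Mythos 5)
Your proposal follows essentially the same route as the paper: the same four-case (here three-case) BIC argument via Lemma~\ref{lemma:bic_sphere}, the same decomposition into outer iterations $\times$ (InitialExploration $+$ doubling inner loop $+$ $\kappa$-block), the same use of Propositions~\ref{prop:first_exploration}--\ref{prop:explore_a_bit_more} and Lemma~\ref{lemma:rewriting_as_lin_combo}, and the same use of Lemma~\ref{lemma:next_largest_eigenvalue} to bound the outer loop. Your truncated-sum potential $\hat\Phi_j=\sum_i\min(\lambda_i,T)$ with overshoot amortization is just a repackaging of the paper's two-case bookkeeping with $(\ell^j,\Delta^j)$, where $\Delta^j=\sum_{i>\ell^j}(T-\lambda_i)$, since $\hat\Phi_j = dT-\Delta^j$; both give $J=O(d^4/\lambda^3)$.

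One bookkeeping slip: the algorithm's $\kappa$ is $\max\bigl(\frac{1}{\lambda c_{L\ref{lemma:x_cond_on_y}}},\, O(\frac{d}{\lambda\cvar^2})\bigr)$ with $\frac{1}{\lambda c_{L\ref{lemma:x_cond_on_y}}} = \Theta\bigl(\frac{\log(1/\epd)}{\lambda\cpd^2}\bigr)$, not merely $O(d/(\lambda\cvar^2))$ as you state; this first branch is exactly what produces the $\frac{d^4}{\cpd^2\lambda^4}$ term (with its $\log$ factor) in the theorem, so with your stated $\kappa$ the plug-in would not "recover the claimed complexity" as written. Likewise $L$ carries a $1/\cvar^2$ factor (it is $O(d/(\lambda\cvar^2))$, with $\csg$ absorbed as a constant as in the paper), not $O(d/\lambda)$. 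Both are easy fixes and do not affect the structure of your argument.
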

As discussed above, for any $\bar\lambda$, we can repeat Algorithm \ref{alg:non_ind_eps_BIC} for $\left\lceil \bar\lambda/\lambda  \right\rceil$ times to get $\bar\lambda$-spectral exploration. This is because trace is additive, and therefore if running Algorithm \ref{alg:non_ind_eps_BIC} once gives $\lambda$-spectral exploration, then running it $\left\lceil\bar\lambda/\lambda  \right\rceil$ times will give $\bar\lambda$-spectral exploration. Multiplying the bound from Theorem \ref{thm:number_of_steps} by $\left\lceil \bar\lambda/\lambda  \right\rceil$  gives that $\bar\lambda$-spectral exploration is achievable in $\bar\lambda \lt(\frac{d}{\cvar + \cpd }\rt)^{O(1)}\log(1/\epd)$ rounds, matching the desired result of Theorem \ref{thm:number_of_steps_introduction}.

\section{Discussion}
\label{sec:discussion}
We conclude with a more detailed discussion on how our results can be combined with the results of \cite{sellke2023incentivizing} to achieve end-to-end guarantees for incentivized exploration. 
Here we focus on $r$-regular $\mu$ as assumed in that work, which is encapsulated by Proposition~\ref{prop:r-regular}.
Recall \cite[Theorem 3.5]{sellke2023incentivizing} shows that for $\epsilon > 0$, if an algorithm has already achieved $\tilde{O}(d^4/r^2\epsilon^2)$-spectral exploration at time $t$, then running Thompson sampling from time $t$ onward will be $\epsilon$-BIC (where $\epsilon$-BIC relaxes Definition \ref{def:bic} by subtracting an $\epsilon$ term from the right-hand side).

Our main result (Theorem \ref{thm:number_of_steps_introduction}) efficiently achieves the necessary spectral exploration, with at most $\poly(d,1/r,1/\epsilon)$ sample complexity (and thus additional regret). Note that our algorithm actually gives a stronger guarantee than in \cite{sellke2023incentivizing} (BIC rather than $\epsilon$-BIC). If we only need to guarantee the initial exploration is $\epsilon$-BIC, then we no longer need the InitialExploration phase of the algorithm, and therefore can drop Assumption \ref{it:pos_bound}. 
Combining our result with \cite{sellke2023incentivizing} and the analysis of Thompson sampling in \cite{dong2018information} or \cite{Shipra-icml13}, we therefore obtain an end-to-end $\epsilon$-BIC algorithm with respectively Bayesian regret $\poly(d,1/r,1/\epsilon)+\tilde O(d\sqrt{T})$ or frequentist regret $\poly(d,1/r,1/\epsilon)+\tilde O(d^{3/2}\sqrt{T})$. Namely, one first runs our algorithm for $\poly(d,1/r,1/\epsilon)$ steps to guarantee the required spectral exploration, and then uses Thompson Sampling for all remaining steps. 
Since the regret from the initial exploration phase is constant relative to $T$ (and polynomial in $d$), this combined algorithm will asymptotically obey the state-of-the-art regret bounds for Thompson Sampling.


{\footnotesize
\bibliographystyle{alphaabbr}
\bibliography{bib}
}

\newpage

\appendix

\section{Technical Lemmas}\label{sec:technical_lemmas}
In this appendix, we introduce a series of technical lemmas that form the basis of our results. Our proofs rely on carefully analyzing how the conditional expectation of $\bl$ changes when conditioning on different signals $\psi$. Lemmas \ref{lemma:explore_with_small_probability}-\ref{lemma:x_cond_on_y} are the main tools we use to analyze the behavior of these conditional expectations.

Lemma \ref{lemma:explore_with_small_probability} says that for random variable $X$, if we have some signal $R$ that is equal to $X$ plus noise with some small probability and is just pure noise otherwise, then the conditional expectation of $X$ given the sign of $R$ has magnitude that is $\Omega(\epsilon)$. In the ``initial exploration`` phase of our algorithm we explore a new (not previously explored) direction with very small probability. Lemma \ref{lemma:explore_with_small_probability} implies that this exploration will lead to the conditional expectation of $\bl$ in the newly-explored direction having magnitude proportional to the probability of exploration.

    \begin{lemma}
    \label{lemma:explore_with_small_probability}
        Suppose $X$ is a real-valued $\csg$-sub-gaussian random variable such that $\E[X] = 0$, $\E[X^2] = \sigma_X^2$. Let $R \sim X  + N(0, 1)$ with probability $\epsilon$ and $R \sim N(0, 1)$ with probability $1-\epsilon$. Then there exists $c_{L\ref{lemma:explore_with_small_probability}}$ independent of $X$ such that
        \[
            \left|\E[X \mid 1_{R > 0}]\right| \ge c_{L\ref{lemma:explore_with_small_probability}}\sigma_X^5\epsilon.
        \]
    \end{lemma}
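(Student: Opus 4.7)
The plan is to reduce the inequality to a lower bound on $\E[X\Phi(X)]$ where $\Phi$ is the standard normal CDF. Let $E$ be the event that $R$ is drawn from $X+N(0,1)$; then $E$ has probability $\epsilon$ and is independent of $X$, and on $E^c$ the variable $R$ is independent of $X$, so $\E[X\mathbf{1}_{R>0}\mid E^c]=0$. This gives
\[
\E[X\,\mathbf{1}_{R>0}] \;=\; \epsilon\,\E[X\Phi(X)] \;=\; \epsilon\,\E\bigl[X(\Phi(X)-\tfrac12)\bigr],
\]
where the second equality uses $\E[X]=0$. From $\E[X]=0$ we also get $\E[X\mathbf{1}_{R>0}]=\E[X\mid R>0]\,\P(R>0)=-\E[X\mid R\le 0]\,\P(R\le 0)$, and since both probabilities lie in $[0,1]$ this shows $|\E[X\mid\mathbf{1}_{R>0}]|\ge|\E[X\mathbf{1}_{R>0}]|$ almost surely. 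So it suffices to prove
\[
\E\bigl[X(\Phi(X)-\tfrac12)\bigr] \;\ge\; c(\csg)\,\sigma_X^5
\]
for some $c(\csg)>0$ depending only on $\csg$.

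Let $h(x)=x(\Phi(x)-\tfrac12)$. Then $h$ is even and non-negative (both factors share a sign), and satisfies $h(x)\ge c_1\min(x^2,|x|)$ for an absolute $c_1>0$: for $|x|\le 1$ one uses $\Phi(x)-\tfrac12\ge|x|\phi(1)$ by monotonicity of the Gaussian density on $[0,\infty)$, and for $|x|\ge 1$ one uses $\Phi(x)-\tfrac12\ge\Phi(1)-\tfrac12$. The plan is then a case split on where the second moment of $X$ concentrates. If $\E[X^2\mathbf{1}_{|X|\le 1}]\ge\sigma_X^2/2$, then directly $\E[h(X)]\ge c_1\sigma_X^2/2$. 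Otherwise $\E[X^2\mathbf{1}_{|X|>1}]\ge\sigma_X^2/2$, and Cauchy--Schwarz against the factorization $X^2=|X|^{1/2}\cdot|X|^{3/2}$ gives
\[
\bigl(\E[X^2\mathbf{1}_{|X|>1}]\bigr)^2 \;\le\; \E\bigl[|X|\mathbf{1}_{|X|>1}\bigr]\cdot\E\bigl[|X|^3\mathbf{1}_{|X|>1}\bigr].
\]
Sub-gaussianity upper bounds $\E[|X|^3]\le C\csg^3$ for an absolute $C$ (via $\int_0^\infty 3t^2\P(|X|\ge t)\,dt$), so $\E[|X|\mathbf{1}_{|X|>1}]\ge\sigma_X^4/(4C\csg^3)$ and hence $\E[h(X)]\ge c_1\sigma_X^4/(4C\csg^3)$.

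Finally, $\csg$-sub-gaussianity also forces $\sigma_X\le\sqrt{2}\,\csg$, so $\sigma_X^2\ge\sigma_X^5/(\sqrt{2}\csg)^3$ and $\sigma_X^4\ge\sigma_X^5/(\sqrt{2}\csg)$. Plugging these into the two cases yields $\E[h(X)]\ge c(\csg)\sigma_X^5$, which combined with the reduction completes the proof. The main subtlety is the second case: the Cauchy--Schwarz step unavoidably loses powers of a higher moment, and the role of sub-gaussianity is to keep $\E[|X|^3]$ bounded by a $\csg$-dependent constant so that a $\sigma_X^5$ lower bound holds uniformly across all admissible distributions of $X$. The exponent $5$ is slightly wasteful but more than sufficient for the downstream applications.
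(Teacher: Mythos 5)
Your proof is correct, and while the opening reduction matches the paper's, the key estimate is handled by a genuinely different argument. Both you and the paper condition on the Bernoulli mixture event to reduce the problem to lower-bounding $\E[X\,\Phi(X)]=\E[X(\Phi(X)-\tfrac12)]$ (the paper writes this as $\int x\,\Phi^C(-x)\,d\mu_X$ and bounds $\E[X\mid X+Z>0]\P(X+Z>0)\epsilon$; your observation that $|\E[X\mid \mathbf{1}_{R>0}]|\ge|\E[X\mathbf{1}_{R>0}]|$ on both branches, using $\E[X]=0$, is a cleaner way to cover the $R\le 0$ case, which the paper dispatches by symmetry). Where you diverge is in lower-bounding $\E[h(X)]$ for $h(x)=x(\Phi(x)-\tfrac12)$: the paper keeps only the region $|x|\ge \sigma_X/\sqrt{10}$, bounds the integrand there by $\tfrac{\sigma_X}{\sqrt{10}}\P(0\le Z\le \sigma_X/\sqrt{10})$, and then proves an anti-concentration bound $\P(|X|>\sigma_X/\sqrt{10})\ge \frac{4\sigma_X^2}{5\csg^2\log(20\csg^2/\sigma_X^2)}$ via a truncated tail-sum argument, which after absorbing the logarithm yields $\Omega(\sigma_X^5)$. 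You instead use the pointwise bound $h(x)\ge c_1\min(x^2,|x|)$ together with a case split on whether the second moment lives on $\{|X|\le 1\}$ or $\{|X|>1\}$, handling the heavy case by Cauchy--Schwarz against $\E[|X|^3]\lesssim \csg^3$. Your route avoids the logarithmic loss entirely and in fact gives the stronger bound $\Omega(\sigma_X^4/\csg^3)$ (so $\sigma_X^5$ follows from $\sigma_X\le\sqrt{2}\csg$), at the cost of the mild bookkeeping of the two cases; both arguments produce a constant depending only on $\csg$, which is what the lemma requires.
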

    The proof of Lemma \ref{lemma:explore_with_small_probability} can be found in Appendix \ref{proof:lemma:explore_with_small_probability}.

Lemmas \ref{lemma:small_weight_cond_exp_newest} and \ref{lemma:small_change_in_other_weights} are the main technical tools that allow for us to exponentially grow the amount of exploration in any new direction. Informally, Lemma \ref{lemma:small_weight_cond_exp_newest} says that even if $X$ forms only an $\epsilon$ fraction of the signal $r$, conditioning on the sign of $r$ will increase the conditional expectation of $X$ by a multiplicative factor. This lemma will be applied to the expectation of $\bl$ in the new direction we are trying to explore. Lemma \ref{lemma:small_change_in_other_weights} says that any random variable conditioned on the sign of $r$ cannot have conditional expectation increase by more than $O(\epsilon)$. This will be applied to the expectation of $\bl$ in all of the directions that we have already explored. These two lemmas combined allow our algorithm to multiplicatively increase the magnitude of the expectation of $\bl$ in an unexplored direction \emph{relative} to the already explored directions.

\begin{lemma}\label{lemma:small_weight_cond_exp_newest}
    Let $X$ be a $\csg$-sub-gaussian random variable satisfying $\E[X] = 0$ and $\E[X^2] = \sigma_X^2 \ge \cvar$. For $Z \sim N(0,\sigma^2)$ such that $Z \indep X$  and $\epsilon > 0$, define  $r = \epsilon X + Z$. Then there exists a constant $\delta_{L\ref{lemma:small_weight_cond_exp_newest}}$ such that if $\epsilon/\sigma \le \delta_{L\ref{lemma:small_weight_cond_exp_newest}}$,
    \[
       \left|  \E[X \mid 1_{r > 0} ] \right| \ge \frac{\epsilon\sigma_X^2}{2\sigma\sqrt{2\pi}} \ge \frac{\cvar\epsilon}{\sqrt{8\pi}\sigma}:= \frac{c_{L\ref{lemma:small_weight_cond_exp_newest}}\epsilon}{\sigma}.
    \]
\end{lemma}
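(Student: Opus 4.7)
}

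The plan is to reduce the statement to a lower bound on the \emph{unconditional} quantity $|\E[X \cdot 1_{r>0}]|$, then compute this by integrating out $Z$ and Taylor expanding the Gaussian CDF. Since $\E[X]=0$, we have $\E[X \cdot 1_{r>0}] = -\E[X \cdot 1_{r\le 0}]$, so the two conditional expectations $\E[X\mid r>0]$ and $\E[X\mid r\le 0]$ have opposite signs and each has absolute value at least $|\E[X \cdot 1_{r>0}]|$ (dividing by a probability $\le 1$). Thus it suffices to show that $|\E[X\cdot 1_{r>0}]|\ge \epsilon\sigma_X^2/(2\sigma\sqrt{2\pi})$.

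Next, I would condition on $X$ and integrate out $Z\sim N(0,\sigma^2)$. This gives $\P(r>0\mid X)=\Phi(\epsilon X/\sigma)$, where $\Phi$ is the standard normal CDF, so
\[
\E[X\cdot 1_{r>0}] \;=\; \E\!\left[X\,\Phi(\epsilon X/\sigma)\right] \;=\; \E\!\left[X\bigl(\Phi(\epsilon X/\sigma)-\tfrac{1}{2}\bigr)\right],
\]
using $\E[X]=0$ in the last step. Note this is automatically nonnegative because $\Phi$ is increasing (so $\Phi(\epsilon X/\sigma)$ and $X$ are positively correlated), which is reassuring but not strictly needed.

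The next step is a first-order Taylor expansion of $\Phi$ around $0$: since $\Phi'(0)=1/\sqrt{2\pi}$ and $|\Phi''|\le \phi(0)=1/\sqrt{2\pi}$ (in fact the second derivative is $-s\phi(s)$, bounded), one obtains the pointwise bound
\[
\Bigl|\Phi(t)-\tfrac{1}{2}-\tfrac{t}{\sqrt{2\pi}}\Bigr|\;\le\; \frac{|t|^3}{6\sqrt{2\pi}}.
\]
Plugging in $t=\epsilon X/\sigma$ and taking expectations yields a main term $\E\!\left[X\cdot \epsilon X/(\sigma\sqrt{2\pi})\right]=\epsilon\sigma_X^2/(\sigma\sqrt{2\pi})$ plus an error bounded by $\epsilon^3\,\E[|X|^4]/(6\sigma^3\sqrt{2\pi})$. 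By the $\csg$-sub-gaussianity of $X$, $\E[|X|^4]\le C\csg^4$ for an absolute constant $C$.

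Finally, I would choose $\delta_{L\ref{lemma:small_weight_cond_exp_newest}}$ small enough (depending only on $\cvar$ and $\csg$) so that whenever $\epsilon/\sigma\le \delta_{L\ref{lemma:small_weight_cond_exp_newest}}$, the error term is at most half the main term; concretely the condition $\epsilon^2 C\csg^4/(6\sigma^2)\le \sigma_X^2/2$ is implied by $(\epsilon/\sigma)^2 \le 3\cvar/(C\csg^4)$ since $\sigma_X^2\ge \cvar$. Then $\E[X\cdot 1_{r>0}]\ge \epsilon\sigma_X^2/(2\sigma\sqrt{2\pi})$, and the lower bound $\sigma_X^2\ge \cvar$ gives the final inequality in the lemma. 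The main (minor) obstacle is being careful that $\delta_{L\ref{lemma:small_weight_cond_exp_newest}}$ depends only on the fixed constants $\cvar,\csg$ and not on the particular $\sigma_X$; this is handled by keeping the bound $\sigma_X^2\ge \cvar$ on the denominator side and bounding $\E[|X|^4]$ by the $\csg$-sub-gaussian moment bound rather than by $\sigma_X$-dependent moments.
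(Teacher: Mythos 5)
Your proposal is correct and follows essentially the same strategy as the paper's proof: reduce to lower-bounding $|\E[X\cdot 1_{r>0}]|$ (since dividing by $\Pr(r>0)\le 1$ or $\Pr(r\le 0)\le 1$ only helps), integrate out $Z$ to obtain $\E\big[X\,\Phi(\epsilon X/\sigma)\big]$, Taylor-expand $\Phi$ around zero to extract the main term $\epsilon\sigma_X^2/(\sigma\sqrt{2\pi})$, and absorb the cubic error using sub-gaussian moment bounds together with a small enough $\delta_{L\ref{lemma:small_weight_cond_exp_newest}}$ depending only on $\cvar,\csg$. One place where your version is a bit cleaner: the paper's Taylor estimate (Lemma~\ref{lemma:inverse_cdf_approx}) is stated only for $|t|\le 1$, which forces a truncation at $|x|\le\sigma/(10\epsilon)$ and a separate sub-gaussian tail estimate via Lemma~\ref{lemma:marks_paper}; you instead invoke a cubic Taylor bound valid for all $t$, which lets you take expectations directly with no truncation. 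One small nit to fix in the writeup: the displayed inequality $\bigl|\Phi(t)-\tfrac12-t/\sqrt{2\pi}\bigr|\le |t|^3/(6\sqrt{2\pi})$ is a third-order (not first-order) estimate; it follows because $\Phi''(0)=0$ and $\sup_t|\Phi'''(t)|=\sup_t|(t^2-1)\phi(t)|=1/\sqrt{2\pi}$ (the supremum is attained at $t=0$, and on $|t|\ge 1$ one has $|(t^2-1)\phi(t)|\le 2\phi(\sqrt3)<1/\sqrt{2\pi}$). The bound $|\Phi''|\le 1/\sqrt{2\pi}$ that you cite would only yield a quadratic remainder, which would not be summable against $\E[X^3]$ cleanly since $\E[X^3]$ has no sign control; the cubic remainder against $\E[X^4]$ is what makes the argument work, as you use.
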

    The proof of Lemma \ref{lemma:small_weight_cond_exp_newest} can be found in Appendix \ref{proof:lemma:small_weight_cond_exp_newest}.

\begin{lemma}\label{lemma:small_change_in_other_weights}
    For $\csg$-sub-gaussian random variables $X,Y$ such that $\E[X] = \E[Y] = 0$ and for $Z \sim N(0,\sigma^2)$ independent of $X$ and $Y$, let $r \sim \epsilon X +Z$. Suppose $\frac{\epsilon}{\sigma}  \le \delta_{L\ref{lemma:small_change_in_other_weights}} := \min\left(1, \frac{1}{2\csg\sqrt{\log(2)}}\right)$. Then there exists a constant $c_{L\ref{lemma:small_change_in_other_weights}} >0$ such that
    \[
        |\E[Y \mid 1_{r > 0} ]| \le c_{L\ref{lemma:small_change_in_other_weights}}\epsilon/\sigma.
    \]
\end{lemma}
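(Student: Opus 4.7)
The plan is a single direct calculation that splits the conditional expectation as
\[
  \E[Y \mid 1_{r > 0} = 1] \;=\; \frac{\E[Y \cdot 1_{r > 0}]}{\P(r > 0)}
\]
and bounds the numerator and denominator separately. For the numerator, I would condition on $(X, Y)$ and integrate out the independent Gaussian $Z \sim N(0, \sigma^2)$, which gives
\[
  \E[Y \cdot 1_{r > 0}] \;=\; \E\lt[Y \cdot \Phi\lt(\tfrac{\epsilon X}{\sigma}\rt)\rt] \;=\; \E\lt[Y \lt(\Phi\lt(\tfrac{\epsilon X}{\sigma}\rt) - \tfrac{1}{2}\rt)\rt],
\]
where $\Phi$ is the standard normal CDF and the second equality uses $\E[Y] = 0$. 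The Lipschitz bound $|\Phi(t) - \tfrac{1}{2}| \le |t|/\sqrt{2\pi}$ (from $\phi(0) = 1/\sqrt{2\pi}$), followed by Cauchy--Schwarz and the sub-gaussian second-moment bound $\E[X^2], \E[Y^2] = O(\csg^2)$, then yields
\[
  \big|\E[Y \cdot 1_{r > 0}]\big| \;\le\; \frac{\epsilon}{\sigma\sqrt{2\pi}} \, \E[|XY|] \;\le\; \frac{\epsilon}{\sigma\sqrt{2\pi}} \sqrt{\E[X^2]\E[Y^2]} \;=\; O\lt(\frac{\csg^2 \epsilon}{\sigma}\rt).
\]

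For the denominator, the same conditioning gives $\P(r > 0) = \E[\Phi(\epsilon X/\sigma)]$. The assumption $\epsilon/\sigma \le 1/(2\csg\sqrt{\log 2})$ is calibrated exactly so that on the event $\{|X| \le 2\csg\sqrt{\log 2}\}$ one has $|\epsilon X/\sigma| \le 1$, hence $\Phi(\epsilon X/\sigma) \ge \Phi(-1)$. The sub-gaussian tail bound $\P(|X| \ge 2\csg\sqrt{\log 2}) \le 2e^{-4\log 2} = 1/8$ then yields $\P(r > 0) \ge \tfrac{7}{8} \Phi(-1)$, an absolute positive constant (and by the same reasoning applied to $-X$, $\P(r \le 0)$ is bounded below as well).

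Dividing these two bounds yields $|\E[Y \mid r > 0]| \le c_{L\ref{lemma:small_change_in_other_weights}} \cdot \epsilon/\sigma$, and the analogous bound on $\E[Y \mid r \le 0]$ follows by using the identity $\P(r > 0)\E[Y \mid r > 0] + \P(r \le 0)\E[Y \mid r \le 0] = \E[Y] = 0$ together with the two-sided lower bounds on $\P(r > 0)$ and $\P(r \le 0)$. There is no deep obstacle: the argument is a direct application of Lipschitz continuity of $\Phi$ combined with standard sub-gaussian tail and moment estimates; the only care required is in tracking constants so as to match the specific threshold $\delta_{L\ref{lemma:small_change_in_other_weights}}$ and to extract an explicit value of $c_{L\ref{lemma:small_change_in_other_weights}}$.
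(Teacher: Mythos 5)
Your proposal is correct, and it follows the same high-level strategy as the paper (write $\E[Y\mid r>0]=\E[Y\,\mathbf 1\{r>0\}]/\P(r>0)$, lower-bound the denominator by a constant, and upper-bound the numerator after subtracting $\tfrac12$ using $\E[Y]=0$), but the numerator bound is obtained by a cleaner and more elementary route. The paper conditions on $X$, introduces $Q(X)=\E[Y\mid X]$, invokes an auxiliary fact that $Q(X)$ inherits sub-gaussianity, and then expands $\Phi^C$ to third order via Lemma~\ref{lemma:inverse_cdf_approx} — which only holds for $|t|\le 1$ and so forces a truncation into $|x|\le\sigma/(10\epsilon)$ and a separate sub-gaussian tail estimate (Lemma~\ref{lemma:marks_paper}). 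You avoid all of this by observing that $\phi(t)\le\phi(0)=1/\sqrt{2\pi}$ gives the \emph{global} Lipschitz bound $|\Phi(t)-\tfrac12|\le |t|/\sqrt{2\pi}$, so that $\E[Y\,\mathbf 1\{r>0\}]=\E[Y(\Phi(\epsilon X/\sigma)-\tfrac12)]$ is bounded in one Cauchy–Schwarz step with no truncation and no appeal to sub-gaussianity of conditional expectations. Your denominator bound is also slightly cleaner: the threshold $\delta_{L\ref{lemma:small_change_in_other_weights}}=1/(2\csg\sqrt{\log 2})$ plugs directly into the tail bound $\P(|X|\ge 2\csg\sqrt{\log 2})\le 1/8$, whereas the paper's algebra on this step is a bit loose. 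Finally, deriving the $r\le 0$ case from the identity $\P(r>0)\E[Y\mid r>0]+\P(r\le 0)\E[Y\mid r\le 0]=0$ is tidier than the paper's appeal to "symmetric arguments." What the paper's heavier machinery buys is reuse: the third-order Taylor expansion and the truncation technique are genuinely needed in the companion Lemma~\ref{lemma:small_weight_cond_exp_newest}, where a \emph{lower} bound is required and the linear term must be isolated with signed control of the error; your Lipschitz bound only delivers one-sided (absolute-value) estimates, which suffices here but not there. Both arguments yield the stated $O(\epsilon/\sigma)$ bound with a constant depending on $\csg$.
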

    The proof of Lemma \ref{lemma:small_change_in_other_weights} can be found in Appendix \ref{proof:lemma:small_change_in_other_weights}.

Lemma \ref{lemma:x_cond_on_y} is a more technical lemma that allows us to better understand the distribution of $\bl$ when we condition on averages based on previous rewards. More specifically, this allows us to apply Lemma \ref{lemma:function_f} to the random variable $\bz$ as described in Section \ref{sec:initial_exploration}. The proof of Lemma \ref{lemma:x_cond_on_y} can be found in Appendix \ref{proof:lemma:x_cond_on_y}.

\begin{lemma}\label{lemma:x_cond_on_y}
    For random variable $\bX$ in $\mathbb{R}^d$, define $\bY = \bX+\bW$ where $\bW \sim N(\boldsymbol{0}, \mathrm{Diag}(\bs))$ and $\bW$ is independent of $\bX$.  Define $\bZ(\bY) = \E[\bX \mid \bY]$. If $\min_{\norm{\bv} = 1} \Pr\left(\langle \bX, \bv \rangle \ge \cpd \right) \ge \epd$ and for all $i \in [1:d]$, $s_i \le c_{L\ref{lemma:x_cond_on_y}} := \frac{\cpd^2/32}{\log(4/\epd)}$ then 
    \[
    \min_{\norm{\bv} = 1} \E[(\langle \bZ(\bY) , \bv \rangle)^+ ] \ge \frac{\epd \cpd}{4}.
    \]
\end{lemma}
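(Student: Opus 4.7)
The plan is to reduce the $d$-dimensional statement to a one-dimensional question about the Gaussian smoothing of $U := \langle \bX, \bv\rangle$, and then handle that with a direct tail split. Fix a unit vector $\bv$ and set $U = \langle \bX, \bv\rangle$, $V = \langle \bW, \bv\rangle$, so $U\indep V$ and $V \sim N(0,\sigma^2)$ with $\sigma^2 = \sum_i v_i^2 s_i \le \max_i s_i \le c_{L\ref{lemma:x_cond_on_y}}$. For any $\bY$-measurable $\phi : \mathbb{R}^d \to [0,1]$, the tower property gives $\E[U\phi(\bY)] = \E[\langle \bZ(\bY), \bv\rangle\,\phi(\bY)]$, and the pointwise inequalities $g\phi \le g^+\phi \le g^+$ (applied to $g := \langle \bZ(\bY), \bv\rangle$) yield $\E[(\langle \bZ(\bY), \bv\rangle)^+] \ge \E[U\phi(\bY)]$. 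I would take $\phi(\bY) = 1_{\langle \bY, \bv\rangle \ge \cpd/2}$; since $\langle \bY, \bv\rangle = U+V$, it then suffices to show
\[
\E\bigl[U \cdot 1_{U+V \ge \cpd/2}\bigr] \;\ge\; \epd\cpd/4.
\]

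Next, I would split this expectation into the three regimes $\{U \ge \cpd\}$, $\{0 \le U < \cpd\}$, and $\{U < 0\}$. On $\{U \ge \cpd\}$, the event $\{U+V \ge \cpd/2\}$ contains $\{V \ge -\cpd/2\}$, and by independence and symmetry of $V$,
\[
\E[U \cdot 1_{U\ge\cpd,\,U+V\ge\cpd/2}] \;\ge\; \cpd\cdot \Pr(U \ge \cpd)\cdot \Pr(V \ge -\cpd/2) \;\ge\; \epd\cpd/2.
\]
The middle regime $0 \le U < \cpd$ contributes nonnegatively, so the only remaining concern is the negative contribution from $\{U < 0\}$.

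On $\{U<0,\, U+V \ge \cpd/2\}$ one must have $V \ge \cpd/2 + |U|$, so the standard Gaussian tail $\Pr(V \ge t) \le e^{-t^2/(2\sigma^2)}$ combined with $(\cpd/2+|U|)^2 \ge \cpd^2/4 + \cpd|U|$ and the elementary calculus bound $|u|\,e^{-\cpd|u|/(2\sigma^2)} \le 2\sigma^2/(\cpd\, e)$ gives
\[
\bigl|\E[U\cdot 1_{U<0,\, U+V\ge\cpd/2}]\bigr| \;\le\; \frac{2\sigma^2}{\cpd\, e}\cdot e^{-\cpd^2/(8\sigma^2)}.
\]
The calibration $c_{L\ref{lemma:x_cond_on_y}} = \cpd^2/(32\log(4/\epd))$ is precisely what makes $\cpd^2/(8\sigma^2) \ge 4\log(4/\epd)$, so $e^{-\cpd^2/(8\sigma^2)} \le (\epd/4)^4$, rendering this term $O(\cpd\epd^4)$ and negligible against $\epd\cpd/2$. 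Tallying the three regimes then yields the desired $\ge \epd\cpd/4$. \textbf{The main obstacle} is precisely this last step: since $\bX$ is only assumed to lie outside a half-space, it could in principle place enormous negative mass at $U<0$, and ruling out the scenario ``heavy negative tail of $U$ + atypically large positive $V$ canceling the bulk contribution'' requires both the Gaussian tail of $V$ and the specific $\log(1/\epd)$ calibration of $c_{L\ref{lemma:x_cond_on_y}}$; the rest of the argument (the test-function reduction and the bulk bound) is essentially automatic.
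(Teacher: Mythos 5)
Your proof is correct, but it follows a genuinely different route from the paper. The paper controls the posterior error directly: it shows $\langle \bv, \bZ(\bY)-\bX\rangle$ is sub-gaussian with parameter $O(\sigma)$ via a posterior-sample symmetrization (draw $\hat\bX\sim \bX\mid\bY$, apply Jensen and Cauchy--Schwarz to the MGF, using that $\langle\bv,\hat\bX-\bY\rangle$ and $\langle\bv,\bY-\bX\rangle$ are both $N(0,\sigma^2)$), so that the calibration $\sigma^2\le c_{L\ref{lemma:x_cond_on_y}}$ gives $\Pr(|\langle\bv,\bZ(\bY)-\bX\rangle|>\cpd/2)\le\epd/2$; it then restricts to the intersection of this good event with $\{\langle\bX,\bv\rangle\ge\cpd\}$, on which $\langle\bZ(\bY),\bv\rangle\ge\cpd/2$, yielding $\tfrac{\cpd}{2}\cdot\tfrac{\epd}{2}$. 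You instead never estimate $\bZ(\bY)-\bX$ at all: you linearize via the $\bY$-measurable test function $1_{\langle\bY,\bv\rangle\ge\cpd/2}$ and the tower property, reducing everything to the one-dimensional quantity $\E[U\,1_{U+V\ge\cpd/2}]$ with $U=\langle\bX,\bv\rangle$, $V\sim N(0,\sigma^2)$, and then the same calibration of $c_{L\ref{lemma:x_cond_on_y}}$ is spent on killing the negative-$U$ contamination through the Gaussian tail of $V$ and the elementary bound $|u|e^{-\cpd|u|/(2\sigma^2)}\le 2\sigma^2/(\cpd e)$ (your constants check out: the bulk term gives $\ge\epd\cpd/2$ and the negative term is $O(\cpd\epd^4/\log(1/\epd))$, comfortably below $\epd\cpd/4$). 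What each approach buys: the paper's argument isolates a reusable concentration fact --- the conditional mean is close to the truth in every direction once the observation noise is small --- which is conceptually aligned with how the lemma is used downstream; your argument is more elementary (no MGF or posterior-sampling trick), needs only a one-sided tail bound on the noise in the single direction $\bv$, and would generalize to non-Gaussian noise with sub-gaussian tails essentially unchanged.
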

The final lemma for this section says that any vector in the span of the top eigenvectors of a positive semi-definite matrix can be represented as a linear combination of these top eigenvectors with coefficients that are not too large. The proof of Lemma \ref{lemma:rewriting_as_lin_combo} can be found in Appendix \ref{proof:lemma:rewriting_as_lin_combo}.
\begin{lemma}\label{lemma:rewriting_as_lin_combo}
    Let $\bv_1,\bv_2,...\bv_j \in \mathbb{R}^d$ such that $\norm{\bv_i} = 1$. Define $\bM = \sum_{i=1}^j \bv_i^{\otimes 2}$. Suppose $\bw_1,...,\bw_d$ are orthonormal eigenvectors of $\bM$ with corresponding eigenvalues $\lambda_1 \ge ... \ge \lambda_d \ge 0$. Suppose $\lambda_\ell \ge \epsilon$. Then  for any $\bu \in \mathrm{Span}(\bw_1,...,\bw_\ell)$ such that $\norm{\bu} \le 1$, we have $\bu = \sum_{i=1}^j c_i\bv_i$, where $c_i := \sum_{k=1}^{\ell} \left(\frac{\langle \bu, \bw_k \rangle\langle \bv_i, \bw_k\rangle}{\lambda_k}\right)$. Furthermore, $\sum_{i=1}^j c_i^2 \le \frac{1}{\epsilon}$.    
\end{lemma}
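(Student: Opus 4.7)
\textbf{Proof proposal for Lemma \ref{lemma:rewriting_as_lin_combo}.} The plan is to recognize the coefficients $c_i$ as the natural ones arising from the Moore--Penrose pseudoinverse of $\bM$, and then compute $\sum_i c_i^2$ as a quadratic form in $\bu$ that is manifestly controlled by $1/\lambda_\ell \leq 1/\epsilon$.

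First I would verify the representation. Since $\bM\bw_k = \lambda_k \bw_k$ and $\bM = \sum_i \bv_i\bv_i^\top$, for each $k \leq \ell$ (so $\lambda_k > 0$) we can solve for $\bw_k$ to get
\[
\bw_k \;=\; \frac{1}{\lambda_k}\sum_{i=1}^{j} \langle \bv_i,\bw_k\rangle\,\bv_i.
\]
Since $\bu \in \mathrm{Span}(\bw_1,\dots,\bw_\ell)$, we have $\bu = \sum_{k=1}^{\ell}\langle \bu,\bw_k\rangle\bw_k$. Substituting the above identity and interchanging the order of summation yields $\bu = \sum_{i=1}^{j} c_i \bv_i$ with precisely the advertised coefficients.

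Next I would compute $\sum_i c_i^2$ cleanly. Let $V$ be the $d\times j$ matrix with columns $\bv_1,\dots,\bv_j$, so $\bM = VV^\top$, and let $\bM^+ = \sum_{k:\lambda_k > 0}\lambda_k^{-1}\bw_k\bw_k^\top$ be its pseudoinverse. Because $\bu \in \mathrm{Span}(\bw_1,\dots,\bw_\ell)$ lies in the range of $\bM$, one checks immediately that $c_i = \bv_i^\top \bM^+\bu$, i.e.\ $\bc = V^\top \bM^+\bu$. Therefore
\[
\sum_{i=1}^{j} c_i^2 \;=\; \bu^\top \bM^+ VV^\top \bM^+ \bu \;=\; \bu^\top \bM^+ \bM \bM^+\bu \;=\; \bu^\top \bM^+ \bu,
\]
where the last equality uses the defining pseudoinverse identity $\bM^+\bM\bM^+ = \bM^+$ (valid here because $\bM$ is symmetric PSD).

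Finally I would bound this quadratic form. Expanding in the eigenbasis,
\[
\bu^\top \bM^+\bu \;=\; \sum_{k=1}^{\ell}\frac{\langle \bu,\bw_k\rangle^2}{\lambda_k} \;\leq\; \frac{1}{\lambda_\ell}\sum_{k=1}^{\ell}\langle \bu,\bw_k\rangle^2 \;\leq\; \frac{\|\bu\|^2}{\epsilon} \;\leq\; \frac{1}{\epsilon},
\]
using $\lambda_k \geq \lambda_\ell \geq \epsilon$ for all $k\leq \ell$, the containment of $\bu$ in the span, and $\|\bu\|\leq 1$. There is essentially no genuine obstacle here; the only mild subtlety is making sure the pseudoinverse manipulations are justified, which comes down to noting that $\bu$ lies in the range of $\bM$ so that $\bM\bM^+\bu = \bu$ and the coefficient identity $c_i = \bv_i^\top \bM^+\bu$ truly holds.
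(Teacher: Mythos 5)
Your proposal is correct and is essentially the same argument as the paper's: both express $\bc$ as (in your notation) $V^\top \bM^+\bu$, where the paper writes $V = \bA^\top$ with $\bA$ the matrix whose rows are $\bv_i$, and both compute $\|\bc\|^2$ as the quadratic form $\sum_{k\le \ell}\langle\bu,\bw_k\rangle^2/\lambda_k$ before bounding by $1/\epsilon$. The only cosmetic difference is that you phrase the intermediate step via the pseudoinverse identity $\bM^+\bM\bM^+=\bM^+$, while the paper expands the quadratic form in $\bM$ directly.
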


\section{Discussion on Assumptions}\label{sec:assumption_justification}

Here we illustrate why the first two conditions in Assumption~\ref{assum:all} are important, by presenting an example where each is absent.

To see why Condition \ref{it:pos_bound} (that $\bl$ is not confined to any half-space) is necessary, suppose $d = 2$ and that $\bl_1 \sim \text{Uniform}(0.5,2)$ and $\bl_2 \sim \text{Uniform}(-1,1)$ and $\bl_1 \indep \bl_2$. This example violates Condition \ref{it:pos_bound}, because $\Pr\left(\langle -\textbf{e}_1,\bl \rangle \ge 0 \right) = 0$. However, the only BIC action will for every $t$ be $\bA^{(t)} = \textbf{e}_1$. This is because $\E[\bl_2] = 0$, and $\E[\bl_1 \mid \Psi] > 0$ for any signal $\Psi$ based on any historical actions and rewards. Condition \ref{it:pos_bound} is also not overly restrictive so as to make $\lambda$-spectral exploration in $\poly(d)$ steps trivial. For example, suppose $\bl_1 = \begin{cases}  -1 & \text{w.p. $e^{-d}$}\\          1 & \text{w.p. $1-e^{-d}$} \end{cases}$ and $\bl_2 \sim \text{Uniform}(-1,1)$. This prior distribution satisfies Condition \ref{it:pos_bound}, but there is still an exponentially small probability that the optimal action is not $\bA^* = (1,0)$. Therefore, in this example it is not trivial to guarantee $\lambda$-spectral exploration in a polynomial number of samples.

To see why Condition \ref{it:var_bound} (that $\bl$ has non-degenerate covariance) is needed, consider the following example. Suppose $d=2$ and that the coordinates of $\bl$ are independent with:
\[
\bl_1 = \begin{cases}  -1 & \text{w.p. $e^{-d}$}\\          1 & \text{w.p. $1-e^{-d}$} \end{cases},\quad\quad
\bl_2 = \begin{cases}  -2 & \text{w.p. $e^{-d}$}\\ 0 & \text{w.p. $1-2e^{-d}$} \\         2 & \text{w.p. $e^{-d}$} \end{cases}.
\]
Intuitively, this example once again requires exponentially many samples to $\lambda$-spectrally explore, because in order to explore $\bl_2$ we need to find a signal based on the history where the expectation of $\bl_2$ is not exponentially smaller than the expectation of $\bl_1$. However, we cannot do exponential growth on the conditional expectation of $\bl_2$, as we would need exponentially many samples to sufficiently increase the conditional expectation of $\bl_2$. Similarly, we would need exponentially many samples to decrease the conditional expectation of $\bl_1$.

We also note that any sub-gaussian random variable $X$ satisfying $\P(|X| > t) \le 2e^{-t^2/K^2}$ for all $t\geq 0$ (as in Condition \ref{it:subgauss}) satisfies the following bounds on the moments of $X$: 
\begin{align}
\label{eq:bound_exp}
    \E[ X] &\le \E[ |X|]  = \int_{0}^{\infty}  \P\left(|X| > t\right)dt \le \int_{0}^{\infty}  2e^{-t^2/K^2}dt  =K\sqrt{\pi},
    \\
    \label{eq:bound_var}
    \E[X^2 ] &= \int_{0}^{\infty} \P(X^2  > t)dt
    \le \int_0^{\infty} 2e^{-t/K^2}dt
    = 2K^2. \numberthis 
\end{align}

\section{Proof of Lemma \ref{lemma:by_to_bx}}\label{proof:lemma:by_to_bx}
\begin{proof}[Proof of Lemma \ref{lemma:by_to_bx}]
    Recall that the input parameters from Algorithm~\ref{algo:first_exploration} come from their use in Algorithm~\ref{alg:non_ind_eps_BIC}. 
    We compute as follows, with $\stackrel{d}{=}$ indicating equality in distribution.
    \begin{align*}
        \hat{y}_\ell 
        &= 
        \lambda c_{L\ref{lemma:x_cond_on_y}}\sum_{t'=0}^{\frac{1}{\lambda c_{L\ref{lemma:x_cond_on_y}}}-1} \sum_{k=1}^j \frac{\langle \bv_k, \bw_\ell \rangle}{\lambda_\ell}q_{k}^{t'} \\
        &\stackrel{d}{=}\lambda c_{L\ref{lemma:x_cond_on_y}}\sum_{t'=0}^{\frac{1}{\lambda c_{L\ref{lemma:x_cond_on_y}}}-1} \sum_{k=1}^j \frac{\langle \bv_k, \bw_\ell \rangle}{\lambda_\ell} \langle \bv_k,\bl \rangle  + \lambda c_{L\ref{lemma:x_cond_on_y}}\sum_{t'=0}^{\frac{1}{\lambda c_{L\ref{lemma:x_cond_on_y}}}-1} \sum_{k=1}^j \frac{\langle \bv_k, \bw_\ell \rangle}{\lambda_\ell} N(0,1) \\ 
        &\stackrel{d}{=} \left\langle \bl, \left(\lambda c_{L\ref{lemma:x_cond_on_y}}\sum_{t'=0}^{\frac{1}{\lambda c_{L\ref{lemma:x_cond_on_y}}}-1} \sum_{k=1}^j \frac{\langle \bv_k, \bw_\ell \rangle}{\lambda_\ell} \bv_k \right) \right\rangle  + N\left(0,  \lambda c_{L\ref{lemma:x_cond_on_y}}\sum_{k=1}^j \frac{\langle \bv_k, \bw_\ell \rangle^2}{\lambda_\ell^2}\right) \\
        &\stackrel{d}{=} \left\langle \bl, \left(\lambda c_{L\ref{lemma:x_cond_on_y}}\sum_{t'=0}^{\frac{1}{\lambda c_{L\ref{lemma:x_cond_on_y}}}-1} \bw_\ell \right)  \right\rangle  + N\left(0,  \frac{\lambda c_{L\ref{lemma:x_cond_on_y}}}{\lambda_\ell^2}\bw_\ell^\top \bM\bw_\ell\right)  && \text{[Lemma \ref{lemma:rewriting_as_lin_combo} ($\bu = \bw_\ell$)]}\\
        &\stackrel{d}{=} x^*_\ell + N\left(0,  c_{L\ref{lemma:x_cond_on_y}}\frac{\lambda}{\lambda_\ell}\right).  \numberthis \label{eq:rewriting_xstar}
    \end{align*}

        We will apply Lemma \ref{lemma:x_cond_on_y} with $\bX = \bx^*$, $\bY = \hat{\by}$, and $\bZ(\bY) = \bz(\hat{\by})$. As shown above,  (and using that $\frac{\lambda}{\lambda_\ell} \le 1$ for all $\ell \le \ell_\lambda$) we have that $\bY - \bX$ has the appropriate distribution. The last thing we need to show is that 
\begin{align*}
    \min_{\norm{\bq} = 1} \Pr\left(\langle \bX, \bq \rangle \ge \cpd \right) &= \min_{\norm{\bq} = 1} \Pr\left(\sum_{\ell=1}^{\ell_{\lambda}} \langle \bl, \bw_\ell \rangle q_\ell \ge \cpd \right)\\
    &= \min_{\norm{\bq} = 1} \Pr\left(\left\langle \bl, \left(\sum_{\ell=1}^{\ell_{\lambda}}  q_\ell \bw_\ell \right) \right\rangle \ge \cpd \right) \\
    &\ge \min_{\norm{\bv} = 1} \Pr\left(\langle \bl, \bv \rangle \ge \cpd \right) \\
    &\ge \epd. && \text{[Assumption \ref{it:pos_bound}]}.
\end{align*}
This means we can apply Lemma \ref{lemma:x_cond_on_y} to get that 
\[
    \min_{\norm{\bv} = 1} \E[\langle \bz(\hat{\by}), \bv \rangle^+ ] \ge \frac{\epd \cpd}{4}.
\]
We have therefore shown that $\bz(\hat{\by})$ satisfies the assumptions of Lemma \ref{lemma:function_f} with $\epsilon = \frac{\epd \cpd}{4}$. 
\end{proof}

\section{Proof of Lemma \ref{lemma:bic_sphere}}\label{proof:lemma:bic_sphere}
\begin{proof}[Proof of Lemma \ref{lemma:bic_sphere}]
    The BIC optimal action given $\psi$ is
    \begin{align*}
        \bA^* &= \argmax_{\bA \in S^{d-1}} \E[\langle \bA, \bl \rangle \mid \psi] \\
        &= \argmax_{\bA \in S^{d-1}} \langle \bA, \E[\bl \mid \psi ] \rangle.
    \end{align*}
    Therefore, the BIC action is $\bA^* = \frac{\E[\bl \mid \psi]}{\norm{\E[\bl \mid \psi]}}$ if $\norm{\E[\bl \mid \psi]} \ne 0$. If $\norm{\E[\bl \mid \psi]} = 0$, then any action is BIC including $\bv$.
\end{proof}

\section{Proof of Lemma \ref{lemma:next_largest_eigenvalue}}\label{proof:lemma:next_largest_eigenvalue}

We will prove the following equivalent lemma.

    \begin{lemma}\label{lemma:change_in_top_ell}
    In the setting of Lemma \ref{lemma:next_largest_eigenvalue},
        \[
            \sum_{i=1}^\ell \lambda'_i \le (1-\epsilon/2) + \sum_{i=1}^\ell \lambda_i
        \]
    \end{lemma}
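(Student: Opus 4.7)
The plan is to reduce the claim to showing $\|P_{T'}\bu\|^2 \le 1-\epsilon/2$, where $T' := \mathrm{Span}(\bw'_1,\dots,\bw'_\ell)$. Taking $\bw'_1,\dots,\bw'_\ell$ as an orthonormal basis of $T'$ gives
\[
\sum_{i=1}^\ell \lambda'_i = \mathrm{tr}(P_{T'}\bM' P_{T'}) = \mathrm{tr}(P_{T'}\bM P_{T'}) + \|P_{T'}\bu\|^2,
\]
and by Ky Fan's maximum principle $\mathrm{tr}(P_{T'}\bM P_{T'}) \le \sum_{i=1}^\ell \lambda_i$, so the target inequality collapses to $\|P_{T'}\bu\|^2 \le 1-\epsilon/2$.

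To bound $\|P_{T'}\bu\|^2$, I would decompose $\bu = P_S\bu + P_{S^\perp}\bu$ with $S = \mathrm{Span}(\bw_1,\dots,\bw_{\ell_\epsilon})$ as in the lemma, use the hypothesis $\|P_{S^\perp}\bu\|^2\ge \epsilon$ together with $\|\bu\|\le 1$ to obtain $\|P_S\bu\|^2 \le 1-\epsilon$, and pair this with an operator-norm bound on the restriction $P_{T'}|_{S^\perp}$. The essential observation is that each top eigenvector $\bw'_i$ of $\bM'$ has only a tiny $S^\perp$-component. Writing $\bw'_i = s_i + s_i^\perp$ with $s_i\in S$ and $s_i^\perp\in S^\perp$, and projecting the eigenvalue equation $\bM\bw'_i + \langle \bu,\bw'_i\rangle \bu = \lambda'_i \bw'_i$ onto $S^\perp$ yields
\[
(\lambda'_i I - \bM)|_{S^\perp}\, s_i^\perp = \langle \bu,\bw'_i\rangle \cdot P_{S^\perp}\bu.
\]
Since $\bM$ preserves $S^\perp$ (a span of its eigenvectors) with operator norm at most $\epsilon$ there, and Weyl's inequality for a positive semidefinite rank-one update gives $\lambda'_i\ge \lambda_i\ge \lambda_\ell \ge 200d^3/\epsilon^2$, the restricted operator $(\lambda'_i I - \bM)|_{S^\perp}$ is invertible with inverse of operator norm at most $1/(\lambda_\ell-\epsilon)$; hence $\|s_i^\perp\| \le |\langle \bu,\bw'_i\rangle|/(\lambda_\ell-\epsilon)$.

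Summing over $i\le\ell$ and using Bessel's inequality $\sum_i \langle \bu,\bw'_i\rangle^2\le \|\bu\|^2\le 1$ gives $\sum_i \|s_i^\perp\|^2 \le (\lambda_\ell-\epsilon)^{-2}$. This controls $\|P_{T'}|_{S^\perp}\|_{\mathrm{op}}^2$: indeed, for any $v\in S^\perp$ we have $\langle \bw'_i,v\rangle = \langle s_i^\perp,v\rangle$, so $\|P_{T'}v\|^2 = \sum_i \langle s_i^\perp,v\rangle^2\le \big(\sum_i \|s_i^\perp\|^2\big)\|v\|^2$ by Cauchy--Schwarz. Combining $P_{T'}\bu = P_{T'}P_S\bu + P_{T'}P_{S^\perp}\bu$ with $\|P_{T'}P_S\bu\|\le \sqrt{1-\epsilon}$ and $\|P_{T'}P_{S^\perp}\bu\|\le \|P_{T'}|_{S^\perp}\|_{\mathrm{op}}\cdot\|P_{S^\perp}\bu\|$ and expanding the square shows the cross term and second-order term are both $O(\epsilon^2/d^3)$, which fit inside the slack $\epsilon/2$ between $1-\epsilon$ and $1-\epsilon/2$. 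The threshold $200 d^3/\epsilon^2$ is chosen precisely so that this absorption is clean.

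The main obstacle is the spectral perturbation step above: one must verify that $(\lambda'_i I - \bM)|_{S^\perp}$ is genuinely invertible with the stated norm control, which is where the large-gap condition $\lambda_\ell \ge 200 d^3/\epsilon^2 \gg \epsilon \ge \|\bM|_{S^\perp}\|_{\mathrm{op}}$ is essential; without this gap, eigenvectors of $\bM'$ could mix $S$-directions with $S^\perp$-directions in a way that lets $\bu$ be almost fully captured by $T'$. The remaining work is routine bookkeeping of orthogonal decompositions and Cauchy--Schwarz estimates.
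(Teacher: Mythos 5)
Your proposal is correct, and it shares the paper's outer skeleton---both arguments reduce the claim to showing that $\bu$ has squared overlap at most $1-\epsilon/2$ with the span of the top $\ell$ eigenvectors of $\bM'$, after using the variational (Ky Fan) principle to absorb the $\bM$-contribution into $\sum_{i\le\ell}\lambda_i$---but you establish the key step by a genuinely different mechanism. The paper proves that each optimal frame vector $\bx_i^*$ (equivalently, each top eigenvector of $\bM'$) satisfies $\norm{\mathcal{P}_{S^{\perp}}(\bx_i^*)}^2\le \epsilon^2/(100d^2)$ via an exchange/contradiction argument: if some $\bx_{i'}^*$ had a larger $S^{\perp}$-component, the frame would be strictly beaten by $\bw_1,\dots,\bw_\ell$, contradicting maximality in the variational characterization. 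You instead project the eigenvalue equation $\bM\bw'_i+\langle\bu,\bw'_i\rangle\bu=\lambda'_i\bw'_i$ onto $S^{\perp}$ (which is $\bM$-invariant) and invert $(\lambda'_i I-\bM)|_{S^{\perp}}$, using Weyl monotonicity $\lambda'_i\ge\lambda_i\ge\lambda_\ell\ge 200d^3/\epsilon^2$ together with $\|\bM|_{S^{\perp}}\|\le\epsilon$ to get $\norm{\mathcal{P}_{S^{\perp}}(\bw'_i)}\le|\langle\bu,\bw'_i\rangle|/(\lambda_\ell-\epsilon)$; summing with Bessel then bounds the operator norm of $P_{T'}|_{S^{\perp}}$ by roughly $\epsilon^2/(200d^3)$, after which the same decomposition $\|\mathcal{P}_S(\bu)\|^2\le 1-\epsilon$ plus cross-term bookkeeping closes the gap exactly as in the paper. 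Your resolvent route yields a quantitatively stronger per-vector bound (order $\epsilon^2/d^3$ versus the paper's $\epsilon/(10d)$) and avoids the contradiction, at the price of invoking a bit of spectral perturbation theory; the paper's exchange argument is more elementary, using only the variational principle. One small remark: like the paper's own proof, you implicitly use $\|\bu\|\le 1$ (for Bessel and for $\|\mathcal{P}_S(\bu)\|^2\le 1-\epsilon$), which is not written in the lemma statement but is intended, since in the application $\bu$ is an action in the unit ball.
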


We first observe that Lemma~\ref{lemma:change_in_top_ell} implies the desired Lemma \ref{lemma:next_largest_eigenvalue}.

\begin{proof}[Proof of Lemma \ref{lemma:next_largest_eigenvalue}]
    By linearity of trace,

    \[
        \sum_{i=1}^d \lambda'_i = 1 + \sum_{i=1}^d \lambda_i.
    \]

    Therefore, Lemma \ref{lemma:change_in_top_ell} implies the desired result that
    \[
        \sum_{i=\ell+1}^d \lambda'_i \ge \epsilon/2 + \sum_{i=\ell+1}^d \lambda_i.
    \qedhere
    \]
\end{proof}

\begin{proof}[Proof of Lemma \ref{lemma:change_in_top_ell}]
    First, note the following, where the max is over $\bx_1,...,\bx_\ell$ that are orthonormal.
    \begin{equation}\label{eq:sum_of_eig}
        \sum_{i=1}^{\ell} \lambda'_i = \max_{\bx_1,...,\bx_\ell} \sum_{i=1}^\ell\bx_i^T\bM'\bx_i =  \max_{\bx_1,...,\bx_\ell} \left(\sum_{i=1}^\ell \bx_i^\top \bM\bx_i + \sum_{i=1}^\ell \langle \bx_i, \bu \rangle^2 \right).
    \end{equation}
    Define 
    \begin{equation}\label{eq:sum_of_eig_argmax}
        \bx^*_1,...,\bx^*_\ell = \argmax_{\bx_1,...,\bx_\ell } \sum_{i=1}^\ell \bx_i^\top \bM'\bx_i.
    \end{equation}
     We will now prove (by contradiction) that for all $i \le \ell$, $\norm{\mathcal{P}_{S^\perp}(\bx^*_i)}_2^2 \le \frac{\epsilon^2}{100d^2}$. Suppose that there exists some $i' \in 1,...,\ell$ such that $\norm{\mathcal{P}_{S^\perp}(\bx^*_{i'})}_2^2 > \frac{\epsilon^2}{100d^2}$.
    Then we find
    \begin{align*}
    &\sum_{i=1}^\ell (\bx_i^*)^\top \bM' \bx_i^*
    =\sum_{i=1}^\ell (\bx^*_i)^\top \bM\bx^*_i + \sum_{i=1}^\ell \langle \bx^*_i, \bu \rangle^2  \\
         &\le 1 +  \sum_{i=1}^\ell (\bx^*_i)^\top \bM\bx^*_i && \text{[$\bx_i^*$ orthonormal so $\sum_{i=1}^\ell \langle \bx^*_i, \bu \rangle^2 \le \norm{\bu}^2 \le 1$]}\\
         &= 1  + \sum_{i=1}^\ell  \left(\mathcal{P}_{S^\perp}(\bx^*_{i})^\top \bM \mathcal{P}_{S^\perp}(\bx^*_{i})+ \mathcal{P}_{S}(\bx^*_{i})^\top \bM \mathcal{P}_{S}(\bx^*_{i}) \right)\\
         &= 1  + \sum_{i=1}^\ell \mathcal{P}_{S^\perp}(\bx^*_{i})^\top \bM \mathcal{P}_{S^\perp}(\bx^*_{i})+  \sum_{i=1}^\ell\mathcal{P}_{S}(\bx^*_{i})^\top \bM \mathcal{P}_{S}(\bx^*_{i})\\
         &= 1  + d\epsilon +  \sum_{i=1}^\ell\mathcal{P}_{S}(\bx^*_{i})^\top \bM \mathcal{P}_{S}(\bx^*_{i}) &&\text{[$S^{\perp}=$ span of evectors with evalues $\le \epsilon$]}\\
         &= 1  + d\epsilon +  \sum_{i=1}^\ell \left(\sum_{k=1}^{\ell_{\epsilon}} \langle \bx_i^*, \bw_k \rangle \bw_k\right)^\top \bM \left(\sum_{k=1}^{\ell_{\epsilon}} \langle \bx_i^*, \bw_k \rangle \bw_k\right)\\
         &= 1  + d\epsilon +  \sum_{i=1}^\ell \sum_{k=1}^{\ell_{\epsilon}} \langle \bx_i^*, \bw_k \rangle^2 \bw_k^\top \bM \bw_k\\
         &= 1  + d\epsilon +  \sum_{i=1}^\ell \sum_{k=1}^{\ell_{\epsilon}} \langle \bx_i^*, \bw_k \rangle^2 \lambda_k. \numberthis \label{eq:to_be_continued}
    \end{align*}
    Because $\bx_i^*$ are orthonormal, we know that $\sum_{i=1}^\ell \langle \bx_i^*, \bw_k \rangle^2 \le \norm{\bw_k}_2^2 = 1$. Because we assumed that $\norm{\mathcal{P}_{S^\perp}(\bx^*_{i'})}_2^2 > \frac{\epsilon^2}{100d^2}$, we know that  $\sum_{i=1}^\ell \sum_{k=1}^{\ell_{\epsilon}} \langle \bx_i^*, \bw_k \rangle^2 = \sum_{i=1}^\ell \norm{\mathcal{P}_S(\bx_i^*)}^2 \le \ell - \frac{\epsilon^2}{100d^2}$. Combining these two statements with the fact that $\lambda_k$ is decreasing in $k$, 
    \[
\sum_{i=1}^\ell \sum_{k=1}^{\ell_{\epsilon}} \langle \bx_i^*, \bw_k \rangle^2 \lambda_k 
 = \sum_{k=1}^{\ell_{\epsilon}} \sum_{i=1}^\ell  \langle \bx_i^*, \bw_k \rangle^2 \lambda_k 
 \le \left(1 - \frac{\epsilon^2}{100d^2}\right)\lambda_\ell + \sum_{i=1}^{\ell-1} \lambda_i.
\]  
    Continuing where we left off with Equation \eqref{eq:to_be_continued}, we have that    
    \begin{align*}
         &= 1  + d\epsilon +  \sum_{i=1}^\ell \sum_{k=1}^{\ell_{\epsilon}} \langle \bx_i^*, \bw_k \rangle^2 \lambda_k
         \le 1  +  d\epsilon + \left(1 - \frac{\epsilon^2}{100d^2}\right)\lambda_\ell + \sum_{i=1}^{\ell-1} \lambda_i \\
         &\le 1  +  d\epsilon  - \frac{\epsilon^2}{100d^2}\lambda_\ell + \sum_{i=1}^{\ell} \lambda_i \\
         &\le 1 + d\epsilon- 2d + \sum_{i=1}^\ell \lambda_i && \text{[$\lambda_\ell \ge \frac{200d^3}{\epsilon^2}$]} \\
         &< \sum_{i=1}^\ell \lambda_i&& \text{[$\epsilon < 1$]}  \\
         &= \sum_{i=1}^\ell \bw_i^\top \bM \bw_i 
         \le \sum_{i=1}^\ell \bw_i^\top \bM' \bw_i.
    \end{align*}
  
    Therefore, we have a contradiction, as $\bx^*_1,...,\bx^*_\ell$ cannot be a solution to Equation \eqref{eq:sum_of_eig_argmax} because these vectors are strictly beaten by $\bw_1,...,\bw_\ell$.

    Therefore, we have shown that $\norm{\mathcal{P}_{S^\perp}(\bx^*_i)}_2^2 \le \frac{\epsilon^2}{100d^2}$ for all $i$. 
    
    In the following equation, we define $P_S$ as the projection matrix for projecting a vector onto $S$. Now, we have that
    \begin{align*}
        \sum_{i=1}^\ell \lambda'_i &= \sum_{i=1}^\ell (\bx^*_i)^\top \bM\bx^*_i + \sum_{i=1}^\ell \langle \bx^*_i, \bu \rangle^2 \\
        &\le \sum_{i=1}^\ell \lambda_i + \sum_{i=1}^\ell \langle \bx^*_i, \bu \rangle^2 \\
        &\le \sum_{i=1}^\ell \lambda_i + \sum_{i=1}^\ell 
        \Big(\langle \mathcal{P}_{S}(\bx^*_i), \mathcal{P}_{S}(\bu) \rangle + \langle \mathcal{P}_{S^\perp}(\bx^*_i), \mathcal{P}_{S^\perp}(\bu)\rangle \Big)^2  
        \\
        &\le \sum_{i=1}^\ell \lambda_i + \sum_{i=1}^\ell (|\langle \mathcal{P}_{S}(\bx^*_i), \mathcal{P}_{S}(\bu)\rangle | + \frac{\epsilon}{10d})^2 && \text{[$\norm{\mathcal{P}_{S^\perp}(\bx^*_i)}_2^2 \le \frac{\epsilon^2}{100d^2}]$} \\
        &= \sum_{i=1}^\ell \lambda_i + \sum_{i=1}^\ell \left( \langle \mathcal{P}_{S}(\bx^*_i), \mathcal{P}_{S}(\bu) \rangle^2 + \frac{\epsilon}{5d}|\langle \mathcal{P}_{S}(\bx^*_i), \mathcal{P}_{S}(\bu) \rangle| + \frac{\epsilon^2}{100d^2} \right) \\
        &\le \sum_{i=1}^\ell \lambda_i + \sum_{i=1}^\ell \left(\langle \mathcal{P}_{S}(\bx^*_i), \mathcal{P}_{S}(\bu) \rangle^2 + \frac{\epsilon}{5d} + \frac{\epsilon^2}{100d^2}\right)  \\
        &\le \sum_{i=1}^\ell \lambda_i + \frac{\epsilon}{5} + \frac{\epsilon^2}{100d} +  \sum_{i=1}^\ell \langle \mathcal{P}_{S}(\bx^*_i), \mathcal{P}_{S}(\bu) \rangle^2    \\
        &= \sum_{i=1}^\ell \lambda_i + \frac{\epsilon}{5} + \frac{\epsilon^2}{100d} +  \sum_{i=1}^\ell \left((\bx^*_i)^\top P_S^\top P_S \bu\right)^2    \\
        &= \sum_{i=1}^\ell \lambda_i + \frac{\epsilon}{5} + \frac{\epsilon^2}{100d} +  \sum_{i=1}^\ell \left((\bx^*_i)^\top  P_S \bu\right)^2    \\
        &= \sum_{i=1}^\ell \lambda_i + \frac{\epsilon}{5} + \frac{\epsilon^2}{100d} +  \sum_{i=1}^\ell \langle \bx^*_i, \mathcal{P}_{S}(\bu)\rangle ^2    \\
        &\le \sum_{i=1}^\ell \lambda_i + \frac{\epsilon}{5} + \frac{\epsilon^2}{100d} +  \norm{\mathcal{P}_{S}(\bu)}_2^2    \\
        &\le \sum_{i=1}^\ell \lambda_i + \frac{\epsilon}{5} + \frac{\epsilon^2}{100d} +  1-\epsilon   \\
        &\le \sum_{i=1}^\ell \lambda_i + (1-\epsilon/2).
    \end{align*}

    This completes the proof of Lemma \ref{lemma:change_in_top_ell}.
\end{proof}

\section{Proof of Lemma \ref{lemma:rewriting_as_lin_combo}}\label{proof:lemma:rewriting_as_lin_combo}
\begin{proof}[Proof of Lemma \ref{lemma:rewriting_as_lin_combo}]
    Define $\bA \in \mathbb{R}^{j \times d}$ with rows corresponding to $\bv_1,...,\bv_j$. Then $\bM = \bA^{\top}\bA$. We want to write $\bu = \bA^{\top} \bc$ for $\bc \in \mathbb{R}^{j}$.
    
    Because $\bw_1,...,\bw_\ell$ are orthonormal and $\bu \in \mathrm{Span}(\bw_1,...,\bw_\ell)$, we have that
    \[
    \mathbf{u} = \sum_{i=1}^{\ell} \langle \mathbf{u}, \bw_i \rangle \bw_i
    \]
    Because $\bw_i$ is an eigenvector of $\bM$ with eigenvalue $\lambda_i$, we know that for any $i \le \ell$
    \[
        \lambda_i\bw_i = \bM\bw_i = \bA^{\top}\bA\bw_i
    \]
    Rearranging terms and multiplying both sides by $\langle \mathbf{u}, \bw_i \rangle $, we have that for any $i \le \ell$
    \[
        \langle \mathbf{u}, \bw_i \rangle \bw_i = \bA^{\top} \left(\frac{\langle \mathbf{u}, \bw_i \rangle \bA\bw_i}{\lambda_i}\right).
    \]
    Therefore, we have that
    \[
    \mathbf{u} = \sum_{i=1}^{\ell} \langle \mathbf{u}, \bw_i \rangle \bw_i = \sum_{i=1}^{\ell} \bA^{\top} \left(\frac{\langle \mathbf{u}, \bw_i \rangle \bA\bw_i}{\lambda_i}\right) = \bA^{\top}\sum_{i=1}^{\ell} \left(\frac{\langle \mathbf{u}, \bw_i \rangle \bA\bw_i}{\lambda_i}\right). 
    \]
    Now, we can define
    \[
        \mathbf{c} = \sum_{i=1}^{\ell} \left(\frac{\langle \mathbf{u}, \bw_i \rangle \bA\bw_i}{\lambda_i}\right)= \bA\sum_{i=1}^\ell \frac{\langle \mathbf{u}, \bw_i \rangle\bw_i}{\lambda_i}.
    \]
    This implies that
    \begin{align*}
        \norm{\bc}_2^2 &= \norm{\bA\sum_{i=1}^\ell \frac{\langle \mathbf{u}, \bw_i \rangle\bw_i}{\lambda_i}}_2^2 \\
        &= \left(\sum_{i=1}^\ell \frac{\langle \mathbf{u}, \bw_i \rangle\bw_i}{\lambda_i}\right)^\top \bM\left(\sum_{i=1}^\ell \frac{\langle \mathbf{u}, \bw_i \rangle\bw_i}{\lambda_i}\right) \\
        &= \sum_{i=1}^\ell \lambda_i \left(\frac{\langle \mathbf{u}, \bw_i \rangle}{\lambda_i}\right)^2  && \text{[$\bw_i^\top \bM \bw_i = \lambda_i$]}\\
        &= \sum_{i=1}^\ell \frac{\langle \mathbf{u}, \bw_i \rangle^2}{\lambda_i} \\
        &\le \frac{\norm{\bu}^2_2}{\epsilon} \\
        &\le \frac{1}{\epsilon}.
    \end{align*}
    This vector $\bc$ therefore satisfies the desired properties.
\end{proof}

\section{Proof of Lemma \ref{lemma:function_f}}
\label{proof:lemma:function_f}

We begin by proving the following lemma.

\begin{lemma}\label{lemma:mu_exploration_epsilon}
Let $\mu$ be a probability distribution on $\mathbb R^d$ with finite first moment and suppose
\[
\min_{\|\bv\|=1}
\mathbb E^{\bx\sim\mu}[\langle \bv,\bx\rangle_+]\geq \epsilon.
\]
Then for any $\bw$ with $\|\bw\|<\epsilon$ there is a $[0,1]$-valued measurable function $f$ such that $\mathbb E[\bx f(\bx)]=\bw$.
\end{lemma}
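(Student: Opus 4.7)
The plan is to study the convex set $C = \{\mathbb E^{\bx\sim\mu}[\bx f(\bx)] : f : \mathbb R^d \to [0,1] \text{ measurable}\} \subseteq \mathbb R^d$ directly. This set is convex (convex combinations of $f$'s yield convex combinations of outputs), bounded (since $\|\mathbb E[\bx f(\bx)]\| \le \mathbb E[\|\bx\|] < \infty$), and the lemma asserts exactly that $\{\bw : \|\bw\| < \epsilon\} \subseteq C$. The natural tool is the support function $h_C(\bv) = \sup_{\by \in C} \langle \bv, \by\rangle$. Since the optimal $f$ against a fixed $\bv$ is $f(\bx) = \mathbf 1\{\langle \bv, \bx\rangle > 0\}$, we get
\[
h_C(\bv) = \mathbb E[\langle \bv, \bx\rangle_+],
\]
and by homogeneity of the support function the hypothesis translates to $h_C(\bv) \geq \epsilon\|\bv\|$ for all $\bv \in \mathbb R^d$.

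Granting (for the moment) that $C$ is closed, the containment $\{\bw : \|\bw\| \leq \epsilon\} \subseteq C$ follows by a standard separation argument. Indeed, if some $\bw$ with $\|\bw\|\leq\epsilon$ lay outside $C$, we could strictly separate $\{\bw\}$ from the closed convex set $C$ by some nonzero $\bv$ with $\langle \bv,\bw\rangle > h_C(\bv)$. But Cauchy--Schwarz gives $\langle \bv,\bw\rangle\leq \|\bv\|\|\bw\|\leq \epsilon\|\bv\|\leq h_C(\bv)$, a contradiction. In particular any $\bw$ with $\|\bw\|<\epsilon$ lies in $C$, which is the content of the lemma.

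The main obstacle is therefore establishing that $C$ is in fact closed, so that separation is available. I would handle this via weak-$\ast$ compactness of the target space of $f$. The unit ball of $L^\infty(\mu)$ is weak-$\ast$ compact by Banach--Alaoglu, and the order interval $\{f \in L^\infty(\mu) : 0 \leq f \leq 1\}$ is weak-$\ast$ closed, hence weak-$\ast$ compact. Under the finite first moment hypothesis each coordinate map $\bx \mapsto x_i$ lies in $L^1(\mu)$, so $f \mapsto \mathbb E[x_i f(\bx)]$ is exactly the pairing of $f \in L^\infty(\mu)$ with $x_i \in L^1(\mu)$ and is weak-$\ast$ continuous. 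Thus $f \mapsto \mathbb E[\bx f(\bx)]$ is weak-$\ast$ continuous with values in $\mathbb R^d$, so $C$ is the image of a weak-$\ast$ compact set under a continuous map, and is therefore compact (in particular closed). Combining this with the preceding paragraph yields the lemma.
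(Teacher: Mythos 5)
Your proof is correct, and its skeleton is the same as the paper's: both arguments work with the convex set $C=\{\E[\bx f(\bx)]:f\ \text{measurable},\ 0\le f\le 1\}$, identify its support function as $\sup_{f}\E[f(\bx)\langle\bv,\bx\rangle]=\E[\langle\bv,\bx\rangle_+]$, and conclude by a separating-hyperplane argument against the hypothesis. The difference is in how the closedness issue is handled. The paper deliberately avoids proving $C$ is closed: it invokes the finite-dimensional separation theorem in the form valid for \emph{any} convex set and a point outside it (non-strict separation, $\sup_{\bu\in C}\langle\bv,\bu\rangle\le\langle\bv,\bw\rangle$), and the strictness needed for the contradiction comes instead from $\|\bw\|<\epsilon$ via Cauchy--Schwarz. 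You instead establish that $C$ is closed --- in fact compact --- by viewing it as the image of the weak-$\ast$ compact order interval $\{0\le f\le 1\}\subseteq L^\infty(\mu)$ (Banach--Alaoglu) under the weak-$\ast$ continuous map $f\mapsto\E[\bx f(\bx)]$ (each coordinate of $\bx$ lies in $L^1(\mu)$ by the first-moment assumption), and then apply strict separation. Your route uses heavier functional-analytic machinery but buys a slightly stronger conclusion: $C$ is compact and contains the \emph{closed} ball of radius $\epsilon$, whereas the paper's more elementary argument only needs, and only gives, the open ball.
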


\begin{proof}[Proof of Lemma \ref{lemma:mu_exploration_epsilon}]
Let $K$ be the set of possible vectors $\mathbb E[\bx f(\bx)]$ where $f$ is a a $[0,1]$-valued measurable function. Then for any $\ba,\bb \in K$, there exist corresponding $[0,1]$-valued functions $f^a$ and $f^b$ such that $\mathbb{E}[\bx f^a(\bx)] = \ba$ and $\mathbb{E}[\bx f^b(\bx)] = \bb$. Therefore for any $t \in [0,1]$, $t\ba + (1-t)\bb \in K$ because $\mathbb{E}[\bx f^t(\bx)] = t\ba + (1-t)\bb $ for $f^t(\bx) = tf^a(\bx) + (1-t)f^b(\bx)$. This implies that $K$ is convex.

We will now prove the desired result by contradiction. Suppose $\bw \not\in K$. Because $K$ is convex, if $\bw\notin K$ then there is a ``separating hyperplane'' unit vector $\bv$ such that 
\[
\sup_{\bu\in K}\langle \bv,\bu\rangle 
\leq 
\langle \bv,\bw\rangle.
\]
(Note that we do not argue here that $K$ is closed.)
Because by assumption we have that $\norm{\bw} < \epsilon$, this implies that 
\[
\sup_{\bu\in K}\langle \bv,\bu\rangle < \epsilon.
\]
For any $\bv$, by definition of $K$ and linearity of expectation we have that
\begin{align*}
\sup_{\bu\in K}\langle \bv,\bu\rangle 
&=\sup_{f:\mathbb{R}^d\to[0,1]}\langle \bv,\mathbb E^{x\sim\mu}[\bx f(\bx)]\rangle \\
&=\sup_{f:\mathbb{R}^d\to[0,1]}
\mathbb E^{\bx\sim\mu}[f(\bx)\langle \bv,\bx\rangle]
\\
&=
\mathbb E^{x\sim\mu}[\langle \bv,\bx\rangle_+]
\\
&\geq 
\epsilon.
\end{align*}
where the last line followed from the assumption of the lemma. This gives a contradiction, and therefore $\bw \in K$ must be true.
\end{proof}

\begin{proof}[Proof of Lemma \ref{lemma:function_f}]

Applying the above lemma with $\bw = \textbf{0}$, there exists $f^0 : \mathbb{R}^d \to [0,1]$ such that $\mathbb E[\bx f^0(\bx)] = \textbf{0}$. Define the function $f'$ as $f'(\bx) = \frac{f^0(\bx) +  2\epsilon}{4\max(\norm{\E[\bx]}, 1)}$. By this construction,
    \[
        \norm{\mathbb E[\bx f'(\bx)]} = \norm{\frac{\mathbb E[\bx f^0(\bx)] + 2\epsilon\mathbb E[\bx]}{4\max(\norm{\E[\bx]}, 1)}} = \frac{2\epsilon \norm{\E[\bx]}}{4\max(\norm{\E[\bx]}, 1)} \le \frac{\epsilon}{2} .
    \]
    Therefore, $\bw := -\mathbb E[\bx f'(\bx)]$  satisfies $\norm{\bw} < \epsilon$. Applying the above lemma again, there exists $f^w: \mathbb{R}^d \to [0,1]$ such that $\mathbb E[\bx f^w(\bx)] = -\mathbb E[\bx f'(\bx)]$. Now define $f(\bx) = \frac{f'(\bx) + f^w(\bx)}{2}$. By construction, we have that $1 \ge f(\bx) \ge \frac{f'(\bx)}{2} \ge \frac{\epsilon}{4\max(\norm{\E[\bx]}, 1)}$. Furthermore, by linearity of expectation and the construction of $f^w$ we have that
    \[
        \mathbb E[\bx f(\bx)] = \mathbb E\left[\bx\frac{f'(\bx) + f^w(\bx)}{2}\right]  = \frac{1}{2}\left(\mathbb E[\bx f'(\bx)] + \mathbb E[\bx f^w(\bx)]\right) = \frac{1}{2}\left(\mathbb E[\bx f'(\bx)] - \mathbb E[\bx f'(\bx)]\right) = 0. 
    \]
    Therefore, $f(\bx)$ is a $\left[\frac{\epsilon}{4\max(\norm{\E[\bx]}, 1)}, 1\right]$-valued function that satisfies $\E[\bx f(\bx)] = 0$ as desired.
\end{proof}

\subsection{Proof of Lemma \ref{lemma:small_weight_cond_exp_newest}}\label{proof:lemma:small_weight_cond_exp_newest}

  \begin{lemma}\label{lemma:inverse_cdf_approx}
        Let $\Phi^C(x) = \P(Z > x)$ for $Z \sim N(0,1)$. Then for $|x| \le 1$, 
            \begin{equation}\label{eq:inverse_cdf_approx}
                \left|\Phi^C(x) - \left(\frac{1}{2} - \frac{1}{\sqrt{2\pi}}x\right)\right| \le |x^3|/15.
            \end{equation}    
    \end{lemma}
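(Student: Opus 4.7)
The plan is to reduce this to a Taylor-style estimate on $\Phi^C$ using an elementary inequality for $1 - e^{-u}$. Note that $\Phi^C(x) = \frac{1}{2} - \int_0^x \phi(t)\,dt$ where $\phi(t) = \frac{1}{\sqrt{2\pi}}e^{-t^2/2}$ is the standard normal density. Since $\phi(0) = \frac{1}{\sqrt{2\pi}}$, the linear approximation $\frac{1}{2} - \frac{x}{\sqrt{2\pi}}$ is exactly $\frac{1}{2} - \int_0^x \phi(0)\,dt$. Subtracting gives the clean identity
\[
\Phi^C(x) - \left(\frac{1}{2} - \frac{x}{\sqrt{2\pi}}\right)
= \int_0^x \bigl(\phi(0) - \phi(t)\bigr)\,dt
= \frac{1}{\sqrt{2\pi}} \int_0^x \bigl(1 - e^{-t^2/2}\bigr)\,dt.
\]

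Next I would apply the elementary inequality $0 \le 1 - e^{-u} \le u$, valid for all $u \ge 0$, with $u = t^2/2$. Because $1 - e^{-t^2/2}$ is a nonnegative even function of $t$, for either sign of $x$ we get
\[
\left|\int_0^x \bigl(1 - e^{-t^2/2}\bigr)\,dt\right|
\le \int_0^{|x|} \frac{t^2}{2}\,dt = \frac{|x|^3}{6}.
\]
Combining the two displays yields
\[
\left|\Phi^C(x) - \left(\frac{1}{2} - \frac{x}{\sqrt{2\pi}}\right)\right| \le \frac{|x|^3}{6\sqrt{2\pi}}.
\]

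Finally, I would observe that $6\sqrt{2\pi} > 15$ (since $\sqrt{2\pi} > 2.5$), so $\frac{1}{6\sqrt{2\pi}} < \frac{1}{15}$, giving the stated bound. Notably, the hypothesis $|x| \le 1$ is not actually needed for this argument; the bound holds globally. There is essentially no obstacle here — the only mild subtlety is being careful with orientation of the integral when $x < 0$, which is handled by evenness of the integrand.
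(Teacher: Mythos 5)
Your proof is correct. It takes a slightly different route from the paper: the paper applies Taylor's theorem with Lagrange remainder, using that $(\Phi^C)''(0)=0$ and bounding $\sup_{|y|\le 1}|(\Phi^C)'''(y)| = \sup_{|y|\le1}\frac{|y^2-1|e^{-y^2/2}}{\sqrt{2\pi}} \le \frac{2}{5}$, which yields $\frac{|x|^3}{15}$ on $|x|\le 1$; you instead write the error exactly as $\frac{1}{\sqrt{2\pi}}\int_0^x\bigl(1-e^{-t^2/2}\bigr)\,dt$ and invoke only the elementary inequality $1-e^{-u}\le u$. Both arguments ultimately produce the same constant $\frac{1}{6\sqrt{2\pi}}$ (the paper simply rounds $\frac{1}{\sqrt{2\pi}}$ up to $\frac{2}{5}$ before dividing by $6$), and your final numerical step is valid, if tight: $6\sqrt{2\pi}\approx 15.04>15$ since $2\pi>6.25$. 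What your version buys is that it avoids computing $(\Phi^C)'''$ and shows the bound holds for all real $x$, not just $|x|\le 1$; the paper's version is a one-line application of a standard remainder estimate, which is all that is needed since the lemma is only ever used with the restriction $|x|\le 1$ in place. Your handling of the sign of $x$ via evenness of the integrand is also correct.
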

    \begin{proof}[Proof of Lemma \ref{lemma:inverse_cdf_approx}]
    A third order Taylor expansion shows the error is at most 
    \[
    |x^3|\cdot \frac{\sup_{|y|\leq 1} |(\Phi^C)'''(y)|}{6}
    .
    \]
    For $|y|\leq 1$ we easily compute 
    \[
    |(\Phi^C)'''(y)|
    =
    \frac{|y^2-1|\,e^{-y^2/2}}{\sqrt{2\pi}}
    \leq 
    1/\sqrt{2\pi}\leq 2/5.
    \qedhere
    \]

    \end{proof}

        \begin{lemma}
        \label{lemma:marks_paper}
            If $X$ is $K$-sub-gaussian, then for any event $E$ and any $\P(E) \ge a > 0$, we have
            \begin{align}
            \label{eq:marks_paper}
                \E[X^21_{E}] 
                &\le  \csg^2\Pr(E)\log(2/a) + \csg^2a,
                \\
            \label{eq:marks_paper2}
                \E[X1_{E}] 
                &\le  O(\Pr(E)\log(1/a)).
            \end{align}       
            \end{lemma}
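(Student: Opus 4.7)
The plan is to prove both inequalities by combining the layer-cake formula with the sub-Gaussian tail bound, then splitting the resulting integral at a threshold chosen to balance the two competing estimates on the integrand.

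For \eqref{eq:marks_paper}, write
\[
\E[X^2 1_E] \;=\; \int_0^\infty \P(X^2 1_E > t)\,dt \;\le\; \int_0^\infty \min\bigl\{2e^{-t/\csg^2},\,\P(E)\bigr\}\,dt,
\]
using that $\P(X^2>t)=\P(|X|>\sqrt t)\le 2e^{-t/\csg^2}$ since $X$ is $\csg$-sub-gaussian. Choose $t_0=\csg^2\log(2/a)$, the point at which the exponential bound equals $a$: on $[0,t_0]$ bound by $\P(E)$ to get $\csg^2\P(E)\log(2/a)$, and on $[t_0,\infty)$ integrate the exponential tail directly to get exactly $\csg^2 a$. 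Summing these gives \eqref{eq:marks_paper}. The hypothesis $\P(E)\ge a$ is not actually required for the calculation itself; it only ensures that $\P(E)$ is the binding estimate on $[0,t_0]$ so that the split yields the sharpest available bound.

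For \eqref{eq:marks_paper2}, apply the same template to
\[
\E[X 1_E] \;\le\; \E[|X| 1_E] \;=\; \int_0^\infty \P(|X|>t,\,E)\,dt \;\le\; \int_0^\infty \min\bigl\{2e^{-t^2/\csg^2},\,\P(E)\bigr\}\,dt.
\]
The balancing threshold is now $t_0=\csg\sqrt{\log(2/a)}$. The $\P(E)$ piece integrates to $\csg\P(E)\sqrt{\log(2/a)}$; for the Gaussian tail, use the standard device $2e^{-t^2/\csg^2}\le (2t/t_0)e^{-t^2/\csg^2}$ on $[t_0,\infty)$, which reduces the tail to an elementary integral of order $\csg a/\sqrt{\log(2/a)}$. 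Under the hypothesis $a\le\P(E)$, both contributions are $O\bigl(\P(E)\sqrt{\log(1/a)}\bigr)$, which is in particular $O(\P(E)\log(1/a))$, as claimed.

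There is no substantive obstacle here beyond locating the correct split point, which is in each case determined by the requirement that the sub-Gaussian estimate and the trivial bound $\P(E)$ coincide at $t_0$. The only mild asymmetry is that the sharpest form of the second bound naturally involves $\sqrt{\log(1/a)}$ rather than $\log(1/a)$, but this difference is absorbed harmlessly into the $O$-notation in the statement.
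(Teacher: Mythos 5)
Your proposal is correct and follows essentially the same route as the paper: the tail-sum (layer-cake) formula, bounding the integrand by $\min\{\P(E),\,2e^{-t/\csg^2}\}$ (resp. $\min\{\P(E),\,2e^{-t^2/\csg^2}\}$), and splitting the integral at the balancing threshold, with the Gaussian tail term absorbed via $a\le\P(E)$ into $O(\P(E)\log(1/a))$. The only differences are cosmetic (the paper splits at $\csg\sqrt{\log(3/a)}$ instead of $\csg\sqrt{\log(2/a)}$ and leaves the final tail estimate implicit), so there is nothing to add.
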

        \begin{proof}[Proof of Lemma \ref{lemma:marks_paper}]
            We prove both estimates using the tail-sum formula.
            For the truncated second moment,
            \begin{align*}
                \E[X^21_{E}] &= \int_{0}^\infty \Pr(|X^21_E| \ge t)dt   \\
                &\le \int_{0}^\infty \min(\Pr(E), \Pr(X^2 \ge t))dt  \\
                &\le \int_{0}^\infty \min(\Pr(E), 2e^{-t/\csg^2})dt  \\
                &\le\Pr(E)\log(2/a)\csg^2 + \int_{\log(2/a)\csg^2}^\infty 2e^{-t/\csg^2}dt \\
                &=\csg^2\Pr(E)\log(2/a) + \csg^2a. 
            \end{align*}
            Similarly for the truncated first moment,
            \begin{align*}
                \E[X1_{E}] &= \int_{0}^\infty \Pr(|X1_E| \ge t)dt   \\
                &\le \int_{0}^\infty \min(\Pr(E), \Pr(X \ge t))dt  \\
                &\le \int_{0}^\infty \min(\Pr(E), 2e^{-t^2/\csg^2})dt  \\
                &\le\Pr(E)\sqrt{\log(3/a)}\csg + \int_{\sqrt{\log(3/a)}\csg}^\infty 2e^{-t^2/\csg^2}dt \\
                &= O(\Pr(E)\log(1/a))
                .\qedhere
            \end{align*}
        \end{proof}

\begin{proof}[Proof of Lemma \ref{lemma:small_weight_cond_exp_newest}]

    Let $d\mu_{X}$ be the law of $X$ and $d\mu_{X \mid r >0}$ be the conditional law of $X$ given the event $r >0$. Then by Bayes rule,
    \begin{align*}
        d\mu_{X \mid r > 0 }(x) &= \frac{\Pr(r > 0 \mid X = x)d\mu_{X}(x)}{\Pr(r > 0 )} \\
        &= \frac{\Pr\left(N(\epsilon x, \sigma^2) > 0\right)d\mu_{X}(x)}{\Pr(r > 0 )} \numberthis \label{eq:frgreaterthan0} \\
        &= \frac{ \Phi^C(-\epsilon x / \sigma)d\mu_X(x)}{\Pr(r > 0)}.
    \end{align*}

    Note that 
    \[
    \E[X \mid r > 0 ]
    = \int_{-\infty}^{\infty} xd\mu_{X\mid r > 0}(x)
    =  \frac{1}{\Pr(r > 0)} \int_{-\infty}^{\infty} x\Phi^C(-\epsilon x / \sigma) d\mu_{X}(x).
    \]
    Since $\Pr(r > 0)\leq 1$ it suffices to lower-bound the latter integral by a suitable positive value.
    As long as $\epsilon/\sigma \le \delta_{L\ref{lemma:small_weight_cond_exp_newest}} \le  \sqrt{\frac{1}{4\sigma_X^2\sqrt{2\pi}}}$, we have
    \begin{align*}
        &\int_{-\infty}^{\infty} x\Phi^C(-\epsilon x / \sigma) d\mu_X(x)\\
        &=  \int_{-\infty}^{\infty} x\left( \frac{1}{2} + \frac{1}{\sqrt{2\pi}}\frac{\epsilon x }{\sigma}\right) d\mu_X(x)  \\
        &\qquad \qquad+ \int_{-\infty}^{\infty} x\left(\Phi^C(\frac{-\epsilon x }{\sigma}) -\frac{1}{2} -\frac{1}{\sqrt{2\pi}}\frac{\epsilon x }{\sigma}  \right) d\mu_X(x) \\
        &=  \left(\frac{\epsilon\E[X^2]}{\sigma\sqrt{2\pi}}  + \int_{-\infty}^{\infty} x\left(\Phi^C(\frac{-\epsilon x }{\sigma}) -\frac{1}{2} -\frac{1}{\sqrt{2\pi}}\frac{\epsilon x }{\sigma}  \right) d\mu_X(x)\right)  && \text{[$\E[X] = 0$]}\\
        &\ge \left(\frac{\epsilon\sigma_X^2}{\sigma\sqrt{2\pi}}  - \frac{2\E[X^4]\epsilon^3}{\sigma^3} \right)&& \text{[Inequality \eqref{eq:helper_equation_cdf} Below]}\\
        &\ge \frac{\epsilon\sigma_X^2}{2\sigma\sqrt{2\pi}}. && \text{[$\frac{\epsilon^2}{\sigma^2} \le \frac{\sigma_X^2}{4\E[X^4]\sqrt{2\pi}}$]}
        \end{align*} 

        Above we used the following estimate \eqref{eq:helper_equation_cdf}. 
        For sufficiently small $\delta_{L\ref{lemma:small_weight_cond_exp_newest}}$,
        \begin{align*}
      &\int_{-\infty}^{\infty} x\left(\Phi^C(\frac{-\epsilon x }{\sigma}) -\frac{1}{2} -\frac{1}{\sqrt{2\pi}}\frac{\epsilon x }{\sigma}  \right) d\mu_X(x) \\
      &=  \int_{|x| \le\frac{\sigma}{10\epsilon}} x\left(\Phi^C(\frac{-\epsilon x }{\sigma}) -\frac{1}{2} -\frac{1}{\sqrt{2\pi}}\frac{\epsilon x }{\sigma}  \right) d\mu_X(x) + \int_{|x| >\frac{\sigma}{10\epsilon}} x\left(\Phi^C(\frac{-\epsilon x }{\sigma}) -\frac{1}{2} -\frac{1}{\sqrt{2\pi}}\frac{\epsilon x }{\sigma}  \right) d\mu_X(x)\\
      &\ge  \int_{|x| \le\frac{\sigma}{10\epsilon}} x\left(\Phi^C(\frac{-\epsilon x }{\sigma}) -\frac{1}{2} -\frac{1}{\sqrt{2\pi}}\frac{\epsilon x }{\sigma}  \right) d\mu_X(x) - \int_{|x| >\frac{\sigma}{10\epsilon}} \left(\frac{|x|}{2} + \frac{\epsilon x^2}{\sigma\sqrt{2\pi}} \right) d\mu_X(x)\\
      &\ge  -\int_{|x| \le\frac{\sigma}{10\epsilon}}|x| \left|\frac{\epsilon x }{\sigma}\right|^3 d\mu_X(x) - \int_{|x| >\frac{\sigma}{10\epsilon}} \left(\frac{|x|}{2} + \frac{\epsilon x^2}{\sigma\sqrt{2\pi}} \right) d\mu_X(x)  \qquad \qquad \qquad \text{[Lemma \ref{lemma:inverse_cdf_approx}]} \\
      &=  -\frac{\E[X^4]\epsilon^3}{\sigma^3} - \int_{|x| >\frac{\sigma}{10\epsilon}} \left(\frac{|x|}{2} + \frac{\epsilon x^2}{\sigma\sqrt{2\pi}} \right) d\mu_X(x) \\
      &\ge  -\frac{\E[X^4]\epsilon^3}{\sigma^3} -\int_{|x| >\frac{\sigma}{10\epsilon}} x^2 d\mu_X(x)   \qquad \qquad\qquad \qquad\qquad \qquad\qquad \qquad \text{[if $ \epsilon/\sigma \le \delta_{L\ref{lemma:small_weight_cond_exp_newest}} \le 1/10$]}\\
      &\ge  -\frac{\E[X^4]\epsilon^3}{\sigma^3} -\left(\csg^2\Pr(E)\log(2/a) + \csg^2a\right)\qquad \qquad  \text{[Lemma \ref{lemma:marks_paper}, $E := \{|x| > \frac{\sigma}{10\epsilon}\}$, $a = 2e^{-\frac{\sigma^2}{100\epsilon^2\csg^2}}$]}\\
      &\ge  -\frac{\E[X^4]\epsilon^3}{\sigma^3} -2\csg^2 e^{-\frac{\sigma^2}{100\csg^2\epsilon^2}}\left(\frac{\sigma^2}{100\epsilon^2\csg^2} + 1\right) \qquad \qquad  \text{[$\Pr(E) \le 2e^{-\sigma^2/(100\epsilon^2\csg^2)}$]}\\
      &\ge  -\frac{2\E[X^4]\epsilon^3}{\sigma^3}  \qquad \qquad\qquad \qquad\qquad \qquad\text{[$ 2\csg^2 e^{-\frac{\sigma^2}{100\csg^2\epsilon^2}}\left(\frac{\sigma^2}{100\epsilon^2\csg^2} + 1\right)  \le \frac{(\sigma_X^2)^2\epsilon^3}{\sigma^3} \le \frac{\E[X^4]\epsilon^3}{\sigma^3}$]} \\ \numberthis \label{eq:helper_equation_cdf}
\end{align*}
where the second to last line holds for sufficiently small $\epsilon/\sigma$.       
\end{proof}

\subsection{Proof of Lemma \ref{lemma:small_change_in_other_weights}}\label{proof:lemma:small_change_in_other_weights}
\begin{proof}[Proof of Lemma \ref{lemma:small_change_in_other_weights}]

First, we observe that $\E[Y \mid r > 0] = \frac{\E[Y \textbf{1}\{r > 0\}]}{\P(r > 0)}$. If $\frac{\epsilon}{\sigma} \csg\sqrt{\log(4)} \le 1/2$, we also have that 
        \begin{align*}
        &\Pr(r > 0 ) \\
        &\ge  \Pr(r > 0 \mid X \ge -\csg\sqrt{\log(4)})\Pr(X \ge -\csg\sqrt{\log(4)}) \\
        &\ge \Phi^C(\frac{\epsilon}{\sigma} \csg\sqrt{\log(4)}) \Pr(X \ge -\csg\sqrt{\log(4)}) \\
        &\ge \Phi^C(\frac{\epsilon}{\sigma} \csg\sqrt{\log(4)})  \left(1 - 2e^{-\csg^2\log(4)/\csg^2}\right) \\
        &\ge \Phi^C(1/2)1/2 && \text{[$\frac{\epsilon}{\sigma} \csg\sqrt{\log(4)} \le 1/2$]} \\
        &\ge 1/8.
    \end{align*}
     Therefore, it is sufficient to upper bound $\left|\E[Y \textbf{1}\{r > 0\}]\right|$. By law of total expectation, 
\begin{align*}
    \E[Y \textbf{1}\{r > 0\}] =  \E[\E[Y \textbf{1}\{r > 0\}\mid X]] = \E\left[ \E[Y \mid X] \P( Z > -\frac{\epsilon}{\sigma} X)\right].
\end{align*}
Define $Q(X) = \E[Y \mid X]$. Because $Y$ is sub-gaussian, the random variable $Q(X)$ must also be sub-gaussian with parameter $\sqrt{18}K$ (see \cite[Exercise 3.1]{van2014probability}). Furthermore, $\E[Q(X)] = \E[\E[Y \mid X]] = \E[Y] = 0$. Now, we have the following
\begin{align*}
    &\left| \E[Y \textbf{1}\{r > 0\}] \right| \\
    &= \left|\E\left[ \E[Y \mid X] \P( Z > -\frac{\epsilon}{\sigma} X)\right] \right|\\
    &=\left| \int_{-\infty}^{\infty} Q(x)\Phi^C(-\frac{\epsilon x}{\sigma})d\mu_X(x) \right|\\
    &=\left| \int_{-\infty}^{\infty} \frac{1}{2}Q(x)d\mu_X(x) +  \int_{-\infty}^{\infty} Q(x)\left(\Phi^C(-\frac{\epsilon x}{\sigma}) - \frac{1}{2}\right)d\mu_X(x)\right| \\
    &= \left|\frac{1}{2}\E[Q(X)] + \int_{-\infty}^{\infty} Q(x)\left(\Phi^C(-\frac{\epsilon x}{\sigma}) - \frac{1}{2}\right)d\mu_X(x) \right|\\
    &= \left|  \int_{-\infty}^{\infty} Q(x)\left(\Phi^C(-\frac{\epsilon x}{\sigma}) - \frac{1}{2}\right)d\mu_X(x) \right|\\
     &\le  \int_{|x| \le \frac{\sigma}{10\epsilon}} |Q(x)|\left(\frac{|\epsilon x|}{\sigma\sqrt{2\pi}} + \left|\frac{\epsilon x}{\sigma}\right|^3\right)d\mu_X(x) 
     \\
     &\quad + 
     \int_{|x| >  \frac{\sigma}{10\epsilon}} Q(x)\left(\Phi^C(-\frac{\epsilon x}{\sigma}) - \frac{1}{2}\right)d\mu_X(x) && \text{[Lemma \ref{lemma:inverse_cdf_approx}]}\\
     &\le  \frac{\epsilon}{\sigma}\int_{-\infty}^{\infty} |Q(x)|\left(\frac{|x|}{\sqrt{2\pi}} + |x^3|\right)d\mu_X(x) + \frac{1}{2}\int_{|x| > \frac{\sigma}{10\epsilon}} |Q(x)|d\mu_X(x) && \text{[$\epsilon/\sigma \le 1$]} \\
     &\le  \frac{\epsilon}{\sigma} \sqrt{\E[Q(X)^2]\E\left[\left(\frac{|x|}{\sqrt{2\pi}} + |x^3|\right)^2\right]} + \frac{1}{2}O\left(\frac{\sigma} {10\epsilon \csg}(2e^{-\frac{\sigma^2}{100\epsilon^2\csg^2}})\right) && \text{[Lemma \ref{lemma:marks_paper} Equation \eqref{eq:marks_paper2}]}\\
     &\le O(\epsilon/\sigma) .
\end{align*}
     where in the second to last line we used that $\Pr(|X| > \frac{\sigma}{10\epsilon}) \le 2e^{-\frac{\sigma^2}{100\epsilon^2\csg^2}}$ and that $Q(X)$ is $\csg\sqrt{18}$-sub-gaussian. The last line again uses that both $X$ and $Q(X)$ are sub-gaussian.

     Therefore, we have shown that
     \[
        |\E[Y \mid r > 0]| = \left|\frac{\E[Y\textbf{1}\{r > 0\}]}{\P(r > 0)}\right| \le 8\left| \E[Y\textbf{1}\{r > 0\}]\right| = O(\epsilon/\sigma).
     \]
     By symmetric arguments to the ones above, we have the same bound on $|\E[Y \mid r \le 0]|$.
      
    \end{proof}

    \section{Proof of Lemma \ref{lemma:x_cond_on_y}}\label{proof:lemma:x_cond_on_y}
\begin{proof}[Proof of Lemma \ref{lemma:x_cond_on_y}]

    Fix any $\bv$ such that $\norm{\bv} = 1$. First, we will show that
    \[
        \Pr(|\bv \cdot(\bZ(\bY) - \bX)| \ge \cpd/2 ) \le \epd/2.
    \]
    To do this, we will show that $\langle \bv,  (\bZ(\bY) - \bX) \rangle$ is a sub-gaussian random variable. Let $\hat{\bX}$ be a draw from the distribution of $\bX \mid \bY$. Then we have that
    \[
        \langle \bv, \hat{\bX} - \bX \rangle = \langle \bv, \hat{\bX} - \bY \rangle + \langle \bv, \bY - \bX \rangle.
    \]
    By standard properties of posterior samples,  $\langle \bv, \hat{\bX} - \bY \rangle$ and $\langle \bv, \bY - \bX \rangle$ are identically distributed with distribution $N(0,\sigma^2)$ for $\sigma^2 = \sum_{i=1}^d \bv_i^2 s_i$ (here one averages over all randomness).
    Therefore, we have that
    \begin{align*}
        &\E\left[ \exp\left(t \langle \bv, \bZ(\bY) - \bX\rangle \right)\right]\\
        &= \E\left[ \exp\left(t \langle \bv, \E[\hat{\bX}] - \bX\rangle \right)\right]\\
        &\le \E\left[\exp\left(t \langle \bv, \hat{\bX} - \bX \rangle \right)\right] && \text{[Jensen]} \\
        &= \E\left[\exp\left(t \langle \bv, \hat{\bX} -\bY + \bY - \bX \rangle \right)\right]  \\
        &\le \sqrt{\E\left[\exp\left(2t \langle \bv, \hat{\bX} -\bY \rangle \right)\right]\E\left[\exp\left(2t \langle \bv, \bY - \bX\rangle \right)\right]} && \text{[Cauchy-Schwarz]}\\
        &\le e^{2t^2\sigma^2}.
    \end{align*}
    Therefore, $\langle \bv,  (\bZ(\bY) - \bX) \rangle$ is sub-gaussian and satisfies the tail bound
    \[
        \Pr(|\langle \bv,  (\bZ(\bY) - \bX) \rangle| > t) \le 2e^{-t^2/(8\sigma^2)}.
    \]
    Taking $t = \cpd/2$, because $\sigma^2 = \sum_{i=0}^d \bv_i^2 s_i \le \max_i s_i \le  \frac{\cpd^2/32}{\log(4/\epd)}$, we have that
    \begin{equation}\label{eq:prob_bound_2}
        \Pr(|\langle \bv,  (\bZ(\bY) - \bX) \rangle| > \frac{\cpd}{2}) \le 2e^{-\cpd^2/(32\sigma^2)} = \epd/2.
    \end{equation}
    Now, we can prove the desired result that
    \begin{align*}
        &\E[(\langle \bZ(\bY) , \bv \rangle)^+] \\
        &\ge \E\left[(\langle \bZ(\bY) , \bv \rangle)^+ \middle |  |\langle \bv,  (\bZ(\bY) - \bX) \rangle| \le \frac{\cpd}{2}, \langle \bX, \bv \rangle \ge \cpd\right]\Pr\left(|\langle \bv,  (\bZ(\bY) - \bX) \rangle| \le \frac{\cpd}{2}, \langle \bX, \bv \rangle \ge \cpd\right)\\
        &\ge \E\left[\frac{\cpd}{2} \middle | |\langle \bv,  (\bZ(\bY) - \bX) \rangle| \le \frac{\cpd}{2}, \langle \bX, \bv \rangle \ge \cpd\right]\Pr\left(|\langle \bv,  (\bZ(\bY) - \bX) \rangle| \le \frac{\cpd}{2}, \langle \bX, \bv \rangle \ge \cpd\right)\\
        &= \frac{\cpd}{2}\Pr\left(|\langle \bv,  (\bZ(\bY) - \bX) \rangle| \le \frac{\cpd}{2}, \langle \bX, \bv \rangle \ge \cpd\right)\\
        &\ge \frac{\cpd}{2}\left(\Pr\left( \langle \bX, \bv \rangle \ge \cpd\right) - \Pr\left(\langle \bv,  (\bZ(\bY) - \bX) \rangle| > \frac{\cpd}{2}\right)\right) \\
        &\ge \frac{\cpd}{2}\left( \epd - \frac{\epd}{2}\right) \qquad \qquad \qquad \qquad \qquad \qquad \qquad \qquad \qquad \qquad \qquad \qquad \text{[Eq \eqref{eq:prob_bound_2} and lemma assum]}\\
        &= \frac{\cpd \epd}{4}.
        \qedhere
    \end{align*}
\end{proof}

    \subsection{Proof of Lemma \ref{lemma:explore_with_small_probability}}\label{proof:lemma:explore_with_small_probability}
    \begin{proof}[Proof of Lemma \ref{lemma:explore_with_small_probability}]
 Let $B$ be a Bernoulli random variable such that $\Pr(B = 1) = \epsilon$ and let $Z \sim N(0,1)$ be independent of $B$ and $X$. Then we can write $R \sim X \cdot B + Z$.

 Then we have that
 \begin{align*}
     &\E[X \mid R > 0] \\
     &= \E[ X \mid R > 0, B = 1]\Pr(B= 1 \mid R > 0) + E[X \mid R > 0, B = 0]\Pr(B= 0 \mid R > 0) \\
     &=  \E[ X \mid X+Z > 0]\Pr(B= 1 \mid R > 0)  \\
     &= \E[ X \mid X+Z > 0] \frac{\Pr(R > 0 \mid B = 1)\Pr(B = 1)}{\Pr(R > 0)} \\
     &\ge \E[ X \mid X+Z > 0]\Pr(R > 0 \mid B = 1)\Pr(B = 1) \\
     &= \E[ X \mid X+Z > 0]\Pr( X+ Z > 0)\epsilon. \numberthis \label{eq:bound_on_cond_exp_R}
 \end{align*}
 Next, we need to lower bound $\E[X \mid X + Z > 0]\Pr( X+ Z > 0)$. Applying Baye's rule gives
    \begin{align*}
        &\E[X \mid X+Z > 0]\Pr( X+ Z > 0) \\
        &= \Pr( X+ Z > 0)\int_{-\infty}^{\infty} xd\mu_{X \mid X+Z > 0}(x)\\
        &=  \int_{-\infty}^{\infty} x\left(\Phi^C(-x)\right)d\mu_X(x) \\
        &= \int_{-\infty}^{\infty} x\left(\Phi^C(-x) - 1/2\right)d\mu_X(x) \qquad \qquad \qquad \qquad \text{[$\E[X] = 0$]}\\
        &\text{}\\
        &\text{Note that $\left(\Phi^C(-x) - 1/2\right)$ has the same sign as $x$ and has magnitude increasing in $|x|$. Therefore,} \\
        &\text{}\\
        &\ge \int_{x \ge \frac{\sigma_X}{\sqrt{10}}} \frac{\sigma_X}{\sqrt{10}} \P\left(0 \le Z \le \frac{\sigma_X}{\sqrt{10}}\right)d\mu_X(x) + \int_{x \le -\frac{\sigma_X}{\sqrt{10}}} \frac{\sigma_X}{\sqrt{10}} \P\left(0 \le Z \le \frac{\sigma_X}{\sqrt{10}}\right)d\mu_X(x) \\
        &= \frac{\sigma_X\P(0 \le Z \le \frac{\sigma_X}{\sqrt{10}})}{\sqrt{10}}\P(|X| > \frac{\sigma_X}{\sqrt{10}}) \\
        &\ge \frac{\sigma_X\P(0 \le Z \le \frac{\sigma_X}{\sqrt{10}})}{\sqrt{10}}\left(\frac{4\sigma_X^2}{5\csg^2\log\left(\frac{20\csg^2}{\sigma_X^2}\right)}\right).\qquad \qquad \qquad \text{[Equation \eqref{eq:x_large_magnitude} below]}\\ 
        &\ge \Omega(\sigma_X^5). \numberthis \label{eq:bound_exp_prob}
    \end{align*}

    Combining Equations \eqref{eq:bound_on_cond_exp_R} and \eqref{eq:bound_exp_prob} gives the desired result of the lemma.

It remains to show the lower bound on $\P(|X| > \frac{\sigma_X}{\sqrt{10}})$ used in the penultimate line above.

Define $a = \csg^2\log\left(\frac{20\csg^2}{\sigma_X^2}\right)\ge \sigma_X^2\log(10)/2 > \sigma_X^2/10$ (using Equation \eqref{eq:bound_var} ). Next, we observe that
\begin{align*}
    &\E[X^2] \\
    &= \int_{0}^{\infty} \P(X^2 > t)dt \\
    &= \int_0^{\infty} \P(X > \sqrt{t})dt  \\
    &= \int_0^{\sigma_X^2/10} \P(X > \sqrt{t})dt + \int_{\sigma_X^2/10}^{a} \P(X > \sqrt{t})dt + \int_{a}^{\infty} \P(X > \sqrt{t})dt \\
    &\le \sigma_X^2/10 + \int_{\sigma_X^2/10}^{a} \P(X > \sqrt{t})dt + \int_{a}^{\infty} 2e^{-t/\csg^2} dt \\
    &= \sigma_X^2/10 + \int_{\sigma_X^2/10}^{a} \P(X > \sqrt{t})dt + 2\csg^2e^{-a/\csg^2} \\
    &= \sigma_X^2/5 + \int_{\sigma_X^2/10}^{a} \P(X > \sqrt{t})dt && \text{[Def of $a$]} \\
    &\le \sigma_X^2/5 + \left( a  - \frac{\sigma_X^2}{10}\right) \P\left(X > \frac{\sigma_X}{\sqrt{10}}\right).  && \text{[$\P(X > \sqrt{t})$ monotone decr.]}
\end{align*}
Since $\E[X^2] = \sigma_X^2$, this implies that
\[
    \left( a  - \frac{\sigma_X^2}{10}\right) \P\left(X > \frac{\sigma_X}{\sqrt{10}}\right) \ge \frac{4\sigma_X^2}{5}.
\]
Therefore, we can conclude that

\begin{equation}\label{eq:x_large_magnitude}
   \P\left(X > \frac{\sigma_X}{\sqrt{10}}\right) \ge \frac{4\sigma_X^2/5}{a - \sigma_X^2/10} \ge \frac{4\sigma_X^2/5}{a} = \frac{4\sigma_X^2}{5\csg^2\log\left(\frac{20\csg^2}{\sigma_X^2}\right)}.
\end{equation}
By symmetry, identical logic as above gives the desired upper bound on $\E[X \mid R \le 0]$.
    \end{proof}

\section{Proof of Proposition \ref{prop:first_exploration}}\label{proof:prop:first_exploration}
\begin{proof}[Proof of Proposition \ref{prop:first_exploration}]
     
     We first show that $\bA^{(t)}$ on Line \ref{line:low_probability_exploration} satisfies $\bA^{(t)} \in S^{\perp}$ when $\Psi = 1$. Recall $\bx^*$ defined as $x^*_\ell = \langle \bl, \bw_\ell \rangle$ and recall that $\bz(\by) = \E[\bx^* \mid \hat{\by} = \by]$. By construction, 
\begin{align*}
    &\E[\bx^* \mid \Psi = 1]\\
    &= \int \E[\bx^*  \mid \Psi = 1, \hat{\by} = \by] d
    \mu_{\hat{\by} \mid \Psi = 1}(\by)  \\
    &= \int \E[\bx^*  \mid \hat{\by} = \by] \frac{\Pr(\Psi = 1 \mid \hat{\by} = \by)}{\Pr(\Psi = 1)}d\mu_{\hat{\by}}(\by) && \text{[$\Psi = 1$ is a function of $\hat{\by}$]}\\
    &= \frac{1}{\Pr(\Psi = 1)}\int \E[\bx^*  \mid \hat{\by} = \by] f\left(\bz(\by)\right) d\mu_{\hat{\by}}(\by)  \\
    &= \frac{1}{\Pr(\Psi = 1)}\int \bz(\by) f(\bz(\by)) d\mu_{\hat{\by}}(\by)  \\
    &= \frac{1}{\Pr(\Psi = 1)}\E\left[\bz(\hat{\by}) f(\bz(\hat{\by})) \right] \\
    &= 0. && \text{[Definition of $f$]}
    \end{align*}
    Because $x^*_\ell = \langle \bl, \bw_\ell \rangle$ for $\ell \le \ell_\lambda$, this implies that $\E[\langle \bl, \bw_\ell \rangle \mid \Psi = 1] = 0$  for $\ell \le \ell_\lambda$. Therefore, we must have that $\E[\bl \mid \Psi =1 ] \in S^{\perp}$. By construction of $\bA^{(t)}$ in Line \ref{line:low_probability_exploration}, this implies that $\bA^{(t)} \in S^{\perp}$ when $\Psi = 1$.

    Define 
    \[
        \bA =  \begin{cases}
          \frac{\E[\bl \mid \Psi = 1]}{\norm{\E[\bl \mid \Psi = 1]}_2} & \text{if $\E[\bl \mid \Psi = 1] \ne 0$} \\
        \bw_{\ell_\lambda + 1} &  \text{otherwise,}
        \end{cases} 
    \]
    in other words $ \bA$ is equal to $\bA^{(t)}$ when $\Psi = 1$. 

    By the choice of $f$, we have that $\P(\Psi = 1) \ge \frac{\epd\cpd}{16\max(\norm{\E[\bz(\hat{\by})]}, 1)} \ge \frac{\epd\cpd}{16\left(\csg\sqrt{\pi}+1\right)} $, where in the last line we used that Equation \eqref{eq:bound_exp} implies
    \[\max(\norm{\E[\bz(\hat{\by})]}, 1) = \max(\norm{\E[\bx^*]}, 1) \le \max\left( \norm{\E[\bl]},1\right) \le \max\left(\csg\sqrt{\pi}, 1\right) \le \csg\sqrt{\pi}+1.
    \]
    By construction, we therefore have that for any realization of $\bz(\hat{\by})$, the probability that $R = r^{(t)} = \langle \bl,  \bA^{(t)} \rangle + w_t = \langle \bl,  \bA \rangle + w_t $ is exactly $\frac{\epd\cpd}{16\left(\csg\sqrt{\pi}+1\right)}$ and otherwise $R \sim N(0,1)$.  
    
    We can now apply Lemma \ref{lemma:explore_with_small_probability} with $X = \langle \bl,  \bA \rangle$, and $\epsilon = \frac{\epd\cpd}{16\left(\csg\sqrt{\pi}+1\right)}$ to get that either $\ba = \bw_{\ell_\lambda + 1}$ or 
    \[
        \left| \langle \bA, \ba \rangle \right| = \left| \left\langle \bA, \E[\bl \mid 1_{R > 0} ]  \right\rangle \right| = \left| \E[\langle \bl,  \bA \rangle \mid 1_{R > 0}] \right| \ge \frac{c_{L\ref{lemma:explore_with_small_probability}}\epd \cpd \mathrm{Var}(\langle \bl,  \bA \rangle)^{2.5}}{16\left(\csg\sqrt{\pi}+1\right)} \ge \frac{c_{L\ref{lemma:explore_with_small_probability}}\epd \cpd \cvar^{2.5}}{16\left(\csg\sqrt{\pi}+1\right)},
    \]
    where in the last inequality we used Assumption \ref{it:var_bound}.
    
    Because $\bA \in S^{\perp}$ and $\norm{\bA} = 1$, the previous equation implies the desired result that $\norm{\mathcal{P}_{S^{\perp}}(\ba)} \ge \frac{c_{L\ref{lemma:explore_with_small_probability}}\epd \cpd \cvar^{2.5}}{16\left(\csg\sqrt{\pi}+1\right)}$.
\end{proof}

\section{Proof of Proposition \ref{prop:explore_a_bit_more}}\label{proof:prop:explore_a_bit_more}
\begin{proof}[Proof of Proposition \ref{prop:explore_a_bit_more}]

The first step is to rewrite $R$ from Algorithm \ref{alg:explore_a_bit_more} Line \ref{line:R_def_explore_a_bit_more} so that we can apply  Lemmas \ref{lemma:small_weight_cond_exp_newest} and \ref{lemma:small_change_in_other_weights}.

Define
\[
    W := \sum_{t'=t}^{t+L-1} \left( w_{t'} - \sum_{k=1}^{j} (c_kq_k^{t'} - c_k\langle \bv_k,\bl \rangle)\right) =  \sum_{t'=t}^{t+L-1} \left( w_{t'} - \sum_{k=1}^{j} c_kq_k^{t'}\right) + L\sum_{k=1}^j c_k\langle \bv_k,\bl \rangle.
\]
Note that $W$ is normally distributed with mean $0$ and variance $\sigma_W^2 := L(1 + \sum_{k=1}^{j} c_k^2) $. By construction, we can rewrite $R$ as
    \begin{align*}
    R &=  \sum_{t' = t}^{t+L-1} \left(r^{(t')} - \sum_{k=1}^{j} c_kq_k^{t'}\right) \\
     &=  \sum_{t' = t}^{t+L-1} \left( \left(\langle \ba, \bl \rangle + w_{t'}\right) - \sum_{k=1}^{j} c_kq_k^{t'}\right) \\
     &=  \sum_{t' = t}^{t+L-1} \langle \ba, \bl\rangle - L\sum_{k=1}^{j} c_k\langle \bv_k,\bl \rangle + W \\
     &= L  \langle \ba, \bl \rangle - L \langle \mathcal{P}_{S}(\ba), \bl \rangle + W && \text{[Lemma \ref{lemma:rewriting_as_lin_combo} implies $\mathcal{P}_S(\ba) = \sum_{k=1}^j c_k\bv_k$]}\\
    &= L\langle (\ba - \mathcal{P}_{S}(\ba) ), \bl \rangle + W\\
    &= L\langle \mathcal{P}_{S^{\perp}}(\ba), \bl \rangle + W\\
    &= L\norm{\mathcal{P}_{S^{\perp}}(\ba)}\langle \bx, \bl \rangle + W.\numberthis \label{eq:rewrite_R}
    \end{align*}
Therefore, $R$ is exactly in the form necessary to apply Lemmas \ref{lemma:small_weight_cond_exp_newest} and \ref{lemma:small_change_in_other_weights}. In order to apply these lemmas, we need that $\frac{ L\norm{\mathcal{P}_{S^{\perp}}(\ba)}}{\sigma_W} \le \delta_{L\ref{lemma:small_weight_cond_exp_newest}}$ and $\frac{ L\norm{\mathcal{P}_{S^{\perp}}(\ba)}}{\sigma_W} \le \delta_{L\ref{lemma:small_change_in_other_weights}}$ respectively. 

To see this, note that $\mathcal{P}_{S^{\perp}}(\ba) \le \sqrt{\lambda}$ (as otherwise ExponentialGrowth would not have been called), and therefore
\begin{align*}
    \frac{ L\norm{\mathcal{P}_{S^{\perp}}(\ba)}}{\sigma_W} &\le \frac{L\sqrt{\lambda}}{\sqrt{L(1+\sum_{k=1}^j c_k^2)}} \\
    &=\sqrt{\frac{4\lambda d(\E[\bl_1] + 1)^2}{c_{L\ref{lemma:small_weight_cond_exp_newest}}^2} } \\
    &\le \sqrt{\frac{4\lambda d(\csg \sqrt{\pi}+ 1)^2}{(\cvar/\sqrt{8\pi})^2} }  && \text{[Equation \eqref{eq:bound_exp}, Assum \ref{it:var_bound}]} \\
    &\le \min(\delta_{L\ref{lemma:small_weight_cond_exp_newest}}, \delta_{L\ref{lemma:small_change_in_other_weights}}, 1/c_{L\ref{lemma:small_change_in_other_weights}}), \numberthis \label{eq:suff_small_lambda}
\end{align*}
where in the last line we used $\lambda \le \min(\delta_{L\ref{lemma:small_weight_cond_exp_newest}}, \delta_{L\ref{lemma:small_change_in_other_weights}}, 1/c_{L\ref{lemma:small_change_in_other_weights}})^2\frac{(\cvar/\sqrt{8\pi})^2}{4d(\csg \sqrt{\pi}+ 1)^2}$ by our assumption on $\lambda$.

Applying Lemmas \ref{lemma:small_weight_cond_exp_newest} and \ref{lemma:small_change_in_other_weights} gives the following two bounds. Define $\by = \E[\bl \mid 1_{R > 0}]$. The first is a lower bound on $|\langle \bx , \by \rangle|$.  Importantly, we can apply Lemma \ref{lemma:small_weight_cond_exp_newest} for $X = \langle \bx, \bl \rangle$ because of Equations  \eqref{eq:rewrite_R} and \eqref{eq:suff_small_lambda}. 
\begin{align*}
|\langle \bx , \by \rangle| &=|\langle \bx, \E[\bl \mid 1_{R > 0}]\rangle| \\
&= |\E[\langle \bx, \bl \rangle \mid 1_{R > 0}]| \\
&\ge  \frac{c_{L\ref{lemma:small_weight_cond_exp_newest}}L\norm{\mathcal{P}_{S^{\perp}}(\ba)}}{\sigma_W} && \text{[Lemma \ref{lemma:small_weight_cond_exp_newest}]}\\
&=  \frac{c_{L\ref{lemma:small_weight_cond_exp_newest}}\sqrt{L}\norm{\mathcal{P}_{S^{\perp}}(\ba)}}{\sqrt{1 + \sum_{k=1}^j c_k^2}} \\
&=  c_{L\ref{lemma:small_weight_cond_exp_newest}}\sqrt{\frac{4d(\E[\bl_1] + 1)^2}{c_{L\ref{lemma:small_weight_cond_exp_newest}}^2} }\norm{\mathcal{P}_{S^{\perp}}(\ba)} \\
&=\sqrt{4d(\E[\bl_1] + 1)^2}\norm{\mathcal{P}_{S^{\perp}}(\ba)} \\
&= 2\sqrt{d}(\E[\bl_1] + 1)\norm{\mathcal{P}_{S^{\perp}}(\ba)}. \numberthis \label{eq:xy_lower}
\end{align*}

 The next equation is an upper bound on $\norm{\by_i}$ for all $i \in [d]$:
 \begin{align*}
     |\by_i| &= \E[\bl_i \mid 1_{R > 0}] \\
     &\le \E[\bl_i] + c_{L\ref{lemma:small_change_in_other_weights}}L\norm{\mathcal{P}_{S^{\perp}}(\ba)}/\sigma_W  && \text{[Lemma \ref{lemma:small_change_in_other_weights}]} \\
     &\le \E[\bl_i] + 1. && \text{[Equation \eqref{eq:suff_small_lambda}]}
 \end{align*}
 
Using the above equation, we can bound $\norm{\by}_2$ as follows. Because $\E[\bl_1] \ge \E[\bl_i]$ for all $i$,
 \[
 \norm{\by}_2 \le \sqrt{\sum_{i=1}^{d} \left(\E[\bl_i] + 1\right)^2} \le \sqrt{d}(\E[\bl_1] + 1).
 \]
 Equation \eqref{eq:xy_lower} implies that $\by \ne \textbf{0}$. This implies by construction that $\bb = \expl(1_{R > 0}, \bw_{\ell_{\lambda} + 1}) = \frac{\by}{\norm{\by}}$. 
 Putting everything together, we have that
 \begin{align*}
     |\langle \bx, \bb \rangle| 
     = 
     \frac{|\langle \bx , \by \rangle|}{\norm{\by}_2} \ge \frac{ 2\sqrt{d}(\E[\bl_1] + 1))\norm{\mathcal{P}_{S^{\perp}}(\ba)}}{ \sqrt{d}(\E[\bl_1] + 1)} 
     \ge 2\norm{\mathcal{P}_{S^{\perp}}(\ba)}.
 \end{align*}
Finally, because $\bx \in S^{\perp}$, this implies the desired result that
\begin{align*}
    \norm{\mathcal{P}_{S^{\perp}}(\bb)} &\ge |\langle \bx, \bb \rangle| \ge 2\norm{\mathcal{P}_{S^{\perp}}(\ba)}.
    \qedhere
\end{align*}
\end{proof}

\section{Proof of Theorem \ref{thm:number_of_steps}}\label{proof:thm:number_of_steps}
\begin{proof}[Proof of Theorem \ref{thm:number_of_steps}]

We begin by proving that Algorithm \ref{alg:non_ind_eps_BIC} is BIC. There are four places where we set $\bA^{(t)}$. The first is in the Line \ref{line:sete1} of Algorithm \ref{alg:non_ind_eps_BIC}, where we set $\bA^{(t)}= \textbf{e}_1$. This is BIC because we assumed (without loss of generality) that $\E[\bl_i] = 0$ for all $i > 1$ and $\E[\bl_1]\geq 0$.

The second place we set $\bA^{(t)}$ is in Line \ref{line:low_probability_exploration} of Algorithm \ref{algo:first_exploration}. This choice of $\bA^{(t)}$ is BIC with the signal $\Psi$ by construction and Lemma \ref{lemma:bic_sphere}.

The third place we set $\bA^{(t)}$ is in Line \ref{line:explore_actions} of Algorithm \ref{alg:explore_a_bit_more}. In order for this to be BIC, we must show that every input $\ba$ to Algorithm 3 is BIC. The first time Algorithm \ref{alg:explore_a_bit_more} is used for any fixed value of $j$, the input action $\ba$ is the action returned by Algorithm \ref{algo:first_exploration}. This is BIC for signal $R$ defined on Line \ref{line:R_def_first_exploration} of Algorithm \ref{algo:first_exploration} by construction. Each subsequent call to Algorithm \ref{alg:explore_a_bit_more} for a fixed value of $j$ uses an action $\ba$ that is returned by the previous call to Algorithm \ref{alg:explore_a_bit_more}. This is BIC for signal $R$ defined on Line \ref{line:R_def_explore_a_bit_more} of Algorithm \ref{alg:explore_a_bit_more}.

The final time we set an action is on Line \ref{line:sufficient_explore_action} of Algorithm \ref{alg:non_ind_eps_BIC} This action is again an action returned by the last call to Algorithm \ref{alg:explore_a_bit_more}, which as argued above is BIC for signal $R$.

The rest of the proof will focus on bounding the sample complexity of Algorithm \ref{alg:non_ind_eps_BIC}.

    First, we will bound the number of times the inner while loop (Line \ref{line:inner_while_loop}) calls Algorithm \ref{alg:explore_a_bit_more} for each value of $j$. By Proposition \ref{prop:first_exploration}, the action returned by Algorithm \ref{algo:first_exploration} satisfies $\norm{\mathcal{P}_{S^{\perp}}(\ba)} \ge c_{P\ref{prop:first_exploration}}\cvar^{2.5}\epd \cpd$. Furthermore, by Proposition \ref{prop:explore_a_bit_more}, $\norm{\mathcal{P}_{S^{\perp}}(\ba)}$ doubles with each call to Algorithm \ref{alg:explore_a_bit_more}. Therefore, $\norm{\mathcal{P}_{S^{\perp}}(\ba)} \ge \sqrt{\lambda}$ will be satisfied after at most $\log_2\left(\frac{\sqrt{\lambda}}{c_{P\ref{prop:first_exploration}}\cvar^{2.5}\epd \cpd}\right) = O\left(\log(\frac{1}{\cvar\epd\cpd})\right)$ calls to Algorithm \ref{alg:explore_a_bit_more}.

    Next we will bound the number of steps in each call to Algorithm \ref{alg:explore_a_bit_more}, which is equivalent to bounding the  $L$ defined on Line \ref{line:L_def} of Algorithm \ref{alg:explore_a_bit_more}. To do this, we note that the $c_i$ in Algorithm \ref{alg:explore_a_bit_more} are the same as the $c_i$ in Lemma \ref{lemma:rewriting_as_lin_combo} with $\epsilon = \lambda$, $\ell = \ell_\lambda$, $\bu = \mathcal{P}_{S^{\perp}}(\ba)$, and $\bv_1,...,\bv_j$. This implies that
    \begin{align*}
        \sum_{k=1}^j c_k^2 &\le \frac{1}{\lambda}. && \text{[Lemma \ref{lemma:rewriting_as_lin_combo}]} 
    \end{align*}

    Therefore, we can bound $L$ as follows:
    \begin{align*}
        L &= \frac{4d(\E[\bl_1] + 1)(1 +  \sum_{k=1}^j c_k^2)}{c_{L\ref{lemma:small_weight_cond_exp_newest}}^2} 
        \le \frac{4d(\E[\bl_1] + 1)(1 +  \frac{1}{\lambda})}{c_{L\ref{lemma:small_weight_cond_exp_newest}}^2} \\
         &\le \frac{4d(\csg\sqrt{\pi} + 1)(1 +  \frac{1}{\lambda})}{\cvar^2/(8\pi)}    && \text{[Assum \ref{it:var_bound}, Eq \eqref{eq:bound_exp}]} \numberthis \label{eq:bound_on_L} \\
        &= O\left( \frac{d }{\lambda\cvar^2} \right).
    \end{align*}
    For each loop of the while loop on Line \ref{line:outer_while_loop}, we also have $\kappa = O(\frac{\log(1/\epd)}{\lambda \cpd^2} + \frac{d}{\lambda\cvar^2})$ steps in the loop on Line \ref{line:forloop}. All together, this gives that each iteration of the loop on Line \ref{line:outer_while_loop} takes at most 
    \begin{equation}\label{eq:bound_on_one_loop}
        O\left(\frac{d\log(\frac{1}{\cvar\epd\cpd})}{ \lambda\cvar^2} + \frac{\log(1/\epd)}{\lambda\cpd^2}\right) = O\left(\log\left(\frac{1}{\cvar \epd \cpd}\right)\left(\frac{d}{ \lambda\cvar^2} + \frac{1}{\lambda \cpd^2}\right)\right)
    \end{equation}
    steps. Next, we will bound the number of iterations of the while loop on Line \ref{line:outer_while_loop}.
    
For each $j$, we will apply Lemma \ref{lemma:next_largest_eigenvalue} with $\epsilon = \lambda$, $\bu =\bv_{j+1}$, and the vectors $\bv_1,...,\bv_j$. By construction of the algorithm, $S^{\perp}$ is non-empty because the algorithm has not yet terminated, and $\norm{\mathcal{P}_{S^\perp}(\bv_{j+1})}^2 \ge \lambda$ by the termination condition of the while loop on Line \ref{line:inner_while_loop} of Algorithm \ref{alg:non_ind_eps_BIC}. Therefore, this satisfies the assumption of Lemma \ref{lemma:next_largest_eigenvalue}. Define $\lambda^j_1,...,\lambda^j_d$ as the eigenvalues of $\bM^j := \sum_{i=1}^{j} \bv_i^{\otimes 2}$ and define $\ell^j$ as the largest index such that $\lambda^j_{\ell^j} \ge 200d^3/\lambda^2$ (and $\ell^j = 0$ if all eigenvalues of $\bM^j$ are less than $200d^3/\lambda^2$). Now define
    \[
        \Delta^j = \sum_{i= \ell^j+1}^{d} \left(\frac{200d^3}{\lambda^2} 
 - \lambda_i \right).
    \]
    Note that for any fixed $i$, the $i$th eigenvalue does not decrease between $\bM^j$ and $\bM^{j+1}$. Because of this monotonicity, Lemma \ref{lemma:next_largest_eigenvalue} implies that for every round $j$, either
    \[
        \ell^{j+1} \ge  \ell^{j} + 1\quad\text{ or }\quad
        \Delta^{j+1} \le \Delta^j - \frac{\lambda}{2}.
    \]
    Because $\ell^1 \ge 0$ and $\Delta^1 \le \frac{200d^3}{\lambda^2} \cdot d = \frac{200d^4}{\lambda^2}$, this implies that after $\frac{\frac{200d^4}{\lambda^2}}{\lambda/2} + d$ applications of Lemma \ref{lemma:next_largest_eigenvalue}, either $\ell^j = d$ or $\Delta^j = 0$. This means that after $\frac{400d^4}{\lambda^2} + d$  applications of Lemma \ref{lemma:next_largest_eigenvalue}, the smallest eigenvalue of $\bM^j$ must be at least $200d^3/\lambda^2 \ge \lambda$. However, this means that the algorithm must terminate before round $400d^4/\lambda^3 + d$. Therefore, the number of iterations of the while loop on Line \ref{line:outer_while_loop} is less than $O(d^4/\lambda^3)$.
    Putting everything together, the total number of steps needed for $\lambda$-exploration is upper bounded by
    \[
      O\left(\log\left(\frac{1}{\cvar \epd \cpd}\right)\left(\frac{d}{ \lambda\cvar^2} + \frac{1}{\lambda \cpd^2}\right)\right) \cdot O\left(\frac{d^4}{\lambda^3}\right) =    O\left(\log\left(\frac{1}{\cvar \epd \cpd}\right)\left(\frac{d^5}{ \lambda^4\cvar^2} + \frac{d^4}{\cpd^2\lambda^4}\right)\right).
      \qedhere
    \]  
\end{proof}

\section{Proof of Proposition~\ref{prop:r-regular}}
\label{proof:prop:r-regular}

\begin{proof}[Proof of Proposition~\ref{prop:r-regular}]
First, for any unit vector $\bv$, we have $B_{r/3}(2r\bv/3)\subseteq \mathcal K\subseteq B_1(0)$.
Therefore 
\[
\mu(B_{r/3}(2r\bv/3))
=
\Vol(B_{r/3}(2r\bv/3))/\Vol(\mathcal K)
\geq 
\Vol(B_{r/3}(2r\bv/3))/\Vol(B_1(0))
= (r/3)^d.
\]
Since $\langle \bx,\bv\rangle\geq r/3$ for all $\bx\in B_{r/3}(2r\bv/3)$, this confirms the values $(\cpd,\epd)=(r/3,(r/3)^d)$.

The bound on $\cvar$ follows by \cite[Lemma 3.2]{sellke2023incentivizing} and Jensen's inequality since $\mathcal K$ has width at least $2r$ in any direction.
The bound on $\csg$ is trivial since $2e^{-(t/1.25)^2}\geq 1$ for $|t|\leq 1$.
\end{proof}

\section{Proof of Proposition~\ref{prop:log-concave}}
\label{proof:prop:log-concave}

We first recall several useful facts on log-concave distributions. 
Throughout we take $\mu$ to be $\alpha$-log-concave and $\beta$-log-smooth with mode $\bx^*$ and mean $\bar\bx$, possibly in dimension $1$. (The proof will use $1$-dimensional projections of the original measure $\mu$.)
We will write $\bx\sim\mu$ instead of $\bl\sim\mu$.

\begin{fact}[{\cite[Lemma 5]{dwivedi2019log}, \cite[Theorem 1]{durmus2019high}}]
\label{fact:mode-log-concave}
For $x\sim\mu$, we have $\mathbb E[\|x-x^*\|^2]\leq 1/\alpha$ and with probability $1-\delta$:
\begin{equation}
\label{eq:mode-central-log-concave}
\|x-x^*\|_2
\leq 
2\alpha^{-1/2}
\lt(
1+\sqrt{\frac{\log(1/\delta)}{d}}+\sqrt[4]{\frac{\log(1/\delta)}{d}}
\rt)
.
\end{equation}
\end{fact}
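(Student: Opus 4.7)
Since Fact~\ref{fact:mode-log-concave} quotes two established results, my plan is to sketch how each bound follows from standard concentration properties of log-concave measures and then cite the references to match constants. The notational convention is that the $\alpha$ appearing here is the effective log-concavity parameter (i.e., it corresponds to $\alpha d$ in the setup of Proposition~\ref{prop:log-concave}); I take this as given.

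For the second-moment bound, the first step is the Brascamp--Lieb inequality: any $\alpha$-strongly log-concave measure on $\bbR^d$ has $\mathrm{Cov}(\mu) \preceq \alpha^{-1} I_d$, so $\E[\|\bx - \bar{\bx}\|^2] \leq d/\alpha$. Next I would bound the mean-to-mode gap using the optimality condition $\nabla V(\bx^*) = 0$ together with strong convexity of $V$, which is the content of \cite[Lemma 5]{dwivedi2019log}. Combining these via $\E[\|\bx - \bx^*\|^2] = \E[\|\bx - \bar\bx\|^2] + \|\bar\bx - \bx^*\|^2$ and absorbing absolute constants gives the $1/\alpha$ bound in the normalization used here.

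For the tail bound, the key tool is that $\alpha$-strong log-concavity implies a log-Sobolev inequality with constant $1/\alpha$ by Bakry--\'Emery, and hence (Herbst's argument) sub-Gaussian concentration for the $1$-Lipschitz function $f(\bx) := \|\bx - \bx^*\|_2$:
\[
\Pr\big(|f(\bx) - \E[f(\bx)]| \geq s\big) \leq 2 e^{-\alpha s^2/2}.
\]
Jensen and the second-moment bound give $\E[f] \leq \alpha^{-1/2}$, and picking $s = \alpha^{-1/2}\sqrt{2\log(1/\delta)/d}$ yields the ``$1 + \sqrt{\log(1/\delta)/d}$'' portion of the bound. The extra $\sqrt[4]{\log(1/\delta)/d}$ term arises when one instead applies concentration to $f^2$ (which has subexponential rather than sub-Gaussian tails for log-concave $\mu$, via Bobkov--Ledoux) and converts back via $\sqrt{a+b}\leq\sqrt{a}+\sqrt{b}$: the subexponential part contributes a $\log(1/\delta)/d$ term inside the square root, producing the fourth-root.

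The main obstacle is just tracking constants carefully to match the precise form $2\alpha^{-1/2}(1 + \sqrt{\log(1/\delta)/d} + \sqrt[4]{\log(1/\delta)/d})$; this is done in \cite[Theorem 1]{durmus2019high}, whose statement I would simply quote after reconciling the normalization.
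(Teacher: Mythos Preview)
The paper gives no proof of this Fact: it is stated as a citation to \cite{dwivedi2019log} and \cite{durmus2019high} and used as a black box, so there is nothing on the paper's side to compare your argument against.

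Your sketch is mathematically sound in outline (Brascamp--Lieb for the second moment, Bakry--\'Emery $+$ Herbst for sub-Gaussian concentration of $\|\bx-\bx^*\|$, subexponential tails of $\|\bx-\bx^*\|^2$ for the fourth-root correction). Two points are worth tightening. First, the normalization: you assert that the $\alpha$ in the Fact is the raw strong-log-concavity constant, but then Brascamp--Lieb gives $\E\|\bx-\bar\bx\|^2\le d/\alpha$, and no ``absorbing of absolute constants'' removes the factor $d$. The Fact as written is only consistent if the raw log-concavity parameter is $\alpha d$ (this is the convention of Proposition~\ref{prop:log-concave}, despite the ambiguous preamble in the appendix); under that reading you get $\E\|\bx-\bar\bx\|^2\le d/(\alpha d)=1/\alpha$ and the tail bound acquires the $\sqrt{\log(1/\delta)/d}$ scaling for the same reason. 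You should state this explicitly rather than paper over it. Second, the detour through the mean is unnecessary for the second-moment bound: for a $c$-strongly log-concave density $e^{-V}$ with mode $\bx^*$, integration by parts gives $d=\E[\langle \bx-\bx^*,\nabla V(\bx)\rangle]\ge c\,\E\|\bx-\bx^*\|^2$ directly (using $\nabla V(\bx^*)=0$ and strong convexity), which is exactly what \cite[Lemma~5]{dwivedi2019log} records and yields the sharp constant.
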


\begin{fact}[{\cite[Lemma 2]{chewi2023entropic}}]
We have the covariance bounds
\begin{equation}
\label{eq:covariance-LB-log-concave}
\frac{I_d}{\alpha d}
\succeq \Cov(\mu)
\succeq 
\frac{I_d}{\beta d}
.
\end{equation}
\end{fact}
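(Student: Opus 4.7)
The plan is to derive both inequalities from two classical tools in log-concave analysis: the Brascamp--Lieb variance inequality gives the upper bound, and integration by parts combined with Cauchy--Schwarz gives the lower bound. Together these recover the statement attributed to \cite{chewi2023entropic}.

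For the upper bound $\Cov(\mu) \preceq I_d/(\alpha d)$, I would invoke the Brascamp--Lieb variance inequality, which asserts that for any smooth test function $f$,
\[
\Var_\mu(f)\leq \mathbb{E}_\mu\big[\langle \nabla f(X),\, (\nabla^2 V(X))^{-1}\nabla f(X)\rangle\big].
\]
The strong log-concavity hypothesis $\nabla^2 V\succeq \alpha d\, I_d$ means $(\nabla^2 V)^{-1}\preceq (\alpha d)^{-1} I_d$, so the right side is at most $(\alpha d)^{-1}\mathbb{E}[\|\nabla f\|^2]$. Taking $f(x)=\langle \bv,x\rangle$ for an arbitrary unit vector $\bv$ collapses this to $\Var(\langle\bv,X\rangle)\leq 1/(\alpha d)$, which is exactly the desired operator inequality.

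For the lower bound $\Cov(\mu)\succeq I_d/(\beta d)$, fix a unit vector $\bv$ and consider the ``score in direction $\bv$'', namely $g(x):=\langle \bv,\nabla V(x)\rangle$. Two integration-by-parts identities are the backbone. First, using $\nabla e^{-V}=-e^{-V}\nabla V$ and integrating by parts once,
\[
\mathbb{E}_\mu[\langle \bv,X\rangle\, g(X)] \;=\; -\int \langle \bv,x\rangle\,\langle \bv,\nabla e^{-V(x)}\rangle\, dx \;=\; \int e^{-V(x)}\, dx \;=\; 1.
\]
Second, the same manipulation applied to $g$ itself gives
\[
\mathbb{E}_\mu[g(X)^2] \;=\; \mathbb{E}_\mu\big[\bv^\top \nabla^2 V(X)\,\bv\big]\;\leq\; \beta d,
\]
where the inequality uses log-smoothness. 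An analogous one-step computation gives $\mathbb{E}_\mu[g(X)]=0$, so Cauchy--Schwarz yields
\[
1 \;=\; \mathbb{E}_\mu[g(X)\,\langle \bv,X-\bar x\rangle]^2 \;\leq\; \mathbb{E}_\mu[g(X)^2]\cdot \Var(\langle \bv,X\rangle) \;\leq\; \beta d\cdot \Var(\langle \bv,X\rangle),
\]
and hence $\Var(\langle \bv,X\rangle)\geq 1/(\beta d)$ for every unit $\bv$, which is the operator bound.

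The only technical point to verify is that the boundary terms in each integration by parts vanish. This is immediate under $\alpha d$-log-concavity, since $e^{-V(x)}\|x\|^k\to 0$ at infinity for every $k$ (the density decays at least as fast as a Gaussian with covariance $I_d/(\alpha d)$, so all the polynomial-times-density integrands we encounter are Schwartz-like). With that in hand, both directions are one-line computations and neither introduces any dependence on $d$ beyond the stated $\alpha d$ and $\beta d$ factors.
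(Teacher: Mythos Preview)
Your proof is correct. The paper does not give its own argument for this fact; it simply records it as a citation to \cite{chewi2023entropic}, so there is no ``paper's proof'' to compare against. Your derivation via Brascamp--Lieb for the upper bound and the Cram\'er--Rao-type integration-by-parts plus Cauchy--Schwarz for the lower bound is the standard route, and all three identities ($\E[g]=0$, $\E[\langle\bv,X\rangle g(X)]=1$, $\E[g^2]=\E[\bv^\top\nabla^2 V\,\bv]$) are handled correctly, with the boundary-term justification via Gaussian-type decay being adequate under $\alpha d$-strong log-concavity.
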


\begin{fact}
\label{fact:projection-preserves-alpha-beta}
Any $1$-dimensional projection of $\mu$ is also $\alpha d$-log-concave and $\beta d$-log-smooth.
\end{fact}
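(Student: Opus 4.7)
The plan is to write out the log-density of an arbitrary $1$-dimensional marginal explicitly and bound its second derivative using a variance identity. Fix a unit vector $\bv \in \mathbb{R}^d$. The marginal of $\mu$ in direction $\bv$ has density proportional to $e^{-W(t)}$ with
\[
W(t) = -\log \int_{\bv^{\perp}} e^{-V(t\bv + \by)}\, d\by,
\]
and two differentiations under the integral yield the standard identity
\[
W''(t) = \E_{\bY\sim\mu_t}\!\bigl[\bv^{\top}\nabla^2 V(t\bv+\bY)\bv\bigr] - \Var_{\bY\sim\mu_t}\!\bigl[\langle \bv,\nabla V(t\bv+\bY)\rangle\bigr],
\]
where $\mu_t$ denotes the conditional law of $\bx - t\bv \in \bv^{\perp}$ given $\langle \bv,\bx\rangle = t$.

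The upper bound $W''(t) \le \beta d$ is immediate from this formula: the variance term is nonnegative, and $\bv^{\top}\nabla^2 V \bv \le \beta d$ pointwise since $\nabla^2 V \preceq \beta d\, I$ by log-smoothness.

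The lower bound $W''(t) \ge \alpha d$ is the substantive step. The conditional $\mu_t$ is itself log-concave on $\bv^{\perp}$ with log-density whose Hessian equals $P_{\bv^{\perp}}\nabla^2 V P_{\bv^{\perp}} \succeq \alpha d\, I$, so the Brascamp--Lieb variance inequality applied to $f(\by) = \langle \bv, \nabla V(t\bv+\by)\rangle$ (whose $\by$-gradient is $P_{\bv^{\perp}}\nabla^2 V\,\bv$) gives
\[
\Var_{\mu_t}[f] \le \E_{\mu_t}\!\bigl[(P_{\bv^{\perp}}\nabla^2 V\, \bv)^{\top}\bigl(P_{\bv^{\perp}}\nabla^2 V P_{\bv^{\perp}}\bigr)^{-1}(P_{\bv^{\perp}}\nabla^2 V\, \bv)\bigr].
\]
Subtracting this bound from $\E_{\mu_t}[\bv^{\top}\nabla^2 V\bv]$, the integrand becomes exactly the Schur complement of the lower-right block of $\nabla^2 V$ (in the $(\bv, \bv^{\perp})$ decomposition) in $\nabla^2 V$ itself. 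A direct minimization argument over the lower component shows that whenever $M\succeq \alpha d\, I$, the Schur complement of its lower-right block is at least $\alpha d$, hence $W''(t) \ge \alpha d$.

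The main obstacle is the lower bound, which really does need a genuine functional inequality rather than just pointwise Hessian control: Brascamp--Lieb is what converts the variance term into a Hessian-weighted quantity exactly matching the Schur complement, so that the original strong-convexity parameter $\alpha d$ can be recovered without loss. An alternative route would invoke Caffarelli's contraction theorem --- representing $\mu$ as the pushforward of $N(0,(\alpha d)^{-1}I)$ under a $1$-Lipschitz map and post-composing with the $1$-Lipschitz projection $\bx\mapsto\langle\bv,\bx\rangle$, which preserves strong log-concavity --- but I would favor the variance-identity approach because the same identity simultaneously yields the log-smoothness bound.
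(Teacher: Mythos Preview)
Your proof is correct. For the log-smoothness bound $W''\le\beta d$ you use exactly the same variance identity and argument as the paper: drop the nonnegative variance term and bound the remaining expectation pointwise by $\beta d$. For the strong log-concavity bound $W''\ge\alpha d$, the paper simply cites the fact as well known (referring to \cite{saumard2014log}), whereas you supply a self-contained argument via Brascamp--Lieb on the conditional slice followed by a Schur-complement bound. Your route is one of the standard ways to establish precisely the cited preservation fact, and it has the appeal of being unified with the upper bound through the single identity for $W''$; the paper's route is shorter on the page but outsources the substantive content. The Caffarelli-contraction alternative you mention would also work.
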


\begin{proof}[Proof]
    Preservation of strong log-concavity under projection is well known, see e.g. \cite[Theorem 3.8]{saumard2014log}.
    For log-smoothness, supposing for convenience that the projection is onto the first coordinate axis, the claim is proved by the following standard computation.
    With $e^{-f(\bx)}$ the density of $\mu$ and $e^{-g(x)}$ the density of the projection of $\mu$ to the first coordinate axis, one may compute as in \cite[Proof of Proposition 7.1]{saumard2014log} that
    \[
    g''(x)
    =
    \bbE^{\mu}[\partial_{1,1} f(\bx)|x_1=x]
    -
    \Var^{\mu}[\partial_1 f(\bx)|x_1=x]
    \leq 
    \bbE^{\mu}[\partial_{1,1} f(\bx)|x_1=x]
    \leq 
    \beta d.
    \]
    This completes the proof.
\end{proof}

\begin{proof}[Proof of Proposition~\ref{prop:log-concave}]
We have $\cvar\geq \frac{1}{\beta d}$ directly from \eqref{eq:covariance-LB-log-concave}.

For $\epd$, let $\bar \bx$ be the mean under $\mu$ and note that from \eqref{eq:covariance-LB-log-concave}, we find 
\begin{align*}
\|\bar \bx-\bx^*\|
&=
\sup_{\|\bw\|=1}\langle \bar \bx-\bx^*,\bw\rangle
\\
&=
\sup_{\|\bw\|=1}
\bbE^{\bx\sim\mu}[\langle \bx-\bx^*,\bw\rangle]
\\
&\leq 
\sqrt{\sup_{\|\bw\|=1}
\bbE^{\bx\sim\mu}[\langle \bx-\bx^*,\bw\rangle^2]
}
\\
&\leq 
\sqrt{\langle \Cov(\mu),\bw^{\otimes 2}\rangle}
\\
&\leq 1/\sqrt{\alpha d}.
\end{align*}
Fixing a unit vector $\bv$ as in Assumption~\ref{assum:all}, we consider the projection $P$ onto the $1$-dimensional subspace spanned by $\bv$, and let $P(\mu)$ be the pushforward of $\mu$ under the projection (to which Fact~\ref{fact:projection-preserves-alpha-beta} applies).
Identifying $P(\bbR^d)$ isometrically with $\bbR$, let $\hat x$ be the mode of $P(\mu)$.
Then the same argument as above applies to $P(\mu)$ shows $\|P(\bar\bx)-\hat x\|\leq 1/\sqrt{\alpha d}$, and so
\[
\|\hat x\|
\leq 
\|\bx^*\|+\frac{2}{\sqrt{\alpha d}}
\leq 
\gamma + \frac{2}{\sqrt{\alpha d}}.
\]
(Note that if $\bar\bx=0$ then this shows $\|\hat x\|\leq 1/\sqrt{\alpha d}$, which following the arguments below leads to $\epd\geq \Omega(1)$ as mentioned below Proposition~\ref{prop:log-concave}.)

Write $f:\bbR\to\bbR_+$ for the density of $P(\mu)$, and $g(x)=\log f(x)$.
We have $f'(\hat x)=0$ and so $g'(\hat x)=0$ also.
By Fact~\ref{fact:projection-preserves-alpha-beta}, we have $g''(x)\in [-\beta d,-\alpha d]$ for all $x$, so for $x\geq \hat x$ we have:
\[
g'(x)=g'(x)-g'(\hat x)
=
\int_{\hat x}^x g''(y)dy
\in [-\beta d (x-\hat x),-\alpha d(x-\hat x)].
\]
Integrating again, we find
\[
g(x)-g(\hat x)
=
\int_{\hat x}^x g'(y)dy
\in 
[-\beta d(x-\hat x)^2/2,-\alpha d(x-\hat x)^2/2].
\]
Identical reasoning gives the same conclusion for $x\leq \hat x$.
Translating back to $f=e^g$, we conclude that for each $x\in\bbR$:
\[
e^{-\beta d|x-\hat x|^2/2}
\leq 
\frac{f(x)}{f(\hat x)}
\leq 
e^{-\alpha d|x-\hat x|^2/2}.
\]
It follows that for $\bx=\bl\sim\mu$ and $x\sim P(\mu)$:
\begin{align*}
\Pr[\langle \bv,\bx\rangle\geq \cpd]
&\geq 
\Pr[x\geq \gamma+\frac{2}{\sqrt{\alpha d}}+\cpd].
\end{align*}
Letting $J=\gamma+\frac{2}{\sqrt{\alpha d}}+\cpd$, the latter probability is at least
\[
\frac{\int_J^{\infty} e^{-\beta d z^2/2}dz}
{\int_{\bbR} e^{-\alpha d z^2/2}dz}
=
\sqrt{\alpha/\beta}
\cdot 
\Phi^C(J\sqrt{\beta d})
\geq 
\frac{J e^{-J^2 \beta d/2}\sqrt{\alpha d}}{(1+J^2\beta d)\sqrt{2\pi}}
.
\]
The last inequality follows from the classical bound $\Phi^C(\kappa)\geq \frac{\varphi(\kappa)\kappa}{1+\kappa^2}$
where $\varphi$ is the standard Gaussian density \cite{gordon1941values}.
This confirms the value of $\epd$.

For $\csg$ we consider a similar projection, and note that by Fact~\ref{fact:projection-preserves-alpha-beta} and the Bakry-Emery theory for strongly log-concave measures (see e.g. \cite[Lemma 2.3.3]{anderson2010introduction}, we have 
\[
\bbE[e^{\lambda \langle \bv,\bx-\bar\bx\rangle}]\leq 
e^{\lambda^2 \alpha d/2}
,\quad\forall \lambda\in\bbR.
\]
Thus with $J_0=\gamma+\frac{1}{\sqrt{\alpha d}}\geq \|\langle \bar\bx,\bv\rangle\|$, we have (using $\lambda J_0\leq \frac{1+\lambda^2 J_0^2}{2}$):
\[
\bbE[e^{\lambda \langle \bv,\bx\rangle}]
\leq 
e^{(\lambda^2 \alpha d/2)+\lambda J_0}
\leq 
e^{0.5}
\cdot e^{\lambda^2 (J_0^2+\alpha d)/2}
\leq 
2e^{\lambda^2 (J_0^2+\alpha d)/2}.
\]
It follows by the usual Markov inequality arguments that
\[
\Pr[|\langle \bv,\bx\rangle|\geq t]
\leq 
4e^{-\frac{t^2}{2(J_0^2+\alpha d)}}.
\]
Since probabilities are at most $1$ and $a\leq \sqrt{a}$ for $a\leq 1$ we find 
\[
\Pr[|\langle \bv,\bx\rangle|\geq t]
\leq 
2e^{-\frac{t^2}{4(J_0^2+\alpha d)}}
\]
which completes the verification of $\csg$ since $(J_0^2+\alpha d)^{1/2}\leq J_0+\sqrt{\alpha d}$
.

For the counterexample, we may take $(\alpha,\beta)=(1,2)$ and let $\nu$ be the distribution on $\bbR$ with density proportional to $e^{-dx^2\cdot (1+1_{x\geq 0})/2}$.
Then let $\mu=\nu^{\otimes d}$, so that $x\sim\mu$ has IID coordinates with law $\nu$.
Then the mode $\bx^*$ is indeed zero but the mean of $\nu$ is non-zero, so taking $\bv=(1,1,\dots,1)/\sqrt{d}$, a Chernoff estimate shows $\Pr^{\bx\sim\mu} [\langle \bx,\bv\rangle\geq 0]\leq e^{-\Omega(d)}$.
\end{proof}

\end{document}